\newtheorem{Assumption}{Assumption}
\newtheorem{Algorithm}{Algorithm}
\newtheorem{Definition}{Definition}
\newtheorem{Lemma}{Lemma}
\newtheorem{Problem}{Problem}
\newtheorem{Remark}{Remark}
\newtheorem{Theorem}{Theorem}
\newtheorem{Corollary}{Corollary}
\newtheorem{Claim}{Claim}
\newcommand{\tcb}{\textcolor{black}}
\begin{document}
\title{Low Complexity Delay-Constrained Beamforming for  Multi-User MIMO  Systems   with Imperfect CSIT}

\author
{\IEEEauthorblockN{Vincent K. N. Lau, \emph{FIEEE}, Fan Zhang, \emph{StMIEEE}, Ying Cui, \emph{MIEEE}} \thanks{Vincent K. N. Lau and Fan Zhang are with Department of ECE, Hong Kong University of Science and Technology, Hong Kong. Ying Cui is with Department of ECE, Northeastern University, USA.}
}
\maketitle

\begin{abstract} 

In this paper, we consider the delay-constrained beamforming control for  downlink multi-user MIMO (MU-MIMO) systems with imperfect channel state information at the transmitter (CSIT). The delay-constrained control problem is formulated as an infinite horizon average cost partially observed Markov decision process.  To deal with the  curse of dimensionality, we introduce a virtual continuous time system and derive a closed-form approximate  value function  using perturbation analysis w.r.t. the CSIT errors. To deal with the challenge of  the conditional packet error rate (PER), we build a tractable closed-form approximation using a Bernstein-type  inequality. Based on the closed-form approximations of the relative value function and the conditional PER, we propose a conservative formulation of the original beamforming control problem. The conservative problem is non-convex and we transform it into a convex problem using  the semidefinite relaxation (SDR) technique. We then propose an alternating iterative algorithm to solve the SDR problem. Finally, the proposed scheme is compared  with various baselines through simulations and it is shown that significant  performance gain can be achieved.

\end{abstract}

\section{introduction}
There have has intense research interest in using multiple antenna technology to boost the capacity of wireless systems \cite{MIMOBC2}, \cite{mmse}. In \cite{mmse}, \cite{precoder1},  the authors show that substantial capacity gain can be achieved in  downlink multi-user MIMO (MU-MIMO) systems using simple zero-forcing (ZF) or minimum mean square (MMSE) precoders when the channel state  information (CSI) at the transmitter  (CSIT) is perfect.  However, the CSIT measured at the BS cannot be perfect due to either the CSIT estimation noise or the outdatedness of the CSIT resulting from duplexing delay.   In \cite{sumgood2}--\cite{Robustprob1}, the authors consider  robust beamforming design to maximize the sum goodput  \cite{sumgood2}, minimize the MMSE \cite{mmse2}  or  the transmit power \cite{Robustprob1} of the downlink MU-MIMO system subject to either the worst-case  SINR  constraints \cite{mmse2}, \cite{Robustprob1} or the probabilistic SINR constraints \cite{sumgood2}. To simplify  the associated optimization problem, semidefinite relaxation (SDR) technique  and majorization theory  are applied in \cite{luo}--\cite{major2}.  However, all these works focus on physical layer performance and  ignore the bursty data   arrivals as well as the delay requirement of information flows.  The resulting control policy is adaptive to the CSI/CSIT only and cannot guarantee good delay performance for delay-sensitive applications. In general, physical layer oriented designs cannot guarantee good delay performance  \cite{surveydelay}.   The delay-aware control policy  should be adaptive to both the CSI and the queue state information (QSI). This is because  the CSI provides information regarding the channel opportunity  while the QSI indicates the urgency of the data flows.

The design framework taking into account the queuing delay performance of information flows is highly non-trivial as it involves both queuing theory (to model the queuing dynamics) and information theory (to model the physical layer dynamics). The control policy will affect the underlying probability measure (or stochastic evolution) of the system state (CSI, QSI) and the  state process evolves stochastically as a \emph{controlled Markov chain} for a given  policy. A systematic approach  to solve the stochastic optimization problem  is through  Markov Decision Process (MDP) \cite{Cao}, \cite{DP_Bertsekas}. In general,  the optimal control policy can be obtained by solving the well-known \emph{Bellman equation} using numerical methods such as brute-force value iteration and policy iteration \cite{DP_Bertsekas}. However, this usually cannot lead to any desirable solutions because solving the Bellman equation involves solving an exponentially large system of non-linear equations, which induces huge complexity  (i.e., the curse of dimensionality). \tcb{There are some existing works that use \emph{stochastic approximation approach} to  deal with the complexity  issue \cite{downmimo1}, \cite{downmimo2}. Specifically, the value function is approximated by the sum of the per-flow functions. The per-flow functions are then estimated  using distributed online learning  algorithms, which have linear complexity. However, the stochastic learning approach can only give numerical solution to the Bellman equation and may suffer from slow convergence and lack of insight. }

In this paper, we consider a downlink MU-MIMO system with imperfect CSIT, where  a multi-antenna BS communicates to $K$ single-antenna mobiles. We focus on minimizing the average power of the BS  subject to the average delay constraints of the $K$ bursty data flows.   There are several first order technical challenges associated with the stochastic optimization problem due to the imperfect CSIT and the average delay constraints.
\begin{itemize}
	\item	\textbf{Challenges due to the Mutual Coupling  of the $K$ Queues:}
	Multi-user interference in the downlink MU-MIMO system cannot be completely eliminated under the  imperfect CSIT. As the service rate of the $k$-th queue depends on the transmit power of the other mobiles via  interference, the queue dynamics of the $K$ mobiles in the system are mutually coupled together.  Therefore, the associated stochastic optimization problem is a $K$-dimensional MDP \cite{surveydelay}. There will be the curse of dimensionality issue while solving the associated Bellman equation and  standard MDP solutions  have exponential complexity in $K$. 
		
	\item \textbf{Challenge due to the Packet Error Probability:}  
	The imperfect CSIT leads to systematic packet errors due to  channel outage\footnote{Under imperfect CSIT, systematic packet errors occurs
whenever the scheduled data rate exceeds the instantaneous mutual information (namely, channel outage) despite the use of powerful error correction coding.}. Therefore,  it is important to consider the packet error rate (PER) in the optimization. However, this involves obtaining the conditional probability distribution function (PDF) of the mutual information (conditioned on the imperfect CSIT), which is highly non-trivial. The conditional PER  usually has no closed-form expression and is non-convex w.r.t. the optimization variables. In \cite{chancesss}, \cite{Chancecon1}, the authors use Bernstein approximation to obtain a conservative convex approximation of the \emph{affine chance constraints}\footnote{Affine chance constraints involve linear forms of random variables \cite{chancesss}.}. These works cannot be used in our problem because the packet error probability involves a quadratic form of random variables. Furthermore, a fixed target PER is assumed  in the existing works \cite{sumgood2}, \cite{Chancecon1}, which is  suboptimal for delay considerations. 	

	\item \textbf{Challenge due to the Average Delay Constraints:} 
	The presence of the average delay constraints fundamentally changes our problem to a stochastic optimization.  Furthermore, due to the imperfect CSIT, the associated optimization problem is  a \emph{partially observed MDP} (POMDP) \cite{pomdp1}, which is  more difficult than regular MDP. A key obstacle of solving the MDP/POMDP is to obtain the relative value function in the associated Bellman equation. Yet, standard solutions can only give numerical solutions to the relative value function \cite{DP_Bertsekas}, which suffer from the issues of slow convergence and lack of insights. It is desirable to obtain a closed-form approximation of the relative value function  in order to have low complexity solutions for our problem. 
\end{itemize}

In this paper, we model the delay-constrained beamforming control  problem as an infinite horizon average cost POMDP. \tcb{By exploiting the special structure in our problem, we derive an equivalent Bellman equation to solve the POMDP.  We then introduce a \emph{virtual continuous time system} (VCTS) and  show that the solution to the associated total cost problem is asymptotically optimal to the POMDP problem when the slot duration is much less than the timescale of the queue evolution.} To deal with the curse of dimensionality induced by the mutual queue coupling, we  leverage  the fact that the CSIT error in practical MU-MIMO systems is usually  small\footnote{A MU-MIMO system with large CSIT errors is not  meaningful as the associated inter-user interference will severely limit the performance of spatial multiplexing.}. As a result, we  adopt perturbation analysis  w.r.t. the CSIT errors and derive a closed-form approximation to the relative value function and analyze the approximation error.  To deal with the challenge of the packet error probability due to the imperfect CSIT, we obtain a tractable closed-form approximation of the conditional PER using a Bernstein-type inequality for quadratic forms of complex Gaussian random variables \cite{bernsteininequ}. Unlike most existing works where the target PER is fixed \cite{Chancecon1}, \cite{fixedper},  the conditional PER of the proposed solution can be dynamically adjusted  according to the observed system state  (CSIT, QSI).   Finally, based on the closed-form approximations of the relative value function and the conditional PER, we derive a low complexity solution using the semidefinite relaxation (SDR) technique and show that the proposed solution achieves significant performance gain over various baseline schemes.

\section{system model}
 In this section, we elaborate on the physical layer model and the bursty source model for the downlink MU-MIMO  system.

\begin{figure}[!htbp]
  \centering
  \includegraphics[width=3.5in]{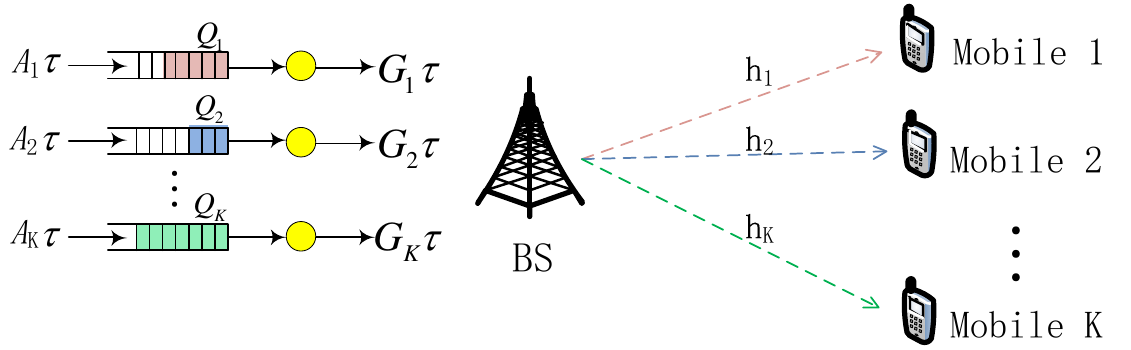}
  \caption{System model of a downlink MU-MIMO  system with $K$ mobiles.}
  \label{mimobc}
\end{figure}

\subsection{MIMO Channel and Imperfect CSIT Models}
We consider a downlink MU-MIMO  system\footnote{When $N_t < K$, there will be a user selection control to select at most $N_t$ active users from the $K$ users and the proposed solution framework could be easily extended to accommodate this user selection control as well.} where a multi-antenna base station (BS) communicates to $K$ single-antenna mobiles  as illustrated in Fig.~\ref{mimobc}. Specifically, the BS is equipped with $N_t \geq K$ antennas.  Let  $\mathbf{h}_k \in \mathbb{C}^{1 \times N_t  }$ be the complex fading coefficient (CSI) from the BS to the $k$-th mobile. Let  $\mathbf{H}=\left\{\mathbf{h}_k:  \forall k \right\}$ denote the global CSI.  In this paper,  the time dimension is partitioned into decision slots indexed by $t$ with  slot duration $\tau$. We have the following assumption on the CSI $\mathbf{H}$.

\begin{Assumption}	[CSI Model]	\label{CSIassum}	
$\mathbf{h}_k\left(t\right)$ remains constant within each decision slot  and is  i.i.d. over  slots for all $k$. Specifically, each element of $\mathbf{h}_k\left(t\right)$ follows a complex Gaussian distribution with zero mean and unit variance\footnote{The assumption on the   fading coefficient could be justified in many applications. For example,  in frequency hopping  systems, the channel fading remains constant within one slot (hop) and is i.i.d. over slots (hops) when the frequency is hopped from one channel to another.}. Furthermore, $\mathbf{h}_k\left(t\right)$ is  independent w.r.t. $k$.~\hfill\IEEEQED
\end{Assumption}

%\begin{figure}
%  \centering
%  \includegraphics[width=5in]{}
%  \caption{Illustration of the decision slots and the associated arrival and departure events for  the downlink MU-MIMO  system.  $\mathbf{A}\left( t\right)$ is the random packet arrival at the end of the $t$-th decision slot. $\mathbf{G}\left(t\right)\tau=\left(G_1\left(t\right) \tau,\dots,G_k\left(t\right)\tau \right)$ is the controlled global departure  at the end of the $t$-th slot,  where $G_k\left(t\right)=G_k\left(\mathbf{H}\left(t\right),\mathbf{w}\left( t\right) \right)$ is the  goodput in (\ref{goodput}).}
%  \label{blockfading}
%\end{figure}

We consider a TDD system. We assume perfect CSI at the mobiles and imperfect CSI at the BS (imperfect CSIT). The imperfect CSIT is due to the channel estimation noise on the uplink pilots or the outdatedness resulting from TDD duplexing delay. Let $\hat{\mathbf{h}}_k \in \mathbb{C}^{1 \times N_t}$ be the imperfect estimate of $\mathbf{h}_k$ at the BS.  Let $\hat{ \mathbf{H}} = \{\hat{\mathbf{h}}_k:  \forall k \}$ denote the global CSIT. We assume MMSE prediction is used at the BS to obtain the CSIT. Therefore,  we  have the following assumption on the imperfect CSIT  $\hat{ \mathbf{H}}$.

\begin{Assumption}	[Imperfect CSIT Model]\label{impCSITm}
The  CSIT $\hat{\mathbf{h}}_k$  is  given by
\begin{align}
	\hat{\mathbf{h}}_k = \mathbf{h}_k + \Delta_k		\label{CSITqua}
\end{align}
where $\Delta_k = \sqrt{\epsilon_k} \mathbf{v}_k$ is the CSIT error and $\mathbf{v}_k$  is a complex Gaussian random vector with zero mean and covariance matrix $\textbf{I}_{N_t}$, i.e., $\mathbf{v}_k \sim \mathcal{CN}\left(0, \textbf{I}_{N_t} \right)$.  $\epsilon_k \geq 0$ is the CSIT error variance, which measures the CSIT quality. Furthermore, $\mathbb{E}\big[\Delta_k  \hat{\mathbf{h}}_k^{\dagger} \big]=0$ by  the orthogonality principle of MMSE \cite{mmse}, where $\left( \cdot \right)^{\dagger}$ denotes the conjugate transpose.~\hfill\IEEEQED
\end{Assumption}
\begin{Remark}	[\tcb{Physical Meaning of CSIT Error Variance $\epsilon_k$}]
	\tcb{Since MMSE estimation is used  to obtain the CSIT based on the uplink pilots, we have $\Delta_k=\frac{\sqrt{E_p}}{1+E_p}z_k^{pilot}-\frac{1}{1+E_p}\mathbf{h}_k$, where $E_p$ is the uplink pilot SNR and $z_k^{pilot}\sim\mathcal{CN}(0, \mathbf{I}_{N_t})$ is the AWGN noise in the received samples of the uplink pilots \cite{mmse}, \cite{noisevar}. Therefore, $\epsilon_k=\frac{1}{1+E_p} \in (0,1]$. In particular, when $\epsilon_k=0$ ($E_p \rightarrow \infty$),  we have $\hat{\mathbf{h}}_k=\mathbf{h}_k$. This corresponds to   the perfect CSIT case. When $\epsilon_k=1$ ($E_p \rightarrow 0$), we have $\mathbb{E}\big[\mathbf{h}_k  \hat{\mathbf{h}}_k^{\dagger} \big]=0$. This corresponds to the no CSIT case.~\hfill\IEEEQED}
\end{Remark}
According to the imperfect CSIT model in Assumption \ref{impCSITm},  the CSIT error kernel is given by the following conditional PDF:
\begin{align}	\label{csiterrknl}
	\Pr \big[ \hat{\mathbf{h}}_k | \mathbf{h}_k \big] = \frac{1}{\pi \epsilon_k} \exp \left(-\frac{| \hat{\mathbf{h}}_k -\mathbf{h}_k |^2}{\epsilon_k}\right)
\end{align}

Let $s_k$  denote the information symbol for the $k$-th mobile. The transmitted signal is given by $\sum_{k=1}^K \mathbf{w}_k s_k$ where $\mathbf{w}_k\in \mathbb{C}^{ N_t \times 1}$ is the transmit beamforming vector for $s_k$. Therefore, the  received signal   at the $k$-th mobile  is given by\footnote{Note that $\|\mathbf{w}_k\|^2$ is the power allocated for information symbol $s_k$.}
\begin{align}	\label{recvsig}
	y_k = \underbrace{\mathbf{h}_k \mathbf{w}_k s_k }_{\text{desired signal}}+ \underbrace{\sum_{j \neq k} \mathbf{h}_k \mathbf{w}_j  s_j}_{\text{interference signal }} + \underbrace{z_k}_{\text{noise}}
\end{align}
where $z_k \sim \mathcal{CN}\left(0,1\right)$ is the  i.i.d. complex Gaussian channel noise.

\subsection{Mutual Information and System Goodput}
For  given CSI realization $ \mathbf{H}$ and collection of the beamforming control actions  of all the $K$ flows $\mathbf{w} \triangleq \big\{\mathbf{w}_k: \forall k \big\}$, the   mutual information (bit/s/Hz) between the BS and the $k$-th mobile  is given by
\begin{align}
	C_k\left(\mathbf{H},\mathbf{w} \right)= \log\left(1+ \frac{\big|\mathbf{h}_k  \mathbf{w}_k\big|^2 }{1+\sum_{j \neq k} \big| \mathbf{h}_k \mathbf{w}_j\big|^2}\right)		\label{mutualinfo}
\end{align}

Let $R_k$ be the transmit data rate for the $k$-th mobile at the BS. Due to imperfect CSIT, there is uncertainty of the mutual information $C_k\left(\mathbf{H}, \mathbf{w} \right)$ due to the imperfect CSIT $\hat{\mathbf{H}}$. As a result, the  goodput \cite{sumgood2}, i.e.,  the bit/s/Hz successfully delivered to the mobile when the transmit data rate is $R_k$ is given by
\begin{align}	\label{goodput}
	& G_k\left(\mathbf{H},\mathbf{w} \right) = R_k\mathbf{1}\left(R_k\leq C_k\left(\mathbf{H},\mathbf{w} \right)\right)
\end{align}
where $\mathbf{1}(\cdot)$ is the indicator function.

\subsection{Bursty Source Model and Queue Dynamics}	\label{queuedym}
As illustrated in Fig.~\ref{mimobc}, the BS maintains $K$ data queues for the bursty traffic flows towards the $K$ mobiles. Let \tcb{$\mathbf{A}\left( t\right)=\left(A_1\left( t\right)\tau,\dots,  A_K\left( t\right)\tau \right)$} be the {random data arrivals (number of bits)}  at the end of the $t$-th decision slot  for the $K$ mobiles. We have the following assumption on  {$\mathbf{A}\left( t\right)$}.
\begin{Assumption} [Bursty Source Model]	\label{assumeA}
	%\tcb{Assume that the data arrival for the $k$-th  mobile at the $t$-th decision slot   $A_k\left(t \right)\tau$ equals integer multiples  of the transmit date $R_k\tau$, i,e., $A_k\left(t \right)\tau=n R_k \tau$ for some $n \in \mathbb{N}_{\geq 0}$.} 
	Assume that $A_k\left(t\right)$ is i.i.d. over decision slots according to a general distribution $\Pr[A_k]$. The moment generating function of $A_k$ exists with  mean $\mathbb{E}[A_k]=\lambda_k$.  $ A_k\left(t\right)$ is independent w.r.t. $k$. \tcb{Furthermore, each arrival packet for the $k$-th queue contains $R_k\tau$ bits}\footnote{\tcb{In practical systems such as UMTS or LTE, there is a segmentation process in the MAC layer such that the data packets from the MAC layer  match the payload size of the PHY layer packets.}} and  $(\lambda_1,\dots,\lambda_K)$ lies within the stability region \cite{capacityregion} of the  system.~\hfill\IEEEQED
\end{Assumption}

Let $Q_k\left( t\right) \in \mathcal{Q}$  denote the QSI (number of bits) at the $k$-th queue of the BS at the beginning of the $t$-th slot, where {$\mathcal{Q}=[0, \infty)$} is the   QSI state space. Let $\mathbf{Q}\left( t\right) = \left(Q_1\left(t \right),\dots,Q_K \left( t\right)  \right) \in \boldsymbol{\mathcal{Q}} \triangleq \mathcal{Q}^K$ denote the global QSI.  Furthermore,   we assume there is ACK/NAK feedback\footnote{The $R_k \tau$ information bits will be removed from the $k$-th queue at the BS only when the bits are successfully received by the MS (via ACK feedback). Otherwise, the information bits will be maintained at the  queue and wait for subsequent transmission opportunity.}  from the mobiles to the BS. Hence, the dynamics of the $k$-th queue at the BS  is given by\footnote{We assume that the controller at the BS is  causal so that new arrivals are observed after the  control actions are performed at each decision slot.}
\begin{align}	\label{queue_sys}
	Q_k\left( t+1\right) = \left[Q_k\left( t\right)- G_k\left(\mathbf{H}\left(t\right),\mathbf{w}\left(t\right) \right) \tau\right]^+   + \tcb{A_k\left(t \right)\tau}
\end{align}
where $[x]^+=\max\{0,x\}$.
\begin{Remark} [Coupling Property of  Queue Dynamics] 	\label{coup_rem}
	 The $K$ queue dynamics in the downlink MU-MIMO system are coupled together due to the  interference term in (\ref{recvsig}). Specifically,  the departure of the $k$-th queue depends on the beamforming control actions of all the other data flows. ~\hfill\IEEEQED
\end{Remark}

\section{Delay-constrained  Control Problem Formulation}
In this section, we  formally define  the  beamforming   control policy  and formulate the delay-constrained control problem for the downlink MU-MIMO  system.
\subsection{Beamforming   Control Policy}
At the beginning of  each decision slot, the BS determines the  beamforming  control actions based on the global observed system state $\big( \hat{\mathbf{H}}, \mathbf{Q}\big)$ according to the following  stationary control policy.
\begin{Definition} [Stationary  Beamforming  Control Policy]	\label{deff1}
	A stationary  beamforming   control policy for the $k$-th data flow $\Omega_k$ is a mapping from the global observed system state $\big( \hat{\mathbf{H}}, \mathbf{Q}\big)$ to the  beamforming   control actions of the $k$-th data flow. Specifically, we have  $\Omega_k\big( \hat{\mathbf{H}}, \mathbf{Q}\big)= \mathbf{w}_k \in \mathbb{C}^{N_t \times 1}$. Furthermore, let $\Omega= \{ \Omega_k:\forall  k \}$ denote the aggregation of the control policies for all the $K$ data flows.~\hfill\IEEEQED
\end{Definition}

For notation convenience, we  denote $\boldsymbol{\chi}=\big(\mathbf{H}, \hat{\mathbf{H}}, \mathbf{Q}\big)$ as the global system state.  Given a  control policy $\Omega$, the induced random process $\left\{\boldsymbol{\chi}\left(t \right)\right\}$ is a controlled Markov chain with the following transition probability\footnote{\tcb{The first equality of  (\ref{trankernel})  is due to the i.i.d. assumption of the CSI model and the assumption of the imperfect CSIT model. The second equality  is due to the independence between $\mathbf{Q}\left(t+1\right)$ and $\hat{\mathbf{H}}\left(t\right)$ conditioned on $\mathbf{H}(t)$, $\mathbf{Q}(t)$ and  $\Omega\big(\hat{\mathbf{H}}(t), \mathbf{Q}(t)  \big)$.}}:
\begin{align}	\label{trankernel}
	& \Pr\big[ \boldsymbol{\chi}\left(t+1 \right) \big| \boldsymbol{\chi}\left(t \right), \Omega\big( \hat{\mathbf{H}}(t), \mathbf{Q}(t)\big)\big] \notag	\\
	\tcb{=}& \tcb{\Pr\big[ \hat{\mathbf{H}}\left(t+1  \right), \mathbf{H}\left(t+1\right)\big]}	\notag \\
	 &\tcb{ \cdot \Pr\big[ \mathbf{Q}\left(t+1\right) \big| \boldsymbol{\chi}\left(t \right), \Omega\big( \hat{\mathbf{H}}(t), \mathbf{Q}(t) \big)\big]}	\notag \\
	=& \Pr \big[\mathbf{H}\left(t+1 \right) \big]   \Pr \big[\hat{\mathbf{H}}\left(t+1  \right) \big| \mathbf{H}\left(t+1\right) \big]\notag \\
	& \cdot \Pr \big[ \mathbf{Q}\left(t+1\right) \big|  \mathbf{H}(t), \mathbf{Q}(t), \Omega\big(\hat{\mathbf{H}}(t), \mathbf{Q}(t)  \big) \big]
\end{align}
where the queue transition probability is given by
\begin{align}	
&\Pr \big[ \mathbf{Q}\left(t+1\right) \big|  \mathbf{H}(t), \mathbf{Q}(t), \Omega\big(\hat{\mathbf{H}}(t), \mathbf{Q}(t)  \big) \big] \notag \\
=&
 \left\{
	\begin{aligned}	 
		&\prod_{k} \Pr\big[A_k\left( t\right)\big]   \ \ \tcb{\text{if } Q_k\left( t+1\right) \text{is given by (\ref{queue_sys})}, \ \forall  k} \\
		&\ 0 \hspace{2.4cm} \text{otherwise}
	   \end{aligned}
   \right.
  \end{align}
Note that the transition kernel in (\ref{trankernel}) is time-homogeneous due to the i.i.d. property of the arrival $A_k\left(t\right)$   in Assumption \ref{assumeA}.   Furthermore, {we have the following definition on the admissible  control policy}. 
\begin{Definition}	[{Admissible Control Policy}]	\label{admissibledis}
	{A policy $\Omega$ is  admissible if the following requirements are satisfied:}
	\begin{itemize}
		\item $\Omega$ is a unichain policy, i.e., the controlled Markov chain $\left\{\boldsymbol{\chi}\left(t \right)\right\}$ under $\Omega$ has a single recurrent class (and possibly some transient states) \cite{DP_Bertsekas}.
		\item {The queueing system under $\Omega$ is  stable in the sense that $\lim_{t \rightarrow \infty} \mathbb{E}^{\Omega} \big[\sum_{k=1}^K Q_k^2(t) \log Q_k(t) \big] < \infty$, where $\mathbb{E}^{\Omega} $ means taking expectation w.r.t. the probability measure induced by the control policy $\Omega$.}~\hfill\IEEEQED
	\end{itemize}
\end{Definition}

\subsection{Problem Formulation}
As a result, under {an admissible}  control policy $\Omega$, the average delay cost of the $k$-th data flow starting from a given initial  state $\boldsymbol{\chi}\left(0\right)$ is given by
\begin{align}	\label{delay_cost}
	\overline{D}_k^\Omega\left(\boldsymbol{\chi}\left(0 \right)\right)  = \limsup_{T \rightarrow \infty} \frac{1}{T} \sum_{t=0}^{T-1} \mathbb{E}^{\Omega} \left[\frac{Q_k\left(t\right)}{\lambda_k} \right], \quad \forall k	
\end{align}	
Similarly, under {an admissible} control policy $\Omega$,  the average power cost of the BS  starting from a given initial  state $\boldsymbol{\chi}\left(0\right)$ is given by
\begin{align}		\label{power_cost}
	\overline{P}^{\Omega}\left(\boldsymbol{\chi}\left(0 \right)  \right) = \limsup_{T \rightarrow \infty} \frac{1}{T} \sum_{t=0}^{T-1} \mathbb{E}^{\Omega}  \bigg[ \sum_{k=1}^K \left\|\mathbf{w}_k \left(t \right)\right\|^2  \bigg]
\end{align}
In general, we are interested in  minimizing either the average delay  or  the average power and both cannot be minimized at the same time. As a result, we consider the following formulation which can achieve a Pareto optimal tradeoff between the average delay costs and average power cost. 
\begin{Problem}	[Delay-Constrained Beamforming Control]  \label{IHAC_MDP}
	For any initial  system state   $\boldsymbol{\chi}(0)$, the delay-constrained beamforming  control problem is formulated as 
\begin{eqnarray} \
	\underset{\Omega}{\min} &&L_{\boldsymbol{\gamma}}^{\Omega}\left( \boldsymbol{\chi}\left(0 \right)\right) 	\notag \\
	&=&\sum_{k=1}^K \gamma_k \overline{D}_k^\Omega\left(\left(\boldsymbol{\chi}\left(0 \right)\right)\right) +  \overline{P}^{\Omega}\left(\boldsymbol{\chi}\left(0 \right)\right) \notag \\
	&=& \limsup_{T \rightarrow \infty} \frac{1}{T} \sum_{t=0}^{T-1} \mathbb{E}^{\Omega}  \left[c\big(\mathbf{Q}\left(t\right), \Omega\big(\hat{\mathbf{H}}(t), \mathbf{Q}(t)  \big)\big)    \right]	\notag	
\end{eqnarray}
where $c\left(\mathbf{Q}, \mathbf{w}\right)=\sum_{k=1}^K \big(\|\mathbf{w}_k \|^2+\gamma_k \frac{Q_k}{\lambda_k} \big)$ and  $\boldsymbol{\gamma}=\left\{\gamma_k >0: \forall k \right\}$ is the delay price\footnote{\tcb{The delay price $\boldsymbol{\gamma}$ indicates the relative importance of the delay requirement over the average power. Larger values of $\boldsymbol{\gamma}$ correspond to greater importance in delay.} $\boldsymbol{\gamma}$ can also be interpreted as the corresponding Lagrange Multipliers associated with the  delay constraints of the $K$ information flows \tcb{\cite{equiformu}}.} of the $K$ information flows.~\hfill\IEEEQED
\end{Problem}

%\begin{Remark}	[\tcb{Justification for the Delay-Constrained  Formulation}]
%\tcb{In wireless communication systems, power resource is  limited and effective power control is essential for achieving  good performance.  Most previous works  focus on minimizing transmit power subject to physical layer constraints, such as SINR constraints \cite{justprobf1} and transmission rate constraints \cite{justprobf2}. They ignore the delay requirement of information flows. Nowadays, a lot of delay-sensitive applications such as real-time multimedia streaming, high throughput file transfers and VoIP require strict delay constraints on data delivery. Therefore, our delay-constrained power minimizations problem is  meaningful in practical systems.~\hfill\IEEEQED}
%\end{Remark}

For a given positive delay price $\boldsymbol{\gamma}$, the solution to Problem \ref{IHAC_MDP} corresponds to a point on the Pareto optimal tradeoff curve between  the average delay costs $\overline{D}_1^{\Omega}, \cdots, \overline{D}_K^{\Omega}$ and the average power cost $\overline{P}^{\Omega}$.  Problem \ref{IHAC_MDP} is an infinite horizon average cost  POMDP \cite{surveydelay},  because the controller (i.e., the BS) only has partial observation of the system state (imperfect CSIT and QSI). Note that POMDP is well-known to  be a very difficult problem \cite{surveydelay}. In the next  subsection, by exploiting the special structure in our problem, we derive an equivalent Bellman equation to simplify the   POMDP problem.

\subsection{General Solution to the Optimal Control Problem}
We first define the partitioned actions below.
\begin{Definition}	[Partitioned Actions]	\label{paract}
	Given a control policy $\Omega$, we define $\Omega\left(\mathbf{Q} \right)=\big\{ \Omega_k\left(\mathbf{Q} \right): \forall k\big\}$, where $\Omega_k\left(\mathbf{Q} \right)=\big\{\mathbf{w}_k= \Omega_k\big( \hat{\mathbf{H}}, \mathbf{Q}\big): \forall \hat{\mathbf{H}}\big\}$ is the  collection of actions of the $k$-th flow for all possible CSIT $\hat{\mathbf{H}}$ conditioned on a given QSI $\mathbf{Q}$. The complete control policy  is therefore equal to the union of all partitioned actions, i.e., $\Omega=\bigcup_{\mathbf{Q}}\Omega\left(\mathbf{Q} \right)$.~\hfill\IEEEQED
\end{Definition}

While the POMDP in Problem \ref{IHAC_MDP} is  difficult in general, we utilize Definition \ref{paract} and the i.i.d. assumption of the CSI to derive an \emph{equivalent Bellman equation} as summarized below. 
\begin{Theorem} [\tcb{Sufficient Conditions for Optimality}]	\label{LemBel}
	For any given  $\boldsymbol{\gamma}$, assume there exists a \tcb{($\theta^\ast, \{ V^\ast\left(\mathbf{Q} \right) \}$)} that solves the following \emph{equivalent Bellman equation}:
	{\begin{align}	
		 &\theta^\ast {\tau} + V^\ast \left(\mathbf{Q} \right) \hspace{3cm} \forall \mathbf{Q} \in \boldsymbol{\mathcal{Q}} \label{OrgBel} \\
		 = &\min_{\Omega\left(\mathbf{Q} \right)} \bigg[ \widetilde{c}\left(\mathbf{Q}, \Omega\left(\mathbf{Q} \right)\right) {\tau} +  \sum_{\mathbf{Q}'}\Pr\left[ \mathbf{Q}'| \mathbf{Q},  \Omega\left(\mathbf{Q} \right)\right]V^\ast \left(\mathbf{Q} '\right)    \bigg]\notag
	\end{align}}where $\widetilde{c}\left( \mathbf{Q}, \Omega\left(\mathbf{Q} \right)\right) =\mathbb{E}\big[ c\big(\mathbf{Q}, \Omega\big(\hat{\mathbf{H}},\mathbf{Q}\big)\big)\big| \mathbf{Q}  \big]$ is the per-stage cost function, and $\Pr\left[ \mathbf{Q}'| \mathbf{Q},  \Omega\left(\mathbf{Q} \right)\right] = \\ \mathbb{E}\big[\Pr \big[ \mathbf{Q}'\big| \mathbf{H}, \mathbf{Q}, \Omega\big(\hat{\mathbf{H}}, \mathbf{Q}  \big) \big]  \big| \mathbf{Q} \big]$ is the transition kernel. \tcb{Futhermore,  for all  admissible control policy $\Omega$ and initial queue state $\mathbf{Q}\left(0 \right)$, $V^\ast$ satisfies the following \emph{transversality condition}:
	\begin{align}	\label{transodts}
	\lim_{T \rightarrow \infty} \frac{1}{T}\mathbb{E}^{\Omega}\left[ V^\ast\left(\mathbf{Q}\left(T \right) \right) |\mathbf{Q}\left(0 \right)\right]=0
\end{align}}Then, ${\theta^\ast}=\underset{\Omega}{\min} L_{\boldsymbol{\gamma}}^{\Omega}\left( \boldsymbol{\chi}\left(0 \right)\right) $ is the optimal average cost for any initial state  $\boldsymbol{\chi}\left(0 \right) $ and $V^\ast\left(\mathbf{Q}\right)$ is called  \emph{relative value function}. If $\Omega^{\ast}\left( \mathbf{Q} \right)$ attains the minimum of the R.H.S. of (\ref{OrgBel}) for all $\mathbf{Q} \in \boldsymbol{\mathcal{Q}} $, then $\Omega^{\ast}$ is the optimal control policy  for Problem \ref{IHAC_MDP}.~\hfill\IEEEQED	
\end{Theorem}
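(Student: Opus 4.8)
The plan is to treat Theorem~\ref{LemBel} as a verification (sufficiency) result and prove it in two stages: (i) reduce the POMDP in Problem~\ref{IHAC_MDP} to an average-cost MDP whose state is the queue vector $\mathbf{Q}$ alone, so that the equivalent Bellman equation~(\ref{OrgBel}) is the genuine optimality equation of that reduced problem; and (ii) run the classical telescoping/martingale argument on~(\ref{OrgBel}).

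For stage (i) I would use Assumption~\ref{CSIassum}: since $(\mathbf{H}(t),\hat{\mathbf{H}}(t))$ is i.i.d.\ across slots, the kernel factorization~(\ref{trankernel}) shows that the CSI/CSIT pair carries no memory, so the only component of $\boldsymbol{\chi}$ that couples decisions across slots is $\mathbf{Q}$. Writing any admissible $\Omega$ through its partitioned form of Definition~\ref{paract} and averaging the per-stage cost $c(\mathbf{Q},\mathbf{w})$ and the queue-transition kernel over $\hat{\mathbf{H}}$ conditioned on $\mathbf{Q}$ reproduces exactly $\widetilde{c}(\mathbf{Q},\Omega(\mathbf{Q}))$ and $\Pr[\mathbf{Q}'|\mathbf{Q},\Omega(\mathbf{Q})]$. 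Consequently $L_{\boldsymbol{\gamma}}^{\Omega}(\boldsymbol{\chi}(0)) = \limsup_{T\to\infty}\frac{1}{T}\sum_{t=0}^{T-1}\mathbb{E}^{\Omega}\big[\widetilde{c}(\mathbf{Q}(t),\Omega(\mathbf{Q}(t)))\,\big|\,\mathbf{Q}(0)\big]$, which is an average-cost MDP on $\boldsymbol{\mathcal{Q}}$, and it suffices to establish optimality within this reduced problem.

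For stage (ii), from~(\ref{OrgBel}) every admissible $\Omega$ satisfies the pointwise inequality $\theta^\ast\tau + V^\ast(\mathbf{Q}) \le \widetilde{c}(\mathbf{Q},\Omega(\mathbf{Q}))\tau + \sum_{\mathbf{Q}'}\Pr[\mathbf{Q}'|\mathbf{Q},\Omega(\mathbf{Q})]V^\ast(\mathbf{Q}')$, with equality when $\Omega=\Omega^\ast$ since $\Omega^\ast(\mathbf{Q})$ attains the minimum of the R.H.S. Evaluating this along the Markov chain $\{\mathbf{Q}(t)\}$ induced by $\Omega$, taking $\mathbb{E}^{\Omega}[\cdot\,|\,\mathbf{Q}(0)]$, summing over $t=0,\dots,T-1$ and telescoping the $V^\ast$-terms gives $T\theta^\ast\tau + V^\ast(\mathbf{Q}(0)) \le \tau\sum_{t=0}^{T-1}\mathbb{E}^{\Omega}\big[c(\mathbf{Q}(t),\mathbf{w}(t))\,\big|\,\mathbf{Q}(0)\big] + \mathbb{E}^{\Omega}\big[V^\ast(\mathbf{Q}(T))\,\big|\,\mathbf{Q}(0)\big]$. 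Dividing by $T\tau$, sending $T\to\infty$, and invoking the transversality condition~(\ref{transodts}) to kill the $\mathbf{Q}(T)$ boundary term yields $\theta^\ast\le L_{\boldsymbol{\gamma}}^{\Omega}(\boldsymbol{\chi}(0))$ for every admissible $\Omega$, hence $\theta^\ast\le\min_{\Omega}L_{\boldsymbol{\gamma}}^{\Omega}$. Repeating the same computation with $\Omega=\Omega^\ast$ turns every step into an equality, so $\theta^\ast = L_{\boldsymbol{\gamma}}^{\Omega^\ast}(\boldsymbol{\chi}(0))$; combined with the lower bound this shows $\theta^\ast$ is the optimal average cost, independent of $\boldsymbol{\chi}(0)$, and that $\Omega^\ast$ attains it.

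The routine algebra is the easy part; the delicate points are the limiting arguments. First, stage (i) requires justifying that restricting attention to stationary partitioned policies loses no optimality for the POMDP — this is where the i.i.d.\ structure of $(\mathbf{H},\hat{\mathbf{H}})$ and the unichain requirement in Definition~\ref{admissibledis} are what make the reduction exact. Second, and most importantly, one must ensure that $\mathbb{E}^{\Omega}[V^\ast(\mathbf{Q}(t))]$ is finite and that $\frac{1}{T\tau}\mathbb{E}^{\Omega}[V^\ast(\mathbf{Q}(T))]\to 0$: this is precisely why Definition~\ref{admissibledis} imposes the strengthened stability condition $\lim_{t\to\infty}\mathbb{E}^{\Omega}[\sum_k Q_k^2(t)\log Q_k(t)]<\infty$, which dominates the (later shown to be essentially quadratic) growth of $V^\ast$ and supplies the uniform integrability needed to pass to the limit. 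Finally, applying~(\ref{transodts}) to $\Omega^\ast$ presupposes that $\Omega^\ast$ is itself admissible, which I would either add as a hypothesis or verify a posteriori once the closed-form approximate value function is available.
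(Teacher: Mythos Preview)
Your proposal is correct. The paper takes a slightly different but equivalent route: instead of first reducing Problem~\ref{IHAC_MDP} to a $\mathbf{Q}$-only MDP and then running the telescoping verification argument on~(\ref{OrgBel}), the paper starts from the full-state Bellman equation on $\boldsymbol{\chi}=(\mathbf{H},\hat{\mathbf{H}},\mathbf{Q})$ (invoking Proposition~4.6.1 of \cite{DP_Bertsekas} for precisely the verification step you spell out in stage~(ii)) and then takes expectation over $(\mathbf{H}',\hat{\mathbf{H}}')$ on both sides, using the i.i.d.\ structure in~(\ref{trankernel}), to collapse it to~(\ref{OrgBel}) with $V^\ast(\mathbf{Q})\triangleq\mathbb{E}[V^\ast(\boldsymbol{\chi})\mid\mathbf{Q}]$. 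Your order (reduce the problem first, then verify) is more self-contained since it writes out the telescoping argument the paper outsources to Bertsekas, and it makes the role of the transversality condition~(\ref{transodts}) and the stability requirement in Definition~\ref{admissibledis} fully explicit; the paper's order has the minor advantage that it never needs to argue separately that restricting to stationary partitioned policies is without loss, since this is inherited from the full-state optimality equation.
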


\begin{proof}
	please refer to Appendix A.
\end{proof}
\begin{Remark}	[Interpretation of {Theorem} \ref{LemBel}]
	The equivalent Bellman equation in (\ref{OrgBel}) is defined on the  queue state $\mathbf{Q}$ only. Nevertheless, the optimal beamforming   control policy $\Omega^{\ast}$ obtained by solving (\ref{OrgBel}) is still adaptive to the  observed state $\big( \hat{\mathbf{H}}, \mathbf{Q}\big)$. {Furthermore, based on the unichain assumption  of the control  policy, the solution obtained from Theorem 1 is unique \cite{DP_Bertsekas}.} ~\hfill\IEEEQED
\end{Remark}

Based on Theorem \ref{LemBel}, we establish the following corollary on  the approximation of  the Bellman equation in (\ref{OrgBel}).
\tcb{\begin{Corollary}	[Approximate Bellman Equation]	\label{cor1}
	For any given  $\boldsymbol{\gamma}$, if 
	\begin{itemize}
		\item	there is a unique  ($\theta^\ast, \{ V^\ast\left(\mathbf{Q} \right) \}$) that satisfies the Bellman equation and transversality condition in Theorem \ref{LemBel}.
		\item	there exists  $\theta$ and $V\left( \mathbf{Q}\right)$ of class\footnote{$f(\mathbf{x})$ ($\mathbf{x}$ is a $K$-dimensional vector) is of class $\mathcal{C}^2(\mathbb{R}_+^K)$, if the  first and second order partial derivatives of $f(\mathbf{x})$ w.r.t. each element of $\mathbf{x}$ are  continuous when $\mathbf{x}\in \mathbb{R}_+^K$.} $\mathcal{C}^2(\mathbb{R}_+^K)$ that solve the following \emph{approximate Bellman equation}:
	\begin{align}		\label{conperr}
		&\theta = \min_{ \Omega\left( \mathbf{Q} \right)} \bigg[ \widetilde{c}\left(\mathbf{Q}, \Omega\left(\mathbf{Q} \right)\right) +  \sum_{k=1}^K \frac{\partial V \left(\mathbf{Q} \right) }{\partial Q_k} \Big[  \lambda_k  - \mathbb{E}\left[R_k \right.  \notag \\
		&\left. \left(  1-\Pr\big[R_k> C_k\big(\mathbf{H},\Omega(\hat{\mathbf{H}},\mathbf{Q}) \big) \big| \hat{\mathbf{H}},\mathbf{Q}\big]\right) \big| \mathbf{Q}\right]   \Big] \bigg]
	\end{align}where  $\Pr\big[R_k> C_k\left(\mathbf{H}, \mathbf{w} \right) \big| \hat{\mathbf{H}},\mathbf{Q} \big]$ is the conditional PER (conditioned on the observed state $\big(\hat{\mathbf{H}},\mathbf{Q}\big)$). Furthermore,  for all admissible  control policy $\Omega$ and initial queue state $\mathbf{Q}\left(0 \right)$, the transversality condition  in (\ref{transodts}) is satisfied for $V$. 
	\end{itemize}
Then, we have
	\begin{align}	\label{simbelman}
		\theta^\ast&=\theta+o(1)	\\
		V^\ast\left(\mathbf{Q} \right)&=V\left(\mathbf{Q} \right)+o(1), \quad \forall \mathbf{Q} \in \boldsymbol{\mathcal{Q}}	\label{15resu}
	\end{align}where the error term $o(1)$ asymptotically goes to zero  for sufficiently small slot duration $\tau$.~\hfill\IEEEQED
\end{Corollary}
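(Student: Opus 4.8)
The plan is to show that the pair $(\theta,V)$ produced by the approximate Bellman equation (\ref{conperr}) is an $o(\tau)$-accurate solution of the exact equivalent Bellman equation (\ref{OrgBel}) in Theorem \ref{LemBel}, and then to transfer this equation-level accuracy to the solution level by exploiting uniqueness and the two transversality conditions. \emph{Step 1 (Taylor expansion of the exact Bellman operator).} Fix $\mathbf{Q}$ and a partitioned action $\Omega(\mathbf{Q})$. From the queue recursion (\ref{queue_sys}) the per-slot increment $\mathbf{Q}'-\mathbf{Q}$ has $k$-th entry $A_k\tau-G_k\big(\mathbf{H},\Omega(\hat{\mathbf{H}},\mathbf{Q})\big)\tau$ on the event $\{Q_k\geq G_k\tau\}$ and $A_k\tau-Q_k$ otherwise; since $G_k\in\{0,R_k\}$ is bounded and $A_k$ has a finite moment generating function (Assumption \ref{assumeA}), $\|\mathbf{Q}'-\mathbf{Q}\|$ has all moments of order $O(\tau)$ and the emptying event $\{Q_k<R_k\tau\}$ becomes empty once $\tau<\min_k Q_k/R_k$. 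Because $V\in\mathcal{C}^2(\mathbb{R}_+^K)$, Taylor's theorem with remainder gives
\[
\sum_{\mathbf{Q}'}\Pr\big[\mathbf{Q}'\,\big|\,\mathbf{Q},\Omega(\mathbf{Q})\big]V(\mathbf{Q}')=V(\mathbf{Q})+\sum_{k=1}^K\frac{\partial V(\mathbf{Q})}{\partial Q_k}\,\mathbb{E}\big[Q_k'-Q_k\,\big|\,\mathbf{Q},\Omega(\mathbf{Q})\big]+O(\tau^2),
\]
with the $O(\tau^2)$ term uniform over $\Omega(\mathbf{Q})$. Computing the drift by conditioning first on $\mathbf{H}$ given $\hat{\mathbf{H}}$ and then on $\hat{\mathbf{H}}$ given $\mathbf{Q}$, and using $\mathbb{E}[G_k\,|\,\hat{\mathbf{H}},\mathbf{Q}]=R_k\big(1-\Pr[R_k>C_k\,|\,\hat{\mathbf{H}},\mathbf{Q}]\big)$, the drift equals $\tau\big(\lambda_k-\mathbb{E}\big[R_k(1-\Pr[R_k>C_k\,|\,\hat{\mathbf{H}},\mathbf{Q}])\,\big|\,\mathbf{Q}\big]\big)$. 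Substituting into (\ref{OrgBel}), cancelling $V(\mathbf{Q})$, dividing by $\tau$ and minimizing over $\Omega(\mathbf{Q})$ shows that $(\theta,V)$ obeys (\ref{OrgBel}) with an additive per-stage residual of order $O(\tau^2)$, that is, an $o(1)$ residual after the $1/\tau$ normalization; this reduces the corollary to a stability/perturbation argument.

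\emph{Step 2 (closing the gap for $\theta$).} Let $\Omega^{\dagger}$ attain the minimum in (\ref{conperr}) and $\Omega^{\ast}$ attain the minimum in (\ref{OrgBel}). Iterating the residual identity of Step 1 along the Markov chain induced by $\Omega^{\dagger}$, taking expectations, dividing by $T\tau$, letting $T\to\infty$ and invoking the transversality condition (\ref{transodts}) for $V$ gives $L_{\boldsymbol{\gamma}}^{\Omega^{\dagger}}(\boldsymbol{\chi}(0))=\theta+O(\tau)$; by Theorem \ref{LemBel}, $\theta^{\ast}=\min_{\Omega}L_{\boldsymbol{\gamma}}^{\Omega}\leq L_{\boldsymbol{\gamma}}^{\Omega^{\dagger}}$, hence $\theta^{\ast}\leq\theta+O(\tau)$. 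Conversely, for every admissible $\Omega$ the approximate equation (\ref{conperr}) gives the pointwise bound $\theta\tau\leq\widetilde{c}(\mathbf{Q},\Omega(\mathbf{Q}))\tau+\mathbb{E}\big[V(\mathbf{Q}')\,\big|\,\mathbf{Q},\Omega(\mathbf{Q})\big]-V(\mathbf{Q})+O(\tau^2)$; iterating this along $\Omega=\Omega^{\ast}$ and again using transversality of $V$ gives $\theta\leq L_{\boldsymbol{\gamma}}^{\Omega^{\ast}}(\boldsymbol{\chi}(0))+O(\tau)=\theta^{\ast}+O(\tau)$. Therefore $|\theta-\theta^{\ast}|=O(\tau)=o(1)$, which is (\ref{simbelman}).

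\emph{Step 3 (closing the gap for $V$ --- the main obstacle).} It remains to turn the $O(\tau^2)$ equation-error into $\sup_{\mathbf{Q}}|V(\mathbf{Q})-V^{\ast}(\mathbf{Q})|=o(1)$ in (\ref{15resu}); this is the delicate part, because the average-cost Bellman operator is not a contraction, so the residual cannot be propagated by a naive fixed-point estimate. I would normalize $V$ and $V^{\ast}$ to agree at a fixed recurrent reference state and represent each relative value function through its Poisson equation --- $V^{\ast}(\mathbf{Q})=\sum_{t\geq0}\mathbb{E}^{\Omega^{\ast}}\big[(\widetilde{c}(\mathbf{Q}(t))-\theta^{\ast})\tau\,\big|\,\mathbf{Q}(0)=\mathbf{Q}\big]$ minus the same series started at the reference state, and analogously for $V$ with $(\Omega^{\dagger},\theta)$ --- the series converging by the unichain/stability requirement in Definition \ref{admissibledis}. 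Since Step 1 exhibits $\Omega^{\dagger}$ as an $O(\tau^2)$-optimal stationary policy for the exact problem and Step 2 gives $\theta=\theta^{\ast}+O(\tau)$, a coupling argument for the two induced chains then bounds the difference of the two series by $o(1)$, uniformly on compact queue sets, yielding (\ref{15resu}). The genuine technical hurdles are: (i) making the Taylor remainder in Step 1 truly $o(\tau)$ under only $V\in\mathcal{C}^2$, which requires controlling $\mathbb{E}\big[\|\nabla^2 V(\xi)\|\,\|\mathbf{Q}'-\mathbf{Q}\|^2\big]$ via the finiteness of the moment generating function of $A_k$ together with a polynomial growth bound on $\nabla^2 V$ implicit in the stability condition; (ii) the one-sided derivative $\partial V/\partial Q_k$ and the reflection at $Q_k=0$, handled by noting the drift there points into $\mathbb{R}_+^K$ while the emptying event has $o(\tau)$ probability; and (iii) the uniformity of all the $o(1)$ terms over the competing policies, which follows from the uniform bound $G_k\leq R_k$.
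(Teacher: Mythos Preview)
Your Step~1 matches the paper almost exactly: Appendix~B also Taylor-expands $\mathbb{E}[V(\mathbf{Q}')\mid\mathbf{Q}]$ using $V\in\mathcal{C}^2$ and the fact that the increment is $O(\tau)$, arriving at the same drift term $\tau\big(\lambda_k-\mathbb{E}[R_k(1-\Pr[\cdot])\mid\mathbf{Q}]\big)$ plus an $o(\tau)$ remainder. The paper then packages this as ``$F_{\mathbf{Q}}(\theta,V)=F^{\dagger}_{\mathbf{Q}}(\theta,V)+\nu\,G_{\mathbf{Q}}$ with $\nu=o(1)$'' and proves a small perturbation-of-minimum lemma (their Lemma~7) to conclude $|F_{\mathbf{Q}}(\theta,V)|=o(1)$; you achieve the same thing by noting the $O(\tau^2)$ remainder is uniform over $\Omega(\mathbf{Q})$, which is equivalent.

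Where you diverge is in Steps~2--3. You run a constructive performance-difference argument: iterate the residual identity along the chains induced by $\Omega^{\dagger}$ and $\Omega^{\ast}$, invoke transversality to get two-sided bounds on $\theta$, and then represent $V,V^{\ast}$ via their Poisson series and couple the two chains. The paper instead dispatches both (\ref{simbelman}) and (\ref{15resu}) in one stroke by a \emph{uniqueness/contradiction} argument (their Lemma~8): if $V(\mathbf{Q}')=V^{\ast}(\mathbf{Q}')+\mathcal{O}(1)$ at some $\mathbf{Q}'$, then sending $\tau\to0$ makes $(\theta,V)$ an exact solution of $F_{\mathbf{Q}}=0$ plus transversality, yet distinct from $(\theta^{\ast},V^{\ast})$, contradicting the assumed uniqueness. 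This is dramatically shorter and sidesteps every hurdle you list in (i)--(iii), but it leans hard on the uniqueness hypothesis and treats the $\tau\to0$ limit informally (no compactness or subsequence is exhibited). Your route is more laborious but yields an explicit $O(\tau)$ rate for $\theta$ and, if the coupling can be made uniform, a quantitative bound for $V$ as well; it also does not require uniqueness of $(\theta^{\ast},V^{\ast})$ for the $\theta$-part. Either approach is acceptable; if you want to align with the paper, replace your Steps~2--3 by the contradiction argument above.
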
}
\begin{proof}
	please refer to Appendix B.
\end{proof}

\tcb{Corollary \ref{cor1} states that the difference between ($\theta, \{ {V}\left(\mathbf{Q} \right) \}$) obtained by solving (\ref{conperr}) and ($\theta^\ast, \{ {V^\ast}\left(\mathbf{Q} \right) \}$) in (\ref{OrgBel}) is asymptotically small w.r.t. the slot duration $\tau$. Therefore, we can focus on solving the approximate Bellman equation in  (\ref{conperr}), which is a simpler problem than solving the original Bellman equation in (\ref{OrgBel}).}

There are two technical obstacles to {solving the approximate Bellman equation in (\ref{conperr})}. Firstly, deriving the optimal control policy from (\ref{conperr}) requires  knowledge of the relative value function $V\left(\mathbf{Q} \right)$. \tcb{In fact, the relative value function captures the urgency information of each data flow and plays a key role in  delay-aware  control.} However, obtaining the relative value function is not trivial as it involves solving  a large  system of nonlinear fixed point equations.  Brute-force approaches to solve these fixed point equations such as  value iteration and policy iteration \cite{DP_Bertsekas} have huge complexity.  Secondly, deriving the optimal control policy from (\ref{conperr})  requires  knowledge of the conditional PER. The conditional PER does not have closed-form expression and is not convex  for general beamforming design \cite{fixedper}. To address the first issue, we  introduce the virtual continuous time system (VCTS) and obtain a closed-form approximation of the relative value function in Section \ref{Example22}. To address the second issue, we  apply Bernstein approximation \cite{Chancecon1}, \cite{fixedper} to obtain a tractable convex approximation of the conditional PER   in Section \ref{KnTbased}.

\section{Closed-Form Approximation of Relative Value Function}	\label{Example22}
In this section, we  adopt a continuous time approach to obtain a closed-form approximation of the relative value function $V\left(\mathbf{Q}\right)$. We first reverse-engineer a virtual continuous time system (VCTS) from the original discrete time POMDP (DT-POMDP). We then utilize perturbation theory to obtain the closed-form approximation of $V\left(\mathbf{Q}\right)$.

\subsection{Virtual Continuous Time System} \label{VCTS}
We first define the VCTS, which can be viewed as a characterization of the mean behavior of the DT-POMDP in (\ref{queue_sys}).  The motivation of studying the problem in the continuous domain is to utilize the well-established theories of calculus and differential equations to obtain a closed-form approximation of $V\left(\mathbf{Q}\right)$. The VCTS is a fictitious system with a continuous virtual queue state $\mathbf{q}\left(t\right) = \left( q_1\left(t\right),\dots,q_K\left(t\right)\right) \in \overline{\boldsymbol{\mathcal{Q}}} {\triangleq \overline{\mathcal{Q}}^K}$,  where  $q_k\left(t\right) \in {\overline{\mathcal{Q}}}$ is the   state of the $k$-th virtual queue at time $t$. {$\overline{\mathcal{Q}}\triangleq [0, +\infty)$ denotes the virtual queue state space}. Given an initial   system state $\mathbf{q}(0)\in \overline{\boldsymbol{\mathcal{Q}}} $,  the  trajectory of the $k$-th virtual queue is  described by the following  differential equation:
\begin{align}	\label{VCTSeg2}
	\frac{ \mathrm{d}} {\mathrm{d}t}  q_{k}\left(t\right) = - \mathbb{E}\left[G_k\left(\mathbf{H},\mathbf{w}\left(t\right) \right)  \big| \mathbf{q}\left(t \right)\right]   + \lambda_k
\end{align}
where $G_k\left(\mathbf{H},\mathbf{w} \right)$ is the goodput in (\ref{goodput}) and $ \lambda_k$ is the average data arrival rate in Assumption \ref{assumeA}. 

Let $\Omega_k^v$ be the virtual control policy for the $k$-th  flow  of the VCTS which is a mapping from the global   state  to the   actions of the $k$-th  flow. Specifically, we have  $\Omega_k^v\big(\hat{\mathbf{H}}, \mathbf{q} \big)= \mathbf{w}_k$.  Furthermore, let $\Omega^v= \{ \Omega_k^v: \forall k \}$ be the aggregation of the virtual control policies  for all the $K$ flows in the VCTS.  Similarly, we define the associated partitioned actions  as  $\Omega^v \left(\mathbf{q} \right)=\big\{ \Omega_k^v\left(\mathbf{q} \right): \forall k\big\}$, where $\Omega_k^v\left(\mathbf{q} \right)=\big\{\mathbf{w}_k = \Omega_k^v\big( \hat{\mathbf{H}}, \mathbf{q}\big): \forall \hat{\mathbf{H}}\big\}$.

From the VCTS dynamics in (\ref{VCTSeg2}), there exists a steady state $\mathbf{q}^{\infty}=(0, \dots, 0)$  for   $\mathbf{q}(t)$, i.e., $ \lim_{t \rightarrow \infty} \mathbf{q}(t) \\ =\mathbf{q}^{\infty}$. The virtual control action that maintains $\mathbf{q}(t)$ at the steady state $\mathbf{q}^\infty$ is defined as the \emph{steady state control action}.
\begin{Definition}	[Steady State Control Action]	\label{defsteadyconrt}
	A control action $\mathbf{w}^{\infty}\triangleq \big\{\mathbf{w}_k^{\infty}: \forall k, \hat{\mathbf{H}}\big\}$ is called a \emph{steady state control action} if it satisfies $\mathbb{E}\left[G_k\left(\mathbf{H},\mathbf{w}^{\infty} \right)  \big| \mathbf{q}^{\infty}\right]    =\lambda_k$ $(\forall k)$.~\hfill\IEEEQED
\end{Definition}

For the VCTS, we consider the following \emph{virtual} per-stage cost function:
\begin{align}
	\overline{c}\left(\mathbf{q}, \Omega^v(\mathbf{q})\right) =\mathbb{E}\left[\sum_{k=1}^K\big( \|\mathbf{w}_k\|^2+\gamma_k \frac{|q_k|}{\lambda_k} \big)\Bigg|\mathbf{q} \right]-c^{\infty}
\end{align}
where $c^{\infty}=\mathbb{E}\big[\sum_{k=1}^K \|\mathbf{w}_k^{\infty}\|^2\big| \mathbf{q}^{\infty}\big]$ and $\mathbf{w}^{\infty}$ is the steady state control action  as defined in Definition \ref{defsteadyconrt}. Furthermore, we have the following definition on the admissible  virtual control policy for the VCTS. 
\begin{Definition}	[Admissible Virtual Control Policy for VCTS]		\label{addvctss}
	A virtual policy $\Omega^v$ for the VCTS is  admissible if  the following requirements are satisfied:
	\begin{itemize}
		\item  For any  initial state $\mathbf{q}(0)\in \overline{\boldsymbol{\mathcal{Q}}}$, the virtual queue trajectory $\mathbf{q}(t)$ in (\ref{VCTSeg2}) under $\Omega^v$ is unique.
		\item For any  initial state $\mathbf{q}(0)\in \overline{\boldsymbol{\mathcal{Q}}}$, the total cost $\int_0^{\infty} \overline{c}\left(\mathbf{q}\left(t\right), \Omega^v\left(\mathbf{q}\left(t\right)\right)\right) \  \mathrm{d}t$ under $\Omega^v$ is bounded.~\hfill\IEEEQED
	\end{itemize}
\end{Definition}

Given an admissible  control policy $\Omega^v$, we define the total cost of the VCTS starting from a given initial global virtual queue state $\mathbf{q}\left(0 \right)$ as
\begin{equation}		\label{totalU}
	\tcb{J^{\Omega^v} \left(\mathbf{q}\left(0\right);\boldsymbol{\epsilon} \right) } = \int_0^{\infty} {\overline{c}\left(\mathbf{q}\left(t\right), \Omega^v\left(\mathbf{q}\left(t\right)\right)\right)} \  \mathrm{d}t, \quad \mathbf{q}\left(0\right)  \in \overline{\boldsymbol{\mathcal{Q}}}
\end{equation}
where  $\boldsymbol{\epsilon}  \triangleq \{\epsilon_k: \forall k\}$ \tcb{is a coupling parameter which affects the virtual queue evolution in (\ref{VCTSeg2})}. We consider an infinite horizon total cost problem associated with the VCTS as below.

\begin{Problem}		[Infinite Horizon Total Cost Problem for  VCTS]  \label{fluid problem1}
For any initial  virtual queue state  $\mathbf{q}(0)  \in \overline{\boldsymbol{\mathcal{Q}}}$, the infinite horizon total cost problem for the VCTS is formulated as
	\begin{align}
		\min_{\Omega^v} \tcb{J^{\Omega^v} \left(\mathbf{q}\left(0\right);\boldsymbol{\epsilon} \right)}
	\end{align}
	where \tcb{$J^{\Omega^v} \left(\mathbf{q}\left(0\right);\boldsymbol{\epsilon} \right)$} is given in (\ref{totalU}).~\hfill\IEEEQED
\end{Problem}

Note that the  two technical conditions in Definition \ref{addvctss}  on  the admissible policy  are for the existence of an optimal policy for the total cost problem in Problem \ref{fluid problem1}.  The above total cost problem has been well-studied in the continuous time optimal control theory \cite{DP_Bertsekas}. The solution can be obtained by solving the \emph{Hamilton-Jacobi-Bellman} (HJB) equation as  below. 
\begin{Lemma}	[\tcb{Sufficient Conditions for Optimality under VCTS}]	\label{HJB1}
	\tcb{Assume there exists a function $J\left( \mathbf{q};\boldsymbol{\epsilon}\right)$ of class $\mathcal{C}^2(\mathbb{R}_+^K)$ that solves the following HJB equation:}
	\begin{align}	\label{cenHJB}
		&\min_{\Omega^v \left(\mathbf{q} \right)} \mathbb{E}\left[ \sum_{k=1}^K \left( \|\mathbf{w}_k\|^2+\gamma_k \frac{|q_k|}{\lambda_k} \right)- c^{\infty}  \right.  \notag \\
		& \left.   + \sum_{k=1}^K \left( \frac{\partial  J \left(\mathbf{q};\boldsymbol{\epsilon}\right)} {\partial q_k} \left(- G_k\left(\mathbf{H},\mathbf{w}\right)  + \lambda_k\right) \right)\Bigg| \mathbf{q} \right] =0
	\end{align}
with  boundary condition \tcb{$J\left(\mathbf{0};\boldsymbol{\epsilon}\right)=0$.  Furthermore, $J$ satisfies the following  conditions: 
\begin{enumerate} [1)]
		\item  $\lim_{t \rightarrow \infty} J\left(\mathbf{q}\left(t\right);\boldsymbol{\epsilon}\right) \leq 0$ for all admissible control policy $\Omega^v$ and initial condition $\mathbf{q}\left(0\right)=\mathbf{q} \in \overline{\boldsymbol{\mathcal{Q}}}$.
		\item $\lim_{t \rightarrow \infty} J\left(\mathbf{q}^{\ast}\left(t\right);\boldsymbol{\epsilon}\right) =0$ for a given control $ \Omega^{v\ast}$ and the corresponding state trajectory $\mathbf{q}^{\ast}\left(t\right)$, where $\Omega^{v \ast} \left(\mathbf{q} \right)$ achieves the minimum of the L.H.S. of (\ref{cenHJB}) for any $\mathbf{q} \in \overline{\boldsymbol{\mathcal{Q}}}$.
\end{enumerate}}Then, \tcb{$J\left(\mathbf{q};\boldsymbol{\epsilon}\right) =\min_{\Omega^v} J^{\Omega^v} \left( \mathbf{q} \left(0\right);\boldsymbol{\epsilon}\right)$} is the optimal total cost when $\mathbf{q}(0)=\mathbf{q}$ and \tcb{$J\left(\mathbf{q};\boldsymbol{\epsilon}\right)$} is called \emph{fluid value function}. $\Omega^{v \ast}$ is the optimal virtual control policy for Problem \ref{fluid problem1}. .~\hfill\IEEEQED
\end{Lemma}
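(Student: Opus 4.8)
The plan is to prove Lemma \ref{HJB1} by the standard verification (sufficiency) argument of continuous-time optimal control: we show that any $\mathcal{C}^2$ solution $J(\mathbf{q};\boldsymbol{\epsilon})$ of the HJB equation (\ref{cenHJB}) satisfying the stated boundary and limiting conditions is a lower bound on the total cost of every admissible virtual policy, and that this bound is attained by $\Omega^{v\ast}$.

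First I would fix an arbitrary admissible virtual policy $\Omega^v$ and let $\mathbf{q}(t)$ be the corresponding trajectory solving the ODE (\ref{VCTSeg2}) with $\mathbf{q}(0)=\mathbf{q}$. Since $J(\cdot;\boldsymbol{\epsilon})\in\mathcal{C}^2(\mathbb{R}_+^K)$ and $\mathbf{q}(t)$ is absolutely continuous, the chain rule gives, for almost every $t$,
\begin{align}
\frac{\mathrm{d}}{\mathrm{d}t} J\left(\mathbf{q}(t);\boldsymbol{\epsilon}\right) = \sum_{k=1}^K \frac{\partial J\left(\mathbf{q}(t);\boldsymbol{\epsilon}\right)}{\partial q_k}\left(-\mathbb{E}\left[G_k\left(\mathbf{H},\mathbf{w}(t)\right)\big|\mathbf{q}(t)\right] + \lambda_k\right).
\end{align}
Because $J$ solves (\ref{cenHJB}), the expression being minimized there is nonnegative for every feasible collection of actions, in particular for the actions $\Omega^v(\mathbf{q}(t))$ prescribed by the policy; recognizing the cost terms (minus $c^{\infty}$) as $\overline{c}(\mathbf{q}(t),\Omega^v(\mathbf{q}(t)))$ and the remaining terms as $\frac{\mathrm{d}}{\mathrm{d}t} J(\mathbf{q}(t);\boldsymbol{\epsilon})$ via the ODE, this reads $\overline{c}(\mathbf{q}(t),\Omega^v(\mathbf{q}(t))) + \frac{\mathrm{d}}{\mathrm{d}t} J(\mathbf{q}(t);\boldsymbol{\epsilon}) \geq 0$, where I use that $\partial J/\partial q_k$ and $\lambda_k$ are non-random so the conditional expectation over $\hat{\mathbf{H}}$ given $\mathbf{q}$ passes through exactly as written in (\ref{cenHJB}).

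Next I would integrate this pointwise inequality over $[0,T]$ and apply the fundamental theorem of calculus to obtain $\int_0^T \overline{c}(\mathbf{q}(t),\Omega^v(\mathbf{q}(t)))\,\mathrm{d}t \geq J(\mathbf{q};\boldsymbol{\epsilon}) - J(\mathbf{q}(T);\boldsymbol{\epsilon})$. Letting $T\to\infty$: admissibility of $\Omega^v$ (Definition \ref{addvctss}) makes the left side converge to the finite total cost $J^{\Omega^v}(\mathbf{q}(0);\boldsymbol{\epsilon})$ in (\ref{totalU}), while condition 1) gives $\lim_{T\to\infty} J(\mathbf{q}(T);\boldsymbol{\epsilon}) \leq 0$, so $J^{\Omega^v}(\mathbf{q}(0);\boldsymbol{\epsilon}) \geq J(\mathbf{q};\boldsymbol{\epsilon})$; since $\Omega^v$ is arbitrary, $J(\mathbf{q};\boldsymbol{\epsilon}) \leq \inf_{\Omega^v} J^{\Omega^v}(\mathbf{q}(0);\boldsymbol{\epsilon})$. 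Then I would rerun the same computation for $\Omega^{v\ast}$, which by hypothesis attains the pointwise minimum in (\ref{cenHJB}): every inequality above becomes an equality, and condition 2), $\lim_{t\to\infty} J(\mathbf{q}^\ast(t);\boldsymbol{\epsilon}) = 0$, yields $J^{\Omega^{v\ast}}(\mathbf{q}(0);\boldsymbol{\epsilon}) = J(\mathbf{q};\boldsymbol{\epsilon})$. Combining the two, $J(\mathbf{q};\boldsymbol{\epsilon}) = \min_{\Omega^v} J^{\Omega^v}(\mathbf{q}(0);\boldsymbol{\epsilon})$ with $\Omega^{v\ast}$ optimal, and the boundary condition $J(\mathbf{0};\boldsymbol{\epsilon})=0$ pins down the additive constant.

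I expect the main obstacle to be the limiting step rather than the differentiation. Since the virtual per-stage cost $\overline{c}$ carries the $-c^{\infty}$ offset it is not sign-definite, so $\int_0^T\overline{c}\,\mathrm{d}t$ need not be monotone in $T$; this is precisely why the transversality-type conditions 1)--2) are imposed (they rule out spurious policies that drive $J$ to $-\infty$), and the proof must lean on them together with the admissibility hypothesis that the total cost integral converges in order to reconcile $\lim_{T\to\infty}\int_0^T\overline{c}\,\mathrm{d}t$ with $\lim_{T\to\infty} J(\mathbf{q}(T);\boldsymbol{\epsilon})$ and to conclude that the upper and lower bounds on $J$ coincide. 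A secondary point to nail down is the regularity needed for the chain rule along the trajectory --- absolute continuity of $\mathbf{q}(t)$ from (\ref{VCTSeg2}) together with the $\mathcal{C}^2$ smoothness of $J$ --- and the measurability and integrability of $t\mapsto\overline{c}(\mathbf{q}(t),\Omega^v(\mathbf{q}(t)))$ and of $\partial J/\partial q_k$ evaluated along $\mathbf{q}(t)$.
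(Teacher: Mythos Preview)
Your proposal is correct and is precisely the standard verification-theorem argument for infinite-horizon continuous-time optimal control that the paper defers to \cite{DP_Bertsekas}: differentiate $J$ along an arbitrary admissible trajectory via the chain rule, use the HJB inequality to bound the running cost from below, integrate and pass to the limit using conditions 1)--2), and then show tightness under $\Omega^{v\ast}$. The subtleties you flag (non-sign-definiteness of $\overline{c}$ because of the $-c^{\infty}$ offset, and the role of the transversality-type limits) are exactly why conditions 1) and 2) are stated in the lemma, so there is nothing missing.
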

\begin{proof}
	please refer to \cite{DP_Bertsekas} for details.
\end{proof}
%suffcond

In the following lemma, we establish the relationship between the solution of Problem \ref{fluid problem1} ($c^\infty, \{J\left(\mathbf{Q};\boldsymbol{\epsilon}\right)\}$)  and the solution of the Bellman equation ($\theta^\ast, \{V^\ast\left(\mathbf{Q}\right)\}$).

\begin{Theorem}	[\tcb{Relationship  between  ($c^\infty, \{J\left(\mathbf{Q};\boldsymbol{\epsilon}\right)\}$)   and  ($\theta^\ast, \{V^\ast\left(\mathbf{Q}\right)\}$)}]	\label{them111}
	\tcb{Suppose $\left\{\frac{\partial J\left(\mathbf{Q}; \boldsymbol{\epsilon} \right)}{\partial Q_k}: \forall k \right\}$  are increasing functions of all $Q_k$ and $J\left(\mathbf{Q}; \boldsymbol{\epsilon} \right)=\mathcal{O}\left(\sum_{k=1}^K Q_k^2 \log Q_k \right)$. If  ($c^{\infty},  \{ J\left(\mathbf{Q}; \boldsymbol{\epsilon} \right) \}$) satisfies the sufficient conditions in Lemma \ref{HJB1}, then we have $V^\ast\left(\mathbf{Q} \right)=J\left(\mathbf{Q};\boldsymbol{\epsilon}\right)+o(1)$ and $\theta^\ast = c^\infty + o(1)$, where $o(1)$ denotes the asymptotically small error term w.r.t. the slot duration $\tau$.~\hfill\IEEEQED}
\end{Theorem}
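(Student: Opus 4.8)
The plan is to show that the pair $(c^{\infty},\{J(\mathbf{Q};\boldsymbol{\epsilon})\})$ is itself a solution of the \emph{approximate Bellman equation} (\ref{conperr}) in Corollary \ref{cor1}, and then to invoke that corollary together with the uniqueness of $(\theta^\ast,\{V^\ast(\mathbf{Q})\})$ guaranteed by the unichain property (cf. the Remark following Theorem \ref{LemBel}). Concretely, I would take $\theta\leftarrow c^{\infty}$ and $V(\mathbf{Q})\leftarrow J(\mathbf{Q};\boldsymbol{\epsilon})$ in (\ref{conperr}). By Lemma \ref{HJB1}, $J(\cdot;\boldsymbol{\epsilon})$ is of class $\mathcal{C}^2(\mathbb{R}_+^K)$, so the regularity hypothesis of Corollary \ref{cor1} is met, and it then remains to check (i) that the HJB equation (\ref{cenHJB}) is algebraically equivalent to (\ref{conperr}) under this substitution, and (ii) that $J(\cdot;\boldsymbol{\epsilon})$ satisfies the transversality condition (\ref{transodts}).

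For step (i), I would start from (\ref{cenHJB}), pull the term $c^{\infty}$ and the constant drift $\lambda_k$ out of the minimization (neither depends on $\Omega^v(\mathbf{q})$), and use two facts. First, $\overline{\mathcal{Q}}=[0,\infty)$, so $|q_k|=q_k$, and since $\boldsymbol{\mathcal{Q}}=\overline{\boldsymbol{\mathcal{Q}}}$ we may freely write $Q_k$ for $q_k$. Second, by the tower property over the CSIT error kernel (\ref{csiterrknl}) and the independence of the current $(\mathbf{H},\hat{\mathbf{H}})$ from $\mathbf{Q}$ under Assumption \ref{CSIassum}, together with the fact that $R_k$ is deterministic and $\mathbf{w}=\Omega(\hat{\mathbf{H}},\mathbf{Q})$ is $\hat{\mathbf{H}}$-measurable, the expected goodput equals the rate discounted by the conditional PER, i.e. $\mathbb{E}[G_k(\mathbf{H},\mathbf{w})\mid\mathbf{Q}]=\mathbb{E}[R_k(1-\Pr[R_k>C_k\mid\hat{\mathbf{H}},\mathbf{Q}])\mid\mathbf{Q}]$, where $C_k=C_k(\mathbf{H},\mathbf{w})$. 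Since the per-stage part of (\ref{cenHJB}) is exactly $\widetilde c(\mathbf{Q},\Omega(\mathbf{Q}))=\mathbb{E}[\sum_k(\|\mathbf{w}_k\|^2+\gamma_k Q_k/\lambda_k)\mid\mathbf{Q}]$, and since $\partial J/\partial q_k$ and $\lambda_k$ are deterministic given $\mathbf{q}$ so that $\mathbb{E}[\frac{\partial J}{\partial q_k}(-G_k+\lambda_k)\mid\mathbf{q}]=\frac{\partial J}{\partial q_k}(\lambda_k-\mathbb{E}[G_k\mid\mathbf{q}])$, rearranging (\ref{cenHJB}) gives $c^{\infty}=\min_{\Omega(\mathbf{Q})}[\widetilde c(\mathbf{Q},\Omega(\mathbf{Q}))+\sum_k \frac{\partial J}{\partial Q_k}(\lambda_k-\mathbb{E}[G_k\mid\mathbf{Q}])]$, which is precisely (\ref{conperr}) with $(\theta,V)=(c^{\infty},J)$. (Here the partitioned-action sets $\Omega^v(\mathbf{q})$ of the VCTS and $\Omega(\mathbf{Q})$ of the DT-POMDP are the same kind of object, namely a collection of beamformers indexed by $\hat{\mathbf{H}}$, so the two minimizations coincide.)

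For step (ii), I would combine the growth hypothesis $J(\mathbf{Q};\boldsymbol{\epsilon})=\mathcal{O}(\sum_k Q_k^2\log Q_k)$ with the second bullet of Definition \ref{admissibledis}: for every admissible $\Omega$, $\lim_{t\to\infty}\mathbb{E}^{\Omega}[\sum_k Q_k^2(t)\log Q_k(t)]<\infty$, so this expectation is bounded uniformly in $t$; hence $\mathbb{E}^{\Omega}[J(\mathbf{Q}(T);\boldsymbol{\epsilon})\mid\mathbf{Q}(0)]=\mathcal{O}(1)$ and $\frac1T\mathbb{E}^{\Omega}[J(\mathbf{Q}(T);\boldsymbol{\epsilon})\mid\mathbf{Q}(0)]\to0$, which is (\ref{transodts}). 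The monotonicity hypothesis that each $\partial J/\partial Q_k$ is increasing in all $Q_k$ (a convexity-type property of the fluid value function) is what I would use to argue that the minimizer of (\ref{cenHJB}) induces a stabilizing — hence admissible — discrete-time policy, and to control the Taylor remainder $J(\mathbf{Q}')-J(\mathbf{Q})-\sum_k\frac{\partial J}{\partial Q_k}(Q_k'-Q_k)$ uniformly over reachable states, so that the $o(1)$ error bookkeeping in the proof of Corollary \ref{cor1} remains valid.

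With $(c^{\infty},J(\cdot;\boldsymbol{\epsilon}))$ verified to be a $\mathcal{C}^2$ solution of (\ref{conperr}) that meets the transversality condition, Corollary \ref{cor1} applies verbatim and yields $\theta^\ast=c^{\infty}+o(1)$ and $V^\ast(\mathbf{Q})=J(\mathbf{Q};\boldsymbol{\epsilon})+o(1)$, the $o(1)$ vanishing as $\tau\to0$. I expect the main obstacle to be carrying out step (i) honestly: one has to be sure that the expectation in the HJB equation — an average over the joint law of $(\mathbf{H},\hat{\mathbf{H}})$ given $\mathbf{q}$ — really reproduces the nested conditional-PER term in (\ref{conperr}), and, more conceptually, that the drift $-\mathbb{E}[G_k\mid\mathbf{q}]+\lambda_k$ built into the VCTS dynamics (\ref{VCTSeg2}) is exactly the normalized mean one-slot queue increment that surfaces when the discrete Bellman equation (\ref{OrgBel}) is Taylor-expanded; this is the real bridge between the DT-POMDP and the VCTS, and the remaining estimates are already folded into Corollary \ref{cor1}.
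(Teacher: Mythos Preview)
Your plan is essentially the paper's own route: verify that $(c^{\infty},J)$ solves the approximate Bellman equation (\ref{conperr}) by reading off the HJB equation (\ref{cenHJB}), check transversality via the growth bound $J=\mathcal{O}(\sum_k Q_k^2\log Q_k)$ against Definition~\ref{admissibledis}, and then invoke Corollary~\ref{cor1}. Steps (i) and (ii) as you describe them match Appendix~C almost line for line.

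The one place where you underestimate the work is the admissibility of the HJB minimizer $\Omega^{v\ast}$, which you relegate to a parenthetical remark. In the paper this is the bulk of the proof: it takes $L(\mathbf{Q})=J(\mathbf{Q})$ as a Lyapunov function, uses the monotonicity of $\partial J/\partial Q_k$ to bound the queue drift by the Lyapunov drift (Lemma~10 in Appendix~C), then compares against a stationary randomized CSI-only policy (from \cite{neelyitp}) that serves at rate $\lambda_k+\delta_k$ to get $\Delta L(\mathbf{Q})\leq -\sum_k \frac{\partial L}{\partial Q_k}\delta_k\tau+\overline{P}(\boldsymbol{\delta})\tau$, hence negative drift for large queues. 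From this one extracts an exponential tail via the Kingman bound (Lemma~11), which in turn gives $\mathbb{E}^{\Omega^{v\ast}}[\sum_k Q_k^2\log Q_k]<\infty$ and closes the admissibility check in Definition~\ref{admissibledis}. So your instinct about where the monotonicity hypothesis is used is right, but the actual mechanism is a Foster--Lyapunov argument plus a tail bound, not a one-line observation; you should budget for that when writing it out.
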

\begin{proof}  
please  refer to Appendix	 C.	
\end{proof}

\tcb{The difference between the fluid value function   $J\left(\mathbf{Q};\boldsymbol{\epsilon}\right)$ and the optimal relative value function  $V^\ast\left(\mathbf{Q} \right)$  is  $o(1)$ w.r.t. to the slot duration  $\tau$. Therefore, we can focus on solving the HJB equation in (\ref{cenHJB}) by leveraging the well-established theories of calculus and differential equations.}

\subsection{Perturbation Analysis of \tcb{$ J\left(\mathbf{q};\boldsymbol{\epsilon}\right)$} }	
Deriving \tcb{$ J\left(\mathbf{q};\boldsymbol{\epsilon}\right)$} involves solving  a $K$-dimensional non-linear PDE in (\ref{cenHJB}), which is in general challenging.   To  obtain a closed-form approximation of \tcb{$ J\left(\mathbf{q};\boldsymbol{\epsilon}\right)$}, we treat the VCTS in (\ref{VCTSeg2}) as a \emph{perturbation}  of a \emph{base VCTS}, which is defined below. 
\begin{Definition} [Base VCTS]	\label{base_sys}
	A  base VCTS is the VCTS in (\ref{VCTSeg2}) with $\boldsymbol{\epsilon}=\mathbf{0}$.~\hfill\IEEEQED
\end{Definition}

We first study  the  base VCTS and use \tcb{$J\left(\mathbf{q};\mathbf{0}\right) $} to obtain a closed-form approximation of \tcb{$J\left(\mathbf{q};\boldsymbol{\epsilon}\right)$}.  

For sufficiently large delay price $\boldsymbol{\gamma}$, the optimal beamforming control  for  Problem \ref{fluid problem1} becomes zero-forcing (ZF) beamforming\footnote{This is because  large delay price  corresponds to the high SNR regime. Hence, at the high SNR regime, ZF beamforming is asymptotically optimal \cite{mmse} because there is no interference between the $K$ flows when $\boldsymbol{\epsilon}=0$. }. As a result, the $K$ queue dynamics of the base VCTS are totally decoupled due to the absence of interference when $\boldsymbol{\epsilon}=\mathbf{0}$. In other words, the downlink MU-MIMO system is equivalent to a decoupled system with $K$ independent data flows. We have the following lemma summarizing the fluid value function \tcb{$J\left(\mathbf{q};\mathbf{0}\right)$} of the base VCTS.
\begin{Lemma}	[Decomposable Structure of \tcb{$J\left(\mathbf{q};\mathbf{0}\right)$}]	\label{linearAp}
	For sufficiently large delay price $\boldsymbol{\gamma}$, the fluid value function \tcb{$J\left(\mathbf{q};\mathbf{0}\right)$} of the  base VCTS has the following decomposable structure:
	\begin{equation}   \label{linearA}
		\tcb{J\left(\mathbf{q};\mathbf{0}\right)}= \sum_{k=1}^K J_k \left(q_k \right), \quad  \mathbf{q} \in \overline{\boldsymbol{\mathcal{Q}}}
	\end{equation} 
	where  $J_k \left(q_k \right)$ is the \emph{per-flow fluid value function} for the $k$-th data flow given by:
	\begin{equation}	\label{pareJkEg2}
 \left\{
	\begin{aligned}	 
		q_k(y) &=  \frac{\lambda_k}{\gamma_k} \left(R_k  e^{-\frac{a_k}{R_k  y}}y-\lambda_k y -a_k  E_1\left(\frac{a_k}{R_k  y} \right)+ {c_k^{\infty}}\right)		 \\
		J_k(y) &=   \frac{\lambda_k}{\gamma_k} \left(\frac{ \left(R_k  y - a_k \right)}{2}ye^{-\frac{a_k}{R_k  y}} - \frac{\lambda_k}{2}y^2 \right.  \\
		& \left. +  \frac{a_k^2}{2 R_k } E_1\left(\frac{a_k}{R_k  y} \right)\right) + b_k	
	   \end{aligned}
   \right.
 	 \end{equation}
where $a_k \triangleq 2^{R_k}-1$, {$c_k^{\infty}=a_k E_1\left(\log \frac{R_k }{\lambda_k} \right)$} and $E_1(x) \triangleq \int_1^{\infty} \frac{e^{-tx}}{t}\mathrm{d}t$ is the exponential integral function.  $b_k$ is chosen to satisfy\footnote{To find $b_k$, firstly solve $q_k(y_k^0)=0$ using one-dimensional search techniques (e.g., bisection method). Then $b_k$ is chosen such that $J_k(y_k^0)=0$.} the boundary condition $J_k(0)=0$.~\hfill\IEEEQED
\end{Lemma}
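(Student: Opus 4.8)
The plan is to exploit the decoupling of the $K$ flows when $\boldsymbol{\epsilon}=\mathbf{0}$ and then solve a one-dimensional HJB equation for each flow in closed form. First I would justify the separable ansatz (\ref{linearA}). Since for sufficiently large $\boldsymbol{\gamma}$ the optimal beamforming for Problem \ref{fluid problem1} is ZF (as argued in the footnote, the regime is effectively high-SNR and ZF is asymptotically optimal when $\boldsymbol{\epsilon}=\mathbf{0}$), the interference terms in (\ref{mutualinfo}) vanish, so $C_k$ and hence the goodput $G_k$ of flow $k$ depend only on $\mathbf{h}_k$ and $\mathbf{w}_k$. Consequently both the per-stage cost $\sum_k(\|\mathbf{w}_k\|^2+\gamma_k|q_k|/\lambda_k)$ and the drift terms in the HJB equation (\ref{cenHJB}) separate across $k$, and the boundary condition $J(\mathbf{0};\mathbf{0})=0$ is compatible with $\sum_k J_k(0)=0$. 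I would then substitute $J(\mathbf{q};\mathbf{0})=\sum_k J_k(q_k)$ into (\ref{cenHJB}) and observe that the minimization and the equation decompose into $K$ independent scalar HJB equations of the form
\begin{align}
\min_{\Omega_k^v(q_k)}\mathbb{E}\Big[\|\mathbf{w}_k\|^2+\gamma_k\tfrac{q_k}{\lambda_k}-c_k^\infty+\tfrac{\mathrm{d}J_k(q_k)}{\mathrm{d}q_k}\big(\lambda_k-G_k(\mathbf{h}_k,\mathbf{w}_k)\big)\Big|q_k\Big]=0,\notag
\end{align}
with $c_k^\infty=\mathbb{E}[\|\mathbf{w}_k^\infty\|^2\,|\,q_k^\infty]$.

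The second step is to carry out the inner minimization over the ZF beamformer $\mathbf{w}_k$ for a fixed CSIT realization. Under ZF, $\mathbf{w}_k$ is aligned so that $|\mathbf{h}_k\mathbf{w}_k|^2$ is essentially the allocated power times the effective channel gain, and with the i.i.d. unit-variance Gaussian model of Assumption \ref{CSIassum} one can compute the conditional distribution of the effective gain; the goodput is $R_k$ when the allocated power exceeds the threshold $a_k=2^{R_k}-1$ divided by that gain, and $0$ otherwise. Taking the expectation over the exponential channel gain turns the scalar HJB into an ordinary differential equation for $J_k$ whose right-hand side contains exponential and exponential-integral terms — this is where $e^{-a_k/(R_k y)}$ and $E_1(a_k/(R_k y))$ enter. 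I would then verify by direct differentiation that the pair $(q_k(y),J_k(y))$ in (\ref{pareJkEg2}), viewed as a parametric curve in the parameter $y$, satisfies this ODE; the relation $\mathrm{d}J_k/\mathrm{d}q_k=(\mathrm{d}J_k/\mathrm{d}y)/(\mathrm{d}q_k/\mathrm{d}y)$ should reproduce the optimality condition, and the constant $c_k^\infty=a_kE_1(\log(R_k/\lambda_k))$ should be exactly the value that makes the steady-state relation $\mathbb{E}[G_k|q_k^\infty]=\lambda_k$ of Definition \ref{defsteadyconrt} hold at $q_k^\infty=0$. Finally $b_k$ is fixed by the boundary condition: solve $q_k(y_k^0)=0$ for $y_k^0$ and set $b_k=-\big(\text{the }J_k\text{-expression evaluated at }y_k^0\big)$ so that $J_k(0)=0$.

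The main obstacle I anticipate is the inner power-allocation minimization and the resulting integration against the channel-gain distribution: one must correctly identify the optimal on/off (threshold) power-control structure implied by the goodput indicator $\mathbf{1}(R_k\le C_k)$, set up the expectation over the exponential gain with the right truncation, and recognize the resulting integral as the exponential integral $E_1$. Getting the parametric representation right — i.e., realizing that $J_k$ and its argument $q_k$ are both most naturally written as functions of an auxiliary variable $y$ rather than $J_k$ directly as a function of $q_k$ — is the key structural insight; once that substitution is made, verifying (\ref{pareJkEg2}) is a routine (if lengthy) differentiation. A secondary point to check is that the separable solution indeed satisfies the two technical conditions of Lemma \ref{HJB1} (the limiting conditions on $J$ along admissible trajectories and along the optimal trajectory), which legitimizes calling $\sum_k J_k$ the fluid value function; this follows from the boundedness and growth properties of $J_k$ together with the fact that the steady state is $q_k^\infty=0$.
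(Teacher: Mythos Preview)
Your proposal is correct and follows essentially the same route as the paper: decouple via ZF at $\boldsymbol{\epsilon}=\mathbf{0}$, reduce the HJB to $K$ scalar equations, solve the inner power minimization to obtain the on/off threshold rule, average over the exponential effective gain $|\mathbf{h}_k\widetilde{\mathbf{w}}_k|^2\sim\exp(1)$ to produce the $e^{-a_k/(R_ky)}$ and $E_1$ terms, fix $c_k^\infty$ from the steady-state condition, and then recognize the resulting ODE admits the parametric solution (\ref{pareJkEg2}). The only cosmetic difference is that the paper invokes a handbook entry for the parametric ODE solution, whereas you propose to verify it by direct differentiation via $\mathrm{d}J_k/\mathrm{d}q_k=(\mathrm{d}J_k/\mathrm{d}y)/(\mathrm{d}q_k/\mathrm{d}y)$; both are fine.
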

\begin{proof} 
please refer to Appendix D.
\end{proof}

The following corollary summarizes the asymptotic property of the per-flow fluid value function $J_k \left( q_k \right)$ in Lemma \ref{linearAp}.
\begin{Corollary}	[Asymptotic Property of $J_k \left( q_k \right)$]   	\label{PropJk}
	\begin{align}	\label{asympoticJ_k}
		J_k \left( q_k \right) = \frac{\gamma_k}{2\lambda_k(R_k -\lambda_k)} q_k^2 + o(q_k^2), \quad \text{as } q_k \rightarrow \infty
	\end{align}~\hfill\IEEEQED
\end{Corollary}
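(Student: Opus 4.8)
The plan is to work directly with the closed-form parametric representation of $J_k$ in Lemma~\ref{linearAp}, viewing $y$ as the underlying parameter and analyzing the regime $y\to\infty$, which (as I will verify) corresponds to $q_k\to\infty$. Write $x \triangleq \tfrac{a_k}{R_k y}\to 0^+$ as $y\to\infty$ and substitute the Taylor expansion $e^{-x}=1-x+\tfrac{x^2}{2}+\mathcal{O}(x^3)$ together with the standard series $E_1(x)=-\gamma_{\mathrm E}-\ln x+x+\mathcal{O}(x^2)$ (with $\gamma_{\mathrm E}$ the Euler--Mascheroni constant) into the first line of (\ref{pareJkEg2}). The term $R_k e^{-x}y$ contributes $R_k y-a_k+\mathcal{O}(1/y)$, the term $-a_k E_1(x)$ contributes $-a_k\ln y+\mathcal{O}(1)$, and collecting these with $-\lambda_k y$ and the constants yields
\begin{align}
\frac{\gamma_k}{\lambda_k}\,q_k(y)=(R_k-\lambda_k)\,y-a_k\ln y+\mathcal{O}(1),\qquad y\to\infty.\notag
\end{align}
Since $R_k>\lambda_k$ by the stability assumption (Assumption~\ref{assumeA}), this shows $q_k(y)\to\infty$; a direct computation gives $q_k'(y)>0$ for all sufficiently large $y$, so $y\mapsto q_k(y)$ is eventually a strictly increasing bijection, $J_k$ is a well-defined function of $q_k$ for $q_k$ large, and $q_k\to\infty\Leftrightarrow y\to\infty$. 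Inverting the displayed relation gives $y=\tfrac{\gamma_k}{\lambda_k(R_k-\lambda_k)}\,q_k+\mathcal{O}(\ln q_k)$.

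Next I would perform the same expansion on the second line of (\ref{pareJkEg2}). The term $\tfrac{1}{2}(R_k y-a_k)y e^{-x}$ expands to $\tfrac{R_k}{2}y^2-a_k y+\mathcal{O}(1)$, the term $\tfrac{a_k^2}{2R_k}E_1(x)$ is $\mathcal{O}(\ln y)$, and together with $-\tfrac{\lambda_k}{2}y^2$ and the constant $b_k$ this gives
\begin{align}
J_k(y)=\frac{\lambda_k(R_k-\lambda_k)}{2\gamma_k}\,y^2+\mathcal{O}(y),\qquad y\to\infty.\notag
\end{align}
Substituting $y=\tfrac{\gamma_k}{\lambda_k(R_k-\lambda_k)}q_k+\mathcal{O}(\ln q_k)$, so that $y^2=\tfrac{\gamma_k^2}{\lambda_k^2(R_k-\lambda_k)^2}q_k^2+\mathcal{O}(q_k\ln q_k)$, and collecting terms yields $J_k(q_k)=\tfrac{\gamma_k}{2\lambda_k(R_k-\lambda_k)}q_k^2+\mathcal{O}(q_k\ln q_k)$; since $\mathcal{O}(q_k\ln q_k)=o(q_k^2)$ this is exactly (\ref{asympoticJ_k}).

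The two series expansions are routine; the only point requiring care is the parameter-inversion step, where one must track that the sub-leading $\mathcal{O}(\ln y)$ correction in $q_k(y)$ perturbs $y^2$ (and hence $J_k$) only at order $\mathcal{O}(q_k\ln q_k)=o(q_k^2)$, leaving the quadratic coefficient $\tfrac{\gamma_k}{2\lambda_k(R_k-\lambda_k)}$ unchanged. I expect this error bookkeeping through the inversion --- rather than any of the explicit computations, which are immediate from Lemma~\ref{linearAp} --- to be the main (and still modest) obstacle.
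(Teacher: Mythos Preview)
Your proposal is correct and shares the same starting point as the paper --- the series expansions of $e^{-x}$ and $E_1(x)$ applied to the parametric solution of Lemma~\ref{linearAp} --- but the routes diverge at the second step. You extract both the order and the leading coefficient directly from the parametric representation by explicitly inverting $y=y(q_k)$ and substituting into $J_k(y)$. The paper instead uses the parametric form only to establish the crude two-sided bound $J_k(q_k)=\mathcal{O}(q_k^2)$ (via sandwich inequalities $C_1 y\le q_k(y)\le C_1' y$ and $C_2 y^2\le J_k(y)\le C_2' y^2$), and then returns to the per-flow ODE (\ref{transformODEapp}) to read off $J_k'(q_k)=\tfrac{\gamma_k}{\lambda_k(R_k-\lambda_k)}q_k+o(q_k)$, from which the coefficient follows by matching. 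Your approach is more self-contained and yields the sharper remainder $\mathcal{O}(q_k\ln q_k)$ for free; the paper's detour through the ODE avoids the parameter-inversion bookkeeping you flagged but requires re-invoking the differential equation and an extra coefficient-matching argument.
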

\begin{proof} 
please refer to Appendix E.
\end{proof}

Next, we study the VCTS for small $\boldsymbol{\epsilon}$ by treating it as a perturbation of the base VCTS. Using perturbation analysis, we establish the following theorem on the approximation error between \tcb{$J\left(\mathbf{q};\boldsymbol{\epsilon} \right)$} and \tcb{$J\left(\mathbf{q};\mathbf{0}\right)$}.
\begin{Theorem}	[Approximation Error]	\label{ErrorEg2}	
	 The approximation error between between \tcb{$J\left(\mathbf{q};\boldsymbol{\epsilon} \right)$} and \tcb{$J\left(\mathbf{q};\mathbf{0}\right)$} is given by
\begin{align}	
		&\tcb{J\left(\mathbf{q};\boldsymbol{\epsilon} \right) }   =   \tcb{J\left(\mathbf{q};\mathbf{0}\right)}, \qquad \text{as } q_k \rightarrow \infty \ \forall k, \epsilon \rightarrow 0 \notag \\
		&+\sum_{k=1}^K  \sum_{j \neq k} \epsilon_k D_{kj} \left(q_k q_j\log q_j  + o\left(q_k q_j \log q_j \right) \right)+\mathcal{O}\left(\epsilon^2 \right)
	\end{align}
	where \tcb{$J\left(\mathbf{q};\mathbf{0}\right)$} is given in (\ref{linearA}), $\epsilon=\min_k{\epsilon_k}$ and $D_{kj} =\frac{\gamma_k(2^{R_k}-1)(2^{R_j}-1)}{\lambda_k (R_k  - \lambda_k)(R_j  - \lambda_j) 2^{R_k-1}\ln 2}$.~\hfill\IEEEQED
\end{Theorem}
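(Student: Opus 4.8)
The plan is to treat the VCTS with coupling parameter $\boldsymbol\epsilon$ as a regular perturbation of the \emph{base VCTS} of Definition~\ref{base_sys} and to expand the fluid value function as $J(\mathbf q;\boldsymbol\epsilon) = J(\mathbf q;\mathbf 0) + \sum_{l=1}^{K}\epsilon_l\,\widetilde J^{(l)}(\mathbf q) + \mathcal O(\epsilon^2)$, where $J(\mathbf q;\mathbf 0)=\sum_k J_k(q_k)$ is the decomposable base value function from Lemma~\ref{linearAp}. The first step is to expose the $\boldsymbol\epsilon$-dependence of the VCTS drift in~(\ref{VCTSeg2}). Under the ZF beamforming that is asymptotically optimal for large $\boldsymbol\gamma$, the beamformers null the \emph{estimated} channels, so the residual interference at mobile $k$ is $\mathbf h_k\mathbf w_j=-\sqrt{\epsilon_k}\,\mathbf v_k\mathbf w_j$ for $j\neq k$; feeding this into the mutual information~(\ref{mutualinfo}) and then into the conditional PER, a first-order Taylor expansion of the conditional goodput in $\epsilon_k$ gives $\mathbb E[G_k(\mathbf H,\mathbf w)\mid\mathbf q]=\mu_k^{(0)}(\mathbf q)-\epsilon_k\,\xi_k(\mathbf q)+\mathcal O(\epsilon^2)$, where $\mu_k^{(0)}$ is the interference-free base goodput and $\xi_k(\mathbf q)>0$ is the sensitivity of $R_k\Pr[R_k\le C_k\mid\hat{\mathbf H},\mathbf q]$ to $\epsilon_k$ at $\epsilon_k=0$; it is proportional to $(2^{R_k}-1)$, to the density of the effective ZF channel gain evaluated at $2^{R_k}-1$, and to $\sum_{j\neq k}\|\mathbf w_j^{(0)}(\mathbf q)\|^2$.

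Next, I would substitute both expansions into the HJB equation~(\ref{cenHJB}). Since $J(\mathbf q;\mathbf 0)$ solves the base HJB equation, the zeroth-order terms cancel; and by an envelope argument (the beamformers sit at the minimiser) the first-order variation of the minimised Hamiltonian sees only the explicit $\epsilon$-dependence. Collecting the coefficient of $\epsilon_l$ then yields a linear first-order transport PDE for the correction,
$$ \sum_{k=1}^{K}\big(\lambda_k-\mu_k^{(0)}(\mathbf q)\big)\frac{\partial\widetilde J^{(l)}(\mathbf q)}{\partial q_k}=-J_l'(q_l)\,\xi_l(\mathbf q), $$
with the boundary condition $\widetilde J^{(l)}(\mathbf 0)=0$ inherited from $J(\mathbf 0;\boldsymbol\epsilon)=0$ (which also absorbs any $\mathcal O(\epsilon)$ shift in $c^\infty$). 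The characteristic curves of this PDE are exactly the optimal base-VCTS trajectories, which are decoupled across flows, so the solution admits the integral representation $\widetilde J^{(l)}(\mathbf q)=\int_0^\infty J_l'(q_l(t))\,\xi_l(\mathbf q(t))\,\mathrm{d}t$ along the base trajectory started at $\mathbf q$.

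The third step is to evaluate this integral asymptotically as $q_k\to\infty$ for all $k$ (and $\epsilon\to 0$), using Corollary~\ref{PropJk}, which gives $J_l'(q_l)\sim\frac{\gamma_l}{\lambda_l(R_l-\lambda_l)}q_l$, together with the large-queue behaviour of the base drift and of the optimal power allocation $\|\mathbf w_j^{(0)}\|^2$ read off the parametrisation~(\ref{pareJkEg2}). The logarithmic factor $\log q_k$ in the answer arises precisely through the exponential-integral ($E_1$) correction that relates the virtual queue $q_k$ to the drain-time parameter of the base system in~(\ref{pareJkEg2}). Matching the leading order of the integral gives $\widetilde J^{(l)}(\mathbf q)=\sum_{k\neq l}D_{lk}\big(q_lq_k\log q_k+o(q_lq_k\log q_k)\big)$ with $D_{lk}$ as stated; re-indexing $\sum_l\epsilon_l\widetilde J^{(l)}$ produces the claimed double sum. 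The $\mathcal O(\epsilon^2)$ remainder is the second-order Taylor remainder of $\boldsymbol\epsilon\mapsto J(\mathbf q;\boldsymbol\epsilon)$, whose legitimacy rests on the twice-differentiability of the HJB solution in $\boldsymbol\epsilon$ near $\mathbf 0$, which I would argue from the regularity hypotheses built into Lemma~\ref{HJB1} and Theorem~\ref{them111}.

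I expect the main obstacle to be this third step: the asymptotic evaluation of the transport integral and the exact identification of $D_{lk}$ and of the $q_lq_k\log q_k$ growth order. This demands a delicate joint asymptotic analysis, along the base trajectory, of the drift $\mu_k^{(0)}(\mathbf q)$, of $\xi_l(\mathbf q)$ (hence of the optimal powers and of the ZF channel-gain density at $2^{R_k}-1$), and of $J_l'(q_l)$, carefully retaining the sub-leading $E_1$ contributions that generate the logarithm. A secondary difficulty is rigorously justifying the perturbation ansatz and the envelope step, since $J(\mathbf q;\boldsymbol\epsilon)$ is defined only implicitly through the HJB equation rather than in closed form.
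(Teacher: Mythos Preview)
Your proposal is essentially the same perturbation-of-the-HJB argument the paper uses in Appendix~F: expand $J(\mathbf q;\boldsymbol\epsilon)$ to first order in $\boldsymbol\epsilon$, invoke an envelope/parametric-optimization step so that only the explicit $\epsilon$-dependence of the drift survives, obtain a linear first-order transport PDE for the correction with the base drift as characteristic field, and evaluate asymptotically for large $\mathbf q$ using $J_k'(q_k)\sim\tfrac{\gamma_k}{\lambda_k(R_k-\lambda_k)}q_k$ and the $E_1$ growth of the optimal power.

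There are a few presentational differences worth noting. First, the paper handles the differentiation of the indicator in $G_k$ by replacing it with a logistic sigmoid $f^\eta(x,y)=(1+e^{\eta(x-y)})^{-1}$ and then differentiating; you instead differentiate $\Pr[R_k\le C_k]$ directly through the density of the effective ZF channel gain at the threshold $2^{R_k}-1$, which is cleaner and avoids the ad~hoc choice $\eta=5$. Second, the paper organises the first-order correction directly as a sum over pairs, writing $J(\mathbf q;\boldsymbol\epsilon)-J(\mathbf q;\mathbf 0)=\sum_k\sum_{j\neq k}\epsilon_k\widetilde J_{kj}(\mathbf q)+\mathcal O(\epsilon^2)$ and deriving a separate PDE for each $\widetilde J_{kj}$, whereas you keep a single $\widetilde J^{(l)}$ and decompose at the end; this is equivalent. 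Third, the paper solves the transport PDE by citing a handbook, while you use the method of characteristics explicitly; again the same thing. Finally, your identification of the logarithm is correct in spirit but slightly imprecise: in the paper the $\log q_j$ comes specifically from $\mathbb E[p_j^\ast]=(2^{R_j}-1)E_1\big(\tfrac{2^{R_j}-1}{J_j'(q_j)R_j}\big)\sim(2^{R_j}-1)\log q_j$, i.e.\ from the logarithmic growth of the interferer's optimal power, rather than from the parametrisation of $q_k(y)$ itself. One small omission: the PDE in the paper carries a constant $\widetilde c_k^\infty=\partial c^\infty/\partial\epsilon_k$ on the right-hand side, which you fold into the boundary condition; this is harmless for the leading-order $q_kq_j\log q_j$ term since the constant only contributes at lower order.
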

\begin{proof} 
please refer to Appendix F.		
\end{proof}
\begin{figure}
\centering
\subfigure[CSIT error variance $\epsilon_k=0.01$. The delay prices are $\gamma_k = 10$, $\gamma_k = 20$ and $\gamma_k = 30$, which correspond to the average delay per flow being 3.3pcks, 3pcks and 2.8pcks (under optimal policy), respectively.]{
\includegraphics[width=3.49in]{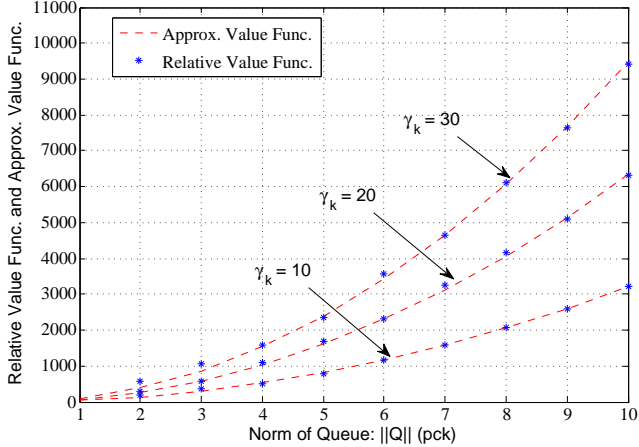}}
\hspace{0.5cm}
\subfigure[CSIT error variance $\epsilon_k=0.4$. The delay prices are $\gamma_k = 10$, $\gamma_k = 20$ and $\gamma_k = 30$, which correspond to the average delay per flow being 5.7pcks, 4.5pcks and 3.4pcks (under optimal policy), respectively.]{
\includegraphics[width=3.5in]{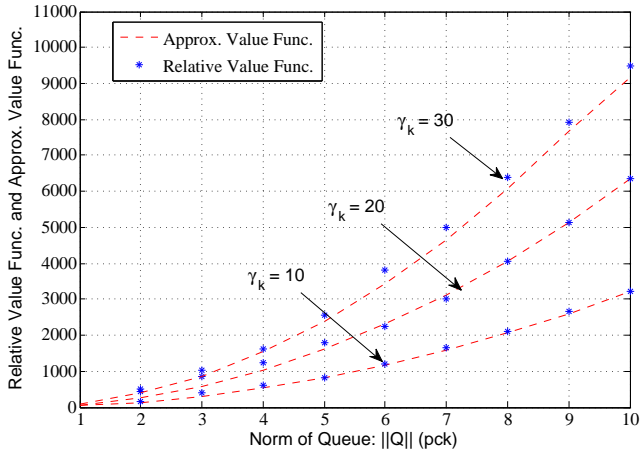}}
\caption{Relative value function $V^\ast \left(\mathbf{Q} \right)$ and approximate value function  $\widetilde{V}\left(\mathbf{Q} \right)$ versus the norm of the global queue state $\| \mathbf{Q} \|$ with $\mathbf{Q}=\{Q_1, 2, \dots, 2 \}$. The system parameters are configured as in Fig.~\ref{fig1vs} in  Section \ref{simulationP}. Note that the relative value functions are calculated using relative value iteration \cite{DP_Bertsekas}.}
\label{approx}
\end{figure}

%\tcb{Furthermore, based on  Corollary \ref{PropJk} and Theorem \ref{ErrorEg2}, we have $J\left(\mathbf{q}; \boldsymbol{\epsilon} \right)=\mathcal{O}\left(\|\mathbf{q}\|^2 \right)$. Therefore, $J\left(\mathbf{q}; \boldsymbol{\epsilon} \right)$ satisfies the order requirement in Lemma \ref{RelashipVJ}.}

\tcb{In practice, the CSIT error in the downlink MU-MIMO systems cannot be too large. Otherwise, the multi-user interference will severely limit the system performance of spatial multiplexing. As a result, it is important to consider the regime when $\boldsymbol{\epsilon}$ is small.}  We then obtain the following closed-form approximation of the relative value function: 
\begin{align}
	V^\ast\left(\mathbf{Q} \right) \approx \widetilde {V}\left(\mathbf{Q} \right) \triangleq \sum_{k=1}^K J_k \left(q_k \right)  +\sum_{k=1}^K \sum_{j \neq k} \epsilon_k D_{kj} q_k q_j\log q_j 	\label{perflowapp}
\end{align}
\tcb{Furthermore, based on Corollary \ref{PropJk}  and (\ref{perflowapp}), we have  $\left\{\frac{\partial \widetilde{V}\left(\mathbf{Q}\right)}{\partial Q_k}: \forall k \right\}$  are increasing functions of all $Q_k$ and $\widetilde{V}\left(\mathbf{Q}\right)=\mathcal{O}\big(\sum_{k=1}^K Q_k^2 \log Q_k \big)$.  Based on Theorem \ref{them111} and Theorem 3, the approximation error between the optimal value function $V^\ast \left(\mathbf{Q} \right)$ in Theorem \ref{LemBel} and the closed-form approximate value function  $\widetilde {V}\left(\mathbf{Q} \right)$ in (\ref{perflowapp})  is $\mathcal{O}(\epsilon)+ o(1)$. In other words, the error terms are asymptotically small w.r.t. the CSIT error variance and the slot duration.} Fig.~\ref{approx} illustrates the quality of the approximation. In the next section, we  derive a low complexity control policy using the closed-form approximate value function $\widetilde {V}\left(\mathbf{Q} \right)$ in (\ref{perflowapp}).

\begin{Remark}	[\tcb{Computational Complexity Analysis}]
\tcb{In conventional MDP/POMDP approaches, numerical methods such as value iteration or policy iteration \cite{Cao}, \cite{DP_Bertsekas} are used to obtain the relative value function, which has exponential complexity in  $K$ (where $K$  is the number of MSs). In  our framework, a closed-form approximate value function in (\ref{perflowapp}) is derived.  As a result, our solution has much lower complexity compared with the conventional brute-force value iterations, which is  illustrated in Table \ref{tabletime} in Section \ref{simulationP}.~\hfill\IEEEQED}  
\end{Remark}

\section{Low complexity Delay-Aware  Beamforming Control}   \label{KnTbased}
In this section, we use the closed-form approximate value function  in (\ref{perflowapp}) \tcb{to capture the urgency information of the $K$  data flows} and to  obtain low complexity delay-aware beamforming  control. The problem is still quite challenging because it is non-convex and there is no closed-form expression for the conditional PER. We first build a  tractable closed-form approximation of the conditional PER using \emph{Bernstein approximation} and propose a conservative formulation of the beamforming control problem. We then apply semidefinite relaxation (SDR) technique \cite{luo}, \cite{sdr2} to transform the conservative formulation into a convex problem   and propose an alternating iterative algorithm to efficiently solve the SDR problem.

\subsection{Equivalent Chance Constraint  Problem}
Using the approximate value function in (\ref{perflowapp}), we transform the problem in (\ref{conperr}) into a collection of  chance constraint problems (w.r.t. each observed state realization $\big(\hat{\mathbf{H}},\mathbf{Q}\big)$)  as shown in the following lemma:
\begin{Lemma} [Equivalent Chance Constraint Problem]	\label{equperccp}
Using the approximate value function in (\ref{perflowapp}),  the problem in (\ref{conperr}) can be transformed into the a collection of  chance constraint problems  w.r.t. each given observed state realization $\big(\hat{\mathbf{H}},\mathbf{Q} \big)$:
	\begin{subequations}	\label{ApproxPol2}
	\begin{eqnarray}	
	    & \hspace{-1cm}\underset{\mathbf{w}, \boldsymbol{\rho}}\min&   \ \sum_{k=1}^K \Big(\|\mathbf{w}_k\|^2 - \frac{\partial \widetilde {V}\left(\mathbf{Q}\right)}{\partial Q_k}  R_k    \left( 1- \rho_k\right)  \Big)	\label{objorgft} \\
		&\hspace{-1cm} \text{s.t.        }&    \  \Pr\Big[ R_k\leq C_k\left( \mathbf{H}, \mathbf{w}\right) \big| \hat{\mathbf{H}},\mathbf{Q} \Big] \geq 1- \rho_k,  \forall k	\label{percsitcons}	\\
		&\hspace{-1cm} \text{       }&    \  0 \leq \rho_k \leq 1, \quad \forall k
	\end{eqnarray}
	\end{subequations}
	where $\boldsymbol{\rho}   \triangleq  \{\rho_k: \forall k \}$ is the collection of the conditional PER targets of all the $K$ data flows.
\end{Lemma}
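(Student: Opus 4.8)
The plan is to transform (\ref{conperr}), with $V$ replaced by the closed-form approximation $\widetilde{V}$ of (\ref{perflowapp}), into a separable family of optimization problems indexed by the observed state realization $\big(\hat{\mathbf{H}},\mathbf{Q}\big)$. First I would expand the per-stage cost $\widetilde{c}\big(\mathbf{Q},\Omega(\mathbf{Q})\big)=\mathbb{E}\big[\sum_k(\|\mathbf{w}_k\|^2+\gamma_k Q_k/\lambda_k)\,\big|\,\mathbf{Q}\big]$ and observe that $\sum_k\gamma_k Q_k/\lambda_k$ and $\sum_k\lambda_k\,\partial\widetilde{V}(\mathbf{Q})/\partial Q_k$ do not depend on $\Omega(\mathbf{Q})$; together with the constant $\theta$ they drop out of the minimization. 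Using $\Pr[R_k>C_k(\mathbf{H},\mathbf{w})\,|\,\hat{\mathbf{H}},\mathbf{Q}]=1-\Pr[R_k\le C_k(\mathbf{H},\mathbf{w})\,|\,\hat{\mathbf{H}},\mathbf{Q}]$ and pulling the deterministic factor $\partial\widetilde{V}(\mathbf{Q})/\partial Q_k$ outside the inner expectation, the minimization in (\ref{conperr}) collapses to $\min_{\Omega(\mathbf{Q})}\mathbb{E}\big[\sum_k\big(\|\mathbf{w}_k\|^2-\tfrac{\partial\widetilde{V}(\mathbf{Q})}{\partial Q_k}R_k\Pr[R_k\le C_k(\mathbf{H},\mathbf{w})\,|\,\hat{\mathbf{H}},\mathbf{Q}]\big)\,\big|\,\mathbf{Q}\big]$.

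Second I would invoke the partitioned-action structure of Definition \ref{paract}: the decision object $\Omega(\mathbf{Q})=\{\mathbf{w}=\Omega(\hat{\mathbf{H}},\mathbf{Q}):\forall\hat{\mathbf{H}}\}$ is a family of beamformers, one per CSIT value, the outer expectation is over $\hat{\mathbf{H}}$, and the integrand depends on $\Omega(\mathbf{Q})$ only through the single realization $\mathbf{w}=\Omega(\hat{\mathbf{H}},\mathbf{Q})$, with no constraint linking distinct values of $\hat{\mathbf{H}}$. Hence the minimization and the conditional expectation can be interchanged, so that for each given observed realization $\big(\hat{\mathbf{H}},\mathbf{Q}\big)$ one is left with the pointwise problem $\min_{\mathbf{w}}\sum_k\big(\|\mathbf{w}_k\|^2-\tfrac{\partial\widetilde{V}(\mathbf{Q})}{\partial Q_k}R_k\Pr[R_k\le C_k(\mathbf{H},\mathbf{w})\,|\,\hat{\mathbf{H}},\mathbf{Q}]\big)$.

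Third I would linearize the residual probability term by introducing per-flow slack variables $\rho_k\in[0,1]$: impose the chance constraints $\Pr[R_k\le C_k(\mathbf{H},\mathbf{w})\,|\,\hat{\mathbf{H}},\mathbf{Q}]\ge1-\rho_k$ and replace $\Pr[R_k\le C_k\,|\,\hat{\mathbf{H}},\mathbf{Q}]$ in the objective by $1-\rho_k$, which yields exactly (\ref{objorgft})--(\ref{percsitcons}). To show this reformulation is lossless, I would use $\partial\widetilde{V}(\mathbf{Q})/\partial Q_k\ge0$, which holds since $\widetilde{V}$ is increasing in each $Q_k$ as noted after (\ref{perflowapp}) via Corollary \ref{PropJk}: the objective is nonincreasing in each $1-\rho_k$, so for any fixed $\mathbf{w}$ the optimal slack is the largest feasible one, $1-\rho_k=\min\{1,\Pr[R_k\le C_k\,|\,\hat{\mathbf{H}},\mathbf{Q}]\}=\Pr[R_k\le C_k\,|\,\hat{\mathbf{H}},\mathbf{Q}]$; substituting this back recovers the pointwise problem of the previous step with the same optimal value, and its minimizer $\mathbf{w}^\star$ (augmented with $\rho_k^\star=\Pr[R_k>C_k(\mathbf{H},\mathbf{w}^\star)\,|\,\hat{\mathbf{H}},\mathbf{Q}]$) is optimal for (\ref{ApproxPol2}).

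The step I expect to be the main obstacle is the second one: rigorously justifying that minimizing over the partitioned action $\Omega(\mathbf{Q})$ coincides with solving the decoupled pointwise problems and reassembling their minimizers into an admissible (in particular, measurable) control policy. This is a measurable-selection argument that relies on the absence of any coupling constraint across CSIT realizations and on the admissibility conditions in Definitions \ref{deff1} and \ref{admissibledis}; by contrast, the constant-dropping in the first step and the slack-variable reformulation with the sign argument on $\partial\widetilde{V}/\partial Q_k$ in the third step are routine.
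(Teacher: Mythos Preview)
Your proposal is correct and follows essentially the same route as the paper's proof in Appendix~G, which is extremely terse: it simply states that (\ref{conperr}) with $\widetilde{V}$ reduces to the per-realization problem $\min_{\mathbf{w}}\sum_k\big(\|\mathbf{w}_k\|^2-\tfrac{\partial\widetilde V(\mathbf{Q})}{\partial Q_k}R_k(1-\Pr[R_k>C_k\mid\hat{\mathbf{H}},\mathbf{Q}])\big)$ and then introduces $\rho_k=\Pr[R_k>C_k\mid\hat{\mathbf{H}},\mathbf{Q}]$ as an auxiliary variable. Your three steps (drop action-independent constants, decouple across $\hat{\mathbf{H}}$ via the partitioned-action structure, introduce slacks $\rho_k$) are exactly this argument unpacked; in particular, your sign observation $\partial\widetilde V/\partial Q_k\ge 0$ supplies the justification---glossed over in the paper---for why the \emph{inequality} constraint (\ref{percsitcons}) is equivalent to the equality $\rho_k=\Pr[R_k>C_k\mid\hat{\mathbf{H}},\mathbf{Q}]$ at optimum. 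The measurable-selection concern you flag is real in principle but is not addressed in the paper either; at the level of rigor adopted here it is safely treated as routine.
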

\begin{proof}
	please refer to Appendix G.
\end{proof}
\begin{Remark}	[Interpretation of Lemma \ref{equperccp}]
	Instead of a fixed conditional PER target $\rho_k$ for each data flow, the conditional PER obtained by  solving the problem in (\ref{ApproxPol2}) is  adaptive to the CSIT and QSI. Specifically, when the data flow is  urgent\footnote{When $Q_k$ is large, the associated weight $\frac{\partial \widetilde {V}\left(\mathbf{Q}\right)}{\partial Q_k}  $ for $\rho_k$ in (\ref{objorgft}) gets larger compared with the weights  $\frac{\partial \widetilde {V}\left(\mathbf{Q}\right)}{\partial Q_j}  $ for $j \neq k$ (since the increase of $\frac{\partial \widetilde {V}\left(\mathbf{Q}\right)}{\partial Q_j}$ when $Q_k$ is large is $\mathcal{O}(\epsilon)$). Hence, the optimized  $\rho_k$ will be small.} (i.e., $Q_k$ is large), $\rho_k$ will be smaller indicating that the system tends to be more aggressive to transmit information bits.~\hfill\IEEEQED
\end{Remark}
%\begin{Remark} [\tcb{Low Complexity Solution of the Problem in (\ref{ApproxPol2})}]
%\tcb{Instead of using conventional numerical methods to obtain the relative value function and the control policy \cite{Cao}, \cite{DP_Bertsekas}, we use the closed-form approximate value function   $\widetilde{V}(\mathbf{Q})$ to obtain an equivalent formulation to the original problem. By solving this equivalent problem, we are able to a low complexity beamforming control solution.~\hfill\IEEEQED}
%\end{Remark}

Note that the above chance constrained problem is difficult to solve since the conditional PER constraint in (\ref{percsitcons}) does not have closed-form expression. In the next subsection, we  obtain a  tractable closed-form  approximation of  the conditional PER constraint using Bernstein approximation \cite{Chancecon1}, \cite{fixedper}.

\subsection{Bernstein Approximation of Conditional PER Constraint}
In this part, we  obtain a closed-form  approximation of the conditional PER constraint  based on a Bernstein-type inequality \cite{Chancecon1}, \cite{fixedper}.   We first express the conditional PER constraint in (\ref{percsitcons}) in the following equivalent form:
\begin{align}
	\Pr \left[ \mathbf{v}_k\mathbf{M}_k\big(\mathbf{w} \big) \mathbf{v}_k^{\dagger}  + 2 \mathrm{Re}\left\{  \mathbf{v}_k \mathbf{z}_k\big(\mathbf{w}\big) \right\} \geq e_k\big(\mathbf{w}\big)\right] \geq 1 - \rho_k		\label{transchan}
\end{align}
where $\mathbf{v}_k$ is the normalized Gaussian random vector  in Assumption \ref{impCSITm} and  $\mathrm{Re}\left\{ \cdot\right\}$ denotes the real part of the associated argument. $\mathbf{M}_k$, $ \mathbf{z}_k$ and $ e_k$ are given as below
\begin{eqnarray}
	 &&\mathbf{M}_k\big(\mathbf{w}\big) \triangleq \epsilon_k \big(\frac{1}{2^{R_k}-1} w_kw_k^{\dagger}- \sum_{j \neq k} w_jw_j^{\dagger} \big),  \notag \\
	 &&\mathbf{z}_k\big(\mathbf{w}\big) 	\triangleq \sqrt{\epsilon_k} \big(\frac{1}{2^{R_k}-1} w_kw_k^{\dagger}- \sum_{j \neq k}  w_jw_j^{\dagger}\big) \hat{\mathbf{h}}_k^{\dagger}		\notag \\
	 &&e_k\big(\mathbf{w} \big)  \triangleq 	1-  \hat{\mathbf{h}}_k \big(\frac{1}{2^{R_k}-1}  w_kw_k^{\dagger}- \sum_{j \neq k} w_jw_j^{\dagger} \big) \hat{\mathbf{h}}_k^{\dagger} \notag 
\end{eqnarray}
The conditional PER constraint in  (\ref{transchan}) involves a quadratic form of the complex Gaussian  random variables $\{\mathbf{v}_k\}$. To find a closed-form approximation of the PER constraint based on (\ref{transchan}),  we use the following lemma:
\begin{Lemma}	[Bernstein-type Inequality]	\label{bernsteinine}
	Let $A=\mathbf{v} \mathbf{M}  \mathbf{v}^{\dagger}+ 2 \mathrm{Re}\left\{  \mathbf{v} \mathbf{z}\right\}$, where $M \in \mathbb{H}^{N_t \times N_t}$ is a complex Hermitian matrix,  $\mathbf{z}_k \in \mathbb{C}^{N_t \times 1 }$ and $\mathbf{v}_k \sim \mathcal{CN}\left(0, \textbf{I}_{N_t} \right)$. Then, for any $\delta > 0$, we have
	\begin{small}
	\begin{align}
		\Pr \left[A \geq \mathrm{Tr} \left( \mathbf{M}  \right) -\sqrt{2 \delta} \sqrt{\| \mathbf{M} \|_F^2 + 2 \| \mathbf{z}\|^2} - \delta s^+ \left(\mathbf{M} \right)\right] \geq 1- e^{-\delta}	\notag 
	\end{align}
	\end{small}where $s^+\left(\mathbf{M}  \right)= \max\{\lambda_{\text{max}}(-\mathbf{M}), 0 \}$ in which $\lambda_{\text{max}}(-\mathbf{M})$ denotes the maximum eigenvalue of matrix $-\mathbf{M}$ and $\|\cdot\|_F$ denotes the matrix Frobenius norm.~\hfill\IEEEQED
\end{Lemma}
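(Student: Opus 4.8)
The plan is to prove the inequality by an exponential Markov (Chernoff) bound on the \emph{lower} tail of $A$, after first writing the Laplace transform $\mathbb{E}[e^{-tA}]$ in closed form; this is the classical route to Bernstein-type inequalities for Gaussian quadratic forms.

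First I would reduce to the case of a diagonal $\mathbf{M}$. Since $\mathbf{M}$ is Hermitian, write $\mathbf{M}=\mathbf{U}\boldsymbol{\Lambda}\mathbf{U}^{\dagger}$ with $\mathbf{U}$ unitary and $\boldsymbol{\Lambda}=\mathrm{diag}(\lambda_1,\dots,\lambda_{N_t})$ real. The law $\mathcal{CN}(0,\mathbf{I}_{N_t})$ is unitarily invariant, so $\tilde{\mathbf{v}}\triangleq\mathbf{v}\mathbf{U}$ has the same distribution as $\mathbf{v}$, and $A=\tilde{\mathbf{v}}\boldsymbol{\Lambda}\tilde{\mathbf{v}}^{\dagger}+2\mathrm{Re}\{\tilde{\mathbf{v}}\tilde{\mathbf{z}}\}$ with $\tilde{\mathbf{z}}\triangleq\mathbf{U}^{\dagger}\mathbf{z}$ satisfying $\|\tilde{\mathbf{z}}\|=\|\mathbf{z}\|$. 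Because $\mathrm{Tr}(\mathbf{M})$, $\|\mathbf{M}\|_F$ and $s^+(\mathbf{M})=\max\{\max_i(-\lambda_i),0\}$ are all invariant under this change of variables, I may assume throughout that $\mathbf{M}=\boldsymbol{\Lambda}$, so that $A=\sum_{i=1}^{N_t}\big(\lambda_i|v_i|^2+2\mathrm{Re}\{v_i z_i\}\big)$ is a sum of $N_t$ mutually independent terms.

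Next I would compute $\mathbb{E}[e^{-tA}]$ for $t>0$. Decomposing each $v_i$ into its real and imaginary parts (each a real Gaussian of variance $1/2$), the $i$-th summand is a quadratic-plus-linear form in two independent real Gaussians, and completing the square in the two resulting Gaussian integrals gives, for every $t$ with $1+t\lambda_i>0$,
\begin{align}
\mathbb{E}\big[e^{-t(\lambda_i|v_i|^2+2\mathrm{Re}\{v_i z_i\})}\big]=\frac{1}{1+t\lambda_i}\,\exp\!\Big(\frac{t^2|z_i|^2}{1+t\lambda_i}\Big),\notag
\end{align}
hence $\mathbb{E}[e^{-tA}]=\prod_{i}(1+t\lambda_i)^{-1}\exp\!\big(t^2|z_i|^2/(1+t\lambda_i)\big)$ for all $t\in(0,1/s^+(\mathbf{M}))$ (with $1/0\triangleq+\infty$). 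Then, for any such $t$ and any level $c$, the Chernoff bound $\Pr[A<c]\le e^{tc}\,\mathbb{E}[e^{-tA}]$ with $c=\mathrm{Tr}(\mathbf{M})-u=\sum_i\lambda_i-u$ produces the exponent $-tu+\sum_i\big[\,t\lambda_i-\log(1+t\lambda_i)+t^2|z_i|^2/(1+t\lambda_i)\,\big]$. Using the elementary estimates $x-\log(1+x)\le x^2/\big(2(1-\max\{0,-x\})\big)$ and $1/(1+x)\le 1/(1-\max\{0,-x\})$, both valid for $x>-1$, together with $\max\{0,-\lambda_i\}\le s^+(\mathbf{M})$, I obtain
\begin{align}
\Pr\big[A<\mathrm{Tr}(\mathbf{M})-u\big]\le\exp\!\Big(-tu+\tfrac{t^2\,(\|\mathbf{M}\|_F^2+2\|\mathbf{z}\|^2)}{2\,(1-t\,s^+(\mathbf{M}))}\Big).\notag
\end{align}
Finally, writing $\nu\triangleq\|\mathbf{M}\|_F^2+2\|\mathbf{z}\|^2$ and $s\triangleq s^+(\mathbf{M})$, setting $u=\sqrt{2\delta\nu}+\delta s$ and choosing $t=\sqrt{2\delta}/(\sqrt{\nu}+\sqrt{2\delta}\,s)\in(0,1/s)$ makes the exponent collapse to exactly $-\delta$, which gives $\Pr[A\ge\mathrm{Tr}(\mathbf{M})-\sqrt{2\delta\nu}-\delta s]\ge 1-e^{-\delta}$; the degenerate case $\nu=0$ (then $A\equiv0$) is immediate and $s=0$ needs no special handling.

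I expect the main obstacle to be the closed-form Laplace transform: one must carefully keep the factor $1/2$ in the variances of the real and imaginary parts of each $v_i$ and correctly absorb the linear term $2\mathrm{Re}\{v_i z_i\}$ after completing the square, so that the $|z_i|^2$-dependence comes out as stated. A second (very short) delicate point is checking that the explicit $t$ in the last step turns the quadratic-in-$t$ exponent into exactly $-\delta$; everything else — the unitary reduction, the two scalar inequalities, and the Chernoff step — is routine bookkeeping. If one instead invokes an existing Bernstein-type inequality for Gaussian quadratic forms such as the one in \cite{bernsteininequ}, only the unitary reduction and the matching of constants would remain.
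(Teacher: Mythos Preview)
Your proof is correct. The paper itself does not prove this lemma at all: it simply refers the reader to \cite{bernsteininequ} for the details. Your Chernoff-bound argument---unitary diagonalisation, closed-form Laplace transform of each independent summand, the two scalar bounds $x-\log(1+x)\le x^2/\big(2(1-\max\{0,-x\})\big)$ and $1/(1+x)\le 1/(1-\max\{0,-x\})$, and the explicit choice of $t$---is precisely the standard route to such Bernstein-type inequalities for Gaussian quadratic forms, and is essentially what that reference does. So what you have written is not a different route from the paper so much as a self-contained reconstruction of the cited result; the computations (including the Laplace transform and the verification that the exponent collapses to $-\delta$) all check out.
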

\begin{proof}
	please refer to \cite{bernsteininequ} for details.
\end{proof}

Based on Lemma \ref{bernsteinine}, we obtain  a  conservative form of  the conditional PER constraint in (\ref{transchan}) which is summarized in the following lemma:
\begin{Lemma} [Conservative Form of  Conditional PER Constraint]
	A conservative formulation of the conditional PER constraint in (\ref{transchan}) is given by
	\begin{small}
	\begin{align}
		 \mathrm{Tr} \left( \mathbf{M}_k\big(\mathbf{w}\big)  \right) -\sqrt{2 \delta_k} \sqrt{\| \mathbf{M}_k\big(\mathbf{w}\big)\|_F^2 + 2 \| \mathbf{z}_k\big(\mathbf{w}\big) \|^2} \notag \\
		  - \delta_k s^+ \left( \mathbf{M}_k\big(\mathbf{w}\big)\right) \geq  e_k\big(\mathbf{w} \big) \label{conser}
	\end{align}
	\end{small}where $\delta_k = - \ln \left(\rho_k\right)$. In other words, the constraint in (\ref{conser}) is a sufficient condition for the conditional PER constraint in (\ref{transchan}). ~\hfill\IEEEQED
\end{Lemma}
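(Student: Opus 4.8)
The plan is to obtain (\ref{conser}) as a direct specialization of the Bernstein-type inequality in Lemma \ref{bernsteinine}, followed by a set-inclusion argument; no heavy computation is involved. First I would check that the hypotheses of Lemma \ref{bernsteinine} apply to the random quantity $A_k \triangleq \mathbf{v}_k\mathbf{M}_k(\mathbf{w})\mathbf{v}_k^{\dagger} + 2\mathrm{Re}\{\mathbf{v}_k\mathbf{z}_k(\mathbf{w})\}$ appearing in (\ref{transchan}): the matrix $\mathbf{M}_k(\mathbf{w}) = \epsilon_k\big(\frac{1}{2^{R_k}-1}w_kw_k^{\dagger} - \sum_{j\neq k}w_jw_j^{\dagger}\big)$ is a real linear combination of the rank-one Hermitian matrices $w_jw_j^{\dagger}$ and hence Hermitian, $\mathbf{z}_k(\mathbf{w})\in\mathbb{C}^{N_t\times 1}$, and $\mathbf{v}_k\sim\mathcal{CN}(0,\mathbf{I}_{N_t})$ by Assumption \ref{impCSITm}.

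Next I would apply Lemma \ref{bernsteinine} with $\mathbf{M} = \mathbf{M}_k(\mathbf{w})$, $\mathbf{z} = \mathbf{z}_k(\mathbf{w})$ and $\delta = \delta_k$, where $\delta_k \triangleq -\ln(\rho_k)$. For $\rho_k\in(0,1)$ we have $\delta_k > 0$, so the lemma is applicable, and since $e^{-\delta_k} = \rho_k$ it gives
\[
\Pr\!\left[A_k \geq \mathrm{Tr}\big(\mathbf{M}_k(\mathbf{w})\big) - \sqrt{2\delta_k}\sqrt{\|\mathbf{M}_k(\mathbf{w})\|_F^2 + 2\|\mathbf{z}_k(\mathbf{w})\|^2} - \delta_k s^+\big(\mathbf{M}_k(\mathbf{w})\big)\right] \geq 1 - \rho_k .
\]
The endpoint values $\rho_k\in\{0,1\}$ are degenerate: $\rho_k=1$ makes (\ref{transchan}) vacuous, while $\rho_k=0$ forces the almost-sure inequality $A_k\geq e_k(\mathbf{w})$ and is infeasible in practice, so no separate treatment is needed.

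Finally I would close with a monotonicity step. Write $\beta_k(\mathbf{w})$ for the threshold inside the probability above, which is exactly the left-hand side of (\ref{conser}). If (\ref{conser}) holds, i.e.\ $\beta_k(\mathbf{w}) \geq e_k(\mathbf{w})$, then $\{A_k \geq \beta_k(\mathbf{w})\} \subseteq \{A_k \geq e_k(\mathbf{w})\}$, so
\[
\Pr\big[A_k \geq e_k(\mathbf{w})\big] \;\geq\; \Pr\big[A_k \geq \beta_k(\mathbf{w})\big] \;\geq\; 1 - \rho_k ,
\]
which is precisely the conditional PER constraint in (\ref{transchan}). Hence (\ref{conser}) is a sufficient condition for (\ref{transchan}); because Lemma \ref{bernsteinine} is a one-sided tail bound and not tight, it is in general not necessary, which is why (\ref{conser}) is a \emph{conservative} reformulation. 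I do not expect a genuine obstacle here — the substantive work is contained in Lemma \ref{bernsteinine} itself, quoted from \cite{bernsteininequ}; the only points needing care are verifying the Hermitian property of $\mathbf{M}_k(\mathbf{w})$ so that Lemma \ref{bernsteinine} can be invoked, and the bookkeeping in the substitution $\delta_k = -\ln\rho_k$ that turns the confidence level $1-e^{-\delta_k}$ into the prescribed $1-\rho_k$.
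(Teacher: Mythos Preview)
Your proposal is correct and matches the paper's approach: the paper does not give an explicit proof of this lemma but presents it as an immediate consequence of Lemma~\ref{bernsteinine} (the Bernstein-type inequality), which is exactly what you carry out by specializing $\mathbf{M}=\mathbf{M}_k(\mathbf{w})$, $\mathbf{z}=\mathbf{z}_k(\mathbf{w})$, $\delta=\delta_k=-\ln\rho_k$ and then applying the set-inclusion/monotonicity step. Your added verification that $\mathbf{M}_k(\mathbf{w})$ is Hermitian and your handling of the endpoint cases for $\rho_k$ are details the paper omits but are consistent with its intended argument.
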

	The conservative formulation in (\ref{conser}) provides a closed-form approximation of the PER constraint in (\ref{transchan}).  Based on (\ref{conser}), we have the following  conservative formulation of the problem in (\ref{ApproxPol2}).
	\begin{Problem}	[Conservative Formulation of (\ref{ApproxPol2})]	\label{conserform}
	\begin{subequations}
	\begin{eqnarray}	
	& \hspace{-1cm} \underset{\mathbf{w}, \boldsymbol{\delta},  \mathbf{x}, \mathbf{y}}\min&   \ \sum_{k=1}^K \Big(\|\mathbf{w}_k\|^2 - \frac{\partial \widetilde {V}\left(\mathbf{Q}\right)}{\partial Q_k}  R_k    \left( 1- \rho_k\right)  \Big)	 \\
	& \hspace{-1cm}\text{s.t.}&	\mathrm{Tr} \left( \mathbf{M}_k\big(\mathbf{w}\big) \right) -\sqrt{2 \delta_k} x_k - \delta_k y_k\geq e_k\big(\mathbf{w} \big),  \forall k\\
	& \hspace{-1cm}\hspace{1cm}&\sqrt{\| \mathbf{M}_k\big(\mathbf{w}\big) \|_F^2 + 2 \| \mathbf{z}_k\big(\mathbf{w}\big) \|^2} \leq x_k, \quad \forall k	\\
	& \hspace{-1cm}\hspace{1cm}&y_k \mathbf{I}_{N_t} + \mathbf{M}_k\big(\mathbf{w}\big)   \succeq \mathrm{0}	, \quad \forall k	\\
	& \hspace{-1cm}\hspace{1cm}&y_k \geq 0, \delta_k>0, \quad \forall k
	\end{eqnarray}
	\end{subequations}
	where we denote $\boldsymbol{\delta} \triangleq \{\delta_k=-\ln \left(\rho_k\right): \forall k \}$. $\mathbf{x} \triangleq \{x_k \in \mathbb{R}: \forall k\}$ and $\mathbf{y} \triangleq \{y_k \in \mathbb{R}: \forall k\}$ are slack variables.~\hfill\IEEEQED
	\end{Problem}

Problem \ref{conserform} is still non-convex due to the fact that $ \mathbf{M}_k\big(\mathbf{w}\big)$, $ \mathbf{z}_k\big(\mathbf{w}\big) $ and $e_k\big(\mathbf{w}\big)$ are indefinite quadratic in $\mathbf{w}$ \cite{luo}. To efficiently solve the problem, we adopt the semidefinite relaxation (SDR) technique \cite{luo}, \cite{sdr2}.  Define  $W_k=\mathbf{w}_k \mathbf{w}_k^{\dagger}$. Thus, we have $W_k  \succeq \mathbf{0}$ and $\mathrm{rank}\big(W_k  \big)=1$.  Removing the rank-one constraint on $W_k$, we have the following SDR of Problem \ref{conserform}.
\begin{Problem} [SDR of Problem \ref{conserform}]	\label{relaxed}
\begin{subequations}	\label{sdrprobbb}
\begin{eqnarray}	 
	 &\hspace{-1.3cm}\underset{\mathbf{W},  \boldsymbol{\delta}, \mathbf{x},\mathbf{y}}{\min}&   \sum_{k=1}^K  \Big(  \mathrm{Tr}\left(W_k\right) - \frac{\partial \widetilde {V}\left(\mathbf{Q}\right)}{\partial Q_k}  R_k  \big(1-e^{-\delta_k}\big)  \Big) 		\label{objweit} \\
	&\hspace{-1.5cm}\text{s.t.}& 	  \mathrm{Tr} \left( \mathbf{M}_k\big(\mathbf{W} \big) \right) -\sqrt{2 \delta_k} x_k - \delta_k y_k\geq e_k\big(\mathbf{W}  \big),  \forall k	\label{linearp4} \\
	&\hspace{-1.5cm} \hspace{1cm} & \sqrt{\| \mathbf{M}_k\big(\mathbf{W} \big) \|_F^2 + 2 \| \mathbf{z}_k\big(\mathbf{W} \big) \|^2} \leq x_k , \quad \forall k \label{secconp4} \\
	& \hspace{-1.5cm}\hspace{1cm} &y_k \mathbf{I}_{N_t} + \mathbf{M}_k\big(\mathbf{W} \big)   \succeq \mathrm{0}, \quad \forall k	\label{psdp41} \\
	&\hspace{-1.5cm} \hspace{1cm} & y_k \geq 0 , W_k  \succeq \mathrm{0},\delta_k \geq 0	, \quad \forall k	\label{psdp42}
\end{eqnarray}
\end{subequations}
	where  $\mathbf{W}  \triangleq \big\{W_k: \forall k \big\}$, $\mathbf{M}_k\big(\mathbf{W} \big) \triangleq \epsilon \Big(\frac{1}{2^{R_k}-1} W_k - \sum_{j \neq k}W_j   \Big)$, $\mathbf{z}_k\big(\mathbf{W} \big) \triangleq \sqrt{\epsilon_k} \Big(\frac{1}{2^{R_k}-1}W_k  - \sum_{j \neq k} W_j  \Big) \hat{\mathbf{h}}_k^{\dagger}$ and $e_k\big(\mathbf{W}  \big)  \triangleq 	1-  \hat{\mathbf{h}}_k \Big(\frac{1}{2^{R_k}-1} W_k  - \sum_{j \neq k} W_j  \Big) \hat{\mathbf{h}}_k^{\dagger}$.~\hfill\IEEEQED
\end{Problem}

For Problem \ref{relaxed}, the objective function in (\ref{objweit}), the second order cone constraint in \tcb{(\ref{secconp4})},  the positive semidefinite constraints in (\ref{psdp41}) and the linear constraints in (\ref{psdp42}) are all convex. However, it is difficult to check  the convexity of Problem \ref{relaxed} due to the bilinear terms $\sqrt{2 \delta_k} x_k $ and $\delta_k y_k$ in (\ref{linearp4}). In the following lemma, we show that Problem \ref{relaxed} is  convex.
\begin{Lemma}	[Convexity of Problem \ref{relaxed}]	\label{asycvx}
	Problem \ref{relaxed} is a convex optimization problem.
\end{Lemma}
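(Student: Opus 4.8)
The plan is to establish convexity of Problem~\ref{relaxed} by checking the objective and every constraint in turn, isolating the single genuinely delicate ingredient --- constraint~(\ref{linearp4}) --- and arguing that the bilinear couplings in it are harmless once the remaining constraints are taken into account.

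First I would dispose of the easy pieces. In the objective (\ref{objweit}), $\mathrm{Tr}(W_k)$ is linear in $W_k$; and since Corollary~\ref{PropJk} together with (\ref{perflowapp}) gives $\partial\widetilde{V}(\mathbf{Q})/\partial Q_k>0$, the term $-\frac{\partial\widetilde{V}(\mathbf{Q})}{\partial Q_k}R_k(1-e^{-\delta_k})$ equals a constant plus the positive multiple $\frac{\partial\widetilde{V}(\mathbf{Q})}{\partial Q_k}R_k\,e^{-\delta_k}$ of the convex function $e^{-\delta_k}$, so the objective is convex. For the constraints: because $\mathbf{M}_k(\mathbf{W})$ and $\mathbf{z}_k(\mathbf{W})$ are affine (indeed linear) in $\mathbf{W}$, the left-hand side of (\ref{secconp4}) is the Euclidean norm of an affine function of $\mathbf{W}$, so (\ref{secconp4}) is a second-order cone constraint; the left-hand side of (\ref{psdp41}) is affine in $(\mathbf{W},y_k)$, so (\ref{psdp41}) is a linear matrix inequality; and (\ref{psdp42}) is a set of linear inequalities together with an LMI. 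Each of these describes a convex set, and the slack variables $\mathbf{x},\mathbf{y}$ enter only here and in (\ref{linearp4}), not in the objective.

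It remains to handle (\ref{linearp4}), which after rearrangement reads $\sqrt{2\delta_k}\,x_k+\delta_k y_k\le \mathrm{Tr}(\mathbf{M}_k(\mathbf{W}))-e_k(\mathbf{W})$; the right-hand side is affine in $\mathbf{W}$, but the left-hand side contains the products $\sqrt{2\delta_k}\,x_k$ and $\delta_k y_k$ and is not jointly convex on its own. The step I would carry out here is to remove the bilinearity by a change of variables --- for instance introducing $\widetilde{x}_k=\sqrt{2\delta_k}\,x_k$ and $\widetilde{y}_k=\delta_k y_k$ so that (\ref{linearp4}) becomes affine --- and then re-expressing the coupling constraints (\ref{secconp4}) and (\ref{psdp41}) in the new variables and showing that they still describe convex sets (e.g. matching the transformed norm constraint to a rotated second-order cone and the transformed matrix inequality to a convex region on $\{\delta_k>0\}$), with the fallback, if no clean conic reformulation is available, of verifying convexity of the transformed constraint function directly through its Hessian restricted to the subspace cut out by the active constraints. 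I expect this to be the crux and essentially the only real obstacle: $\sqrt{2\delta_k}\,x_k+\delta_k y_k$ is not convex in isolation, so the argument must exploit the fact that $x_k$ and $y_k$ are pinned to $\mathbf{W}$ through (\ref{secconp4}) and (\ref{psdp41}) rather than treating (\ref{linearp4}) by itself. Once (\ref{linearp4}) is shown to carve out a convex set, combining it with the convex objective (\ref{objweit}) and the convex constraints (\ref{secconp4})--(\ref{psdp42}) yields that Problem~\ref{relaxed} is a convex optimization problem.
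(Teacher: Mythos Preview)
Your decomposition into ``easy pieces'' and your identification of (\ref{linearp4}) as the only delicate constraint matches the paper exactly, and your treatment of the objective and of (\ref{secconp4})--(\ref{psdp42}) is correct. The divergence comes in how you propose to handle (\ref{linearp4}). The paper does \emph{not} perform a change of variables. It writes the constraint as
\[
f(\mathbf{W},\delta_k,x_k,y_k)\;=\;e_k(\mathbf{W})-\mathrm{Tr}\big(\mathbf{M}_k(\mathbf{W})\big)+\sqrt{2\delta_k}\,x_k+\delta_k y_k\;\le\;0,
\]
computes the Hessian $\mathbf{H}(f)$ of $f$ in the original variables, and then argues that the quadratic form $z\,\mathbf{H}(f)\,z^T$ is nonnegative by invoking the sign conditions $\delta_k\ge 0$, $x_k\ge 0$, $y_k\ge 0$ supplied by (\ref{secconp4})--(\ref{psdp42}).

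Your substitution $\widetilde{x}_k=\sqrt{2\delta_k}\,x_k$, $\widetilde{y}_k=\delta_k y_k$ has a genuine gap: it relocates the bilinearity rather than removing it. After the change, (\ref{psdp41}) becomes $\widetilde{y}_k\,\mathbf{I}_{N_t}+\delta_k\,\mathbf{M}_k(\mathbf{W})\succeq 0$, which is bilinear in $(\delta_k,\mathbf{W})$ and no longer an LMI; and (\ref{secconp4}) becomes $\sqrt{2\delta_k}\,\sqrt{\|\mathbf{M}_k(\mathbf{W})\|_F^2+2\|\mathbf{z}_k(\mathbf{W})\|^2}\le \widetilde{x}_k$, a product of a concave function of $\delta_k$ with a convex function of $\mathbf{W}$, which is not jointly convex and does not fit the rotated second-order cone template you allude to. So neither of the two transformed coupling constraints is convex, and your stated fallback---checking a Hessian restricted by the active constraints---would in effect send you back to the direct Hessian computation the paper performs in the original coordinates. (As an aside, the paper's own Hessian argument is itself not airtight: it evaluates $z\,\mathbf{H}(f)\,z^T$ only at the specific direction $z=(\mathrm{vec}(\mathbf{W}),\delta_k,x_k,y_k)$ rather than at an arbitrary direction, which does not establish positive semidefiniteness; but the intended route is the direct Hessian check, not a variable change.)
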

\begin{proof}
	please refer to Appendix H.
\end{proof}

Let $\left\{ \mathbf{W}^{\ast} , \boldsymbol{\delta}^{\ast}, \mathbf{x}^{\ast}, \mathbf{y}^\ast \right\}$ be the optimal  point of Problem \ref{relaxed}, where $\mathbf{W}^{\ast}=\{W_k^{\ast}:\forall k\}$. If  $W_k^{\ast}$ is of rank one, we can write it as  $W_k^{\ast}=\mathbf{w}_k^{\ast}\big(\mathbf{w}_k^{\ast} \big)^{\dagger}$ and then $\mathbf{w}_k^{\ast}$ is the solution.  Otherwise, we apply standard rank reduction techniques (such as the Gaussian Randomization Procedure (GRP) \cite{luo}) to obtain a rank-one approximate solution from $\mathbf{W}^{\ast}$. 

We propose the following alternating iterative algorithm   to efficiently  solve Problem \ref{relaxed}.  
\begin{Algorithm} [Alternating Iterative Algorithm to Problem \ref{relaxed}]		\label{BFalg} \
	\begin{itemize}
		\item \textbf{Step 1 [Initialization]:}  Set $n=0$ and choose $\boldsymbol{\delta}(0) \succeq \mathbf{0}$. 
		\item \textbf{Step 2 [Update on $\mathbf{W} $]:} Based on  $\boldsymbol{\delta}(n)$, obtain the optimal solution $\mathbf{W}  (n)$ of the  following convex conic program using the cutting plain or ellipsoid method \cite{boyd}:
	\begin{small}
	\begin{subequations}	\label{alt1}
	\begin{eqnarray}	
	&\hspace{-2.2cm} \underset{\mathbf{W},  \mathbf{x},\mathbf{y}}{\min}&   \sum_{k=1}^K \mathrm{Tr}\left(W_k\right) 	 \\
	&\hspace{-2.2cm}\text{s.t.}& 	  \mathrm{Tr} \left( \mathbf{M}_k\big(\mathbf{W} \big) \right) -\sqrt{2 \delta_k(n)} x_k - \delta_k(n) y_k\geq e_k\big(\mathbf{W}  \big),  \forall k	\\
	&\hspace{-1.8cm} \hspace{0.5cm} & \sqrt{\| \mathbf{M}_k\big(\mathbf{W} \big) \|_F^2 + 2 \| \mathbf{z}_k\big(\mathbf{W} \big) \|^2} \leq x_k, \quad \forall k   \\
	&\hspace{-1.8cm} \hspace{0.5cm} &y_k \mathbf{I}_{N_t} + \mathbf{M}_k\big(\mathbf{W} \big)   \succeq \mathrm{0}, \quad \forall k	 \\
	&\hspace{-1.8cm} \hspace{0.5cm} & y_k \geq 0 , \ W_k  \succeq \mathbf{0}	, \quad \forall k 
	\end{eqnarray}
	\end{subequations}\end{small}
		\item \textbf{Step 3 [Update on $\boldsymbol{\delta}$]:} Based on $\left\{\mathbf{W}  (n), \mathbf{x}(n), \mathbf{y}(n) \right\}$, obtain the optimal solution $\boldsymbol{\delta}(n+1)$  of the  following quadratically constrained  program using the interior-point method \cite{boyd}:
		\begin{small}
		\begin{subequations}	\label{alt2}
		\begin{eqnarray}	
 &&\hspace{-1.6cm}\underset{\boldsymbol{\delta}}{\min}   \sum_{k=1}^K  \frac{\partial \widetilde {V}\left(\mathbf{Q}\right)}{\partial Q_k} R_k\tau  e^{-\delta_k}  	\label{objdelta} \\
	&&\hspace{-1.6cm} \text{s.t.} 	\hspace{0.2cm}  \mathrm{Tr} \left( \mathbf{M}_k\big(\mathbf{W} (n)\big) \right) -\sqrt{2 \delta_k} x_k(n) - \delta_k y_k(n)\notag \\
	&&\hspace{3cm}\geq e_k\big(\mathbf{W} (n) \big),  \forall k	\\
	&&  \hspace{-1cm}\delta_k \geq 0 ,  \forall k 
		\end{eqnarray}
		\end{subequations}
		\end{small}
		\item \textbf{Step 4 [Termination]:}  Set $n = n + 1$ and go to Step 2 until a certain termination condition is satisfied.~\hfill\IEEEQED
	\end{itemize}	
\end{Algorithm}

\begin{Remark}	[Convergence Property of Algorithm \ref{BFalg}]
	In Algorithm \ref{BFalg}, $\mathbf{W} $ and $\boldsymbol{\delta}$ are optimized alternatively in the problems in (\ref{alt1}) and (\ref{alt2}). Since 
the value of the objective function after each iteration is monotonically nonincreasing and is bounded below\footnote{Since $e^{-\delta_k}>0$ and $W_k  \succeq \mathrm{0}$ for all $k$, then $\sum_{k=1}^K  \left(  \mathrm{Tr}\left(W_k\right) - \frac{\partial \widetilde {V}\left(\mathbf{Q}\right)}{\partial Q_k}R_k  \big(1-e^{-\delta_k}\big) \right)  > -\sum_{k=1}^K  \frac{\partial \widetilde {V}\left(\mathbf{Q}\right)}{\partial Q_k} R_k$.},  Algorithm \ref{BFalg} must converge to a stationary point\footnote{A stationary point of Problem \ref{relaxed} satisfies the associated KKT conditions.} of Problem \ref{relaxed}. In addition, according to the convexity of Problem \ref{relaxed} in Lemma \ref{asycvx}, any converged stationary point is also a global optimal point.~\hfill\IEEEQED
\end{Remark}

Fig.~\ref{fc} summarizes the  overall  delay-constrained beamforming solution. \tcb{Note that the delay-awareness of the beamforming solution is embraced via the approximate value function $ \widetilde {V}\left(\mathbf{Q}\right)$  in (\ref{objdelta}), which gives the urgency information about the $K$ data flows. The complexity of the beamforming solution comes from the calculation of  value functions and the optimization of the per-stage problem. Due to the closed-form approximation of the value function, the overall complexity of the solution is just the complexity of solving the SDR in (\ref{sdrprobbb}), which is polynomial in $K$ \cite{boyd}.}

\begin{figure}[!htbp]
  \centering
  \includegraphics[width=2.8in]{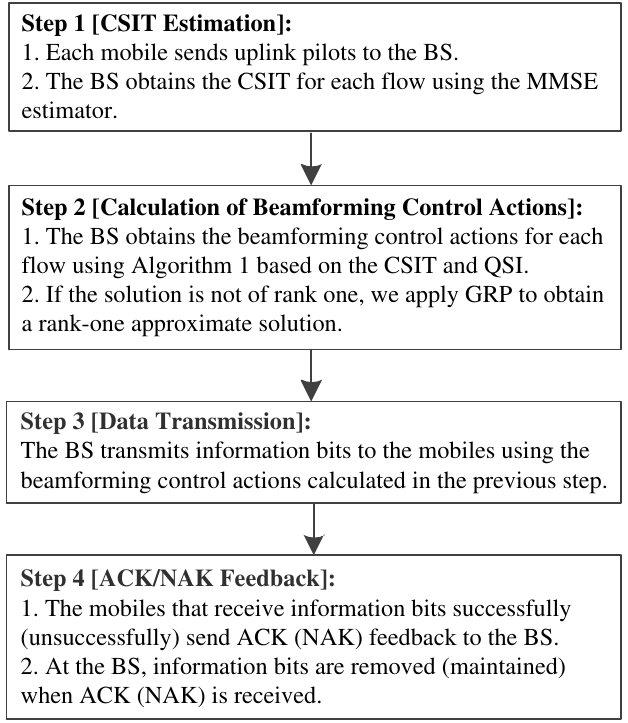}
  \caption{Flow chart of the overall delay-constrained beamforming solution.}
  \label{fc}
\end{figure}

\section{Simulation Results and Discussions}	\label{simulationP}
In this section, we compare the  performance  of the proposed beamforming control scheme in Algorithm \ref{BFalg}  with the following three baseline schemes using numerical simulations.
\begin{itemize} 
	\item \textbf{Baseline 1, Random Beamforming (RB) Scheme \cite{RB_scheme}:}   The BS applies  random beamforming  control and uniform power allocation for  each data flow.  
	\item \textbf{Baseline 2, Fixed PER Beamforming (FPB) Scheme \cite{fixedper}:} This scheme targets solving the problem in (\ref{ApproxPol2}) with fixed conditional PER  ($\rho_k=0.1$ for all $k$) for each data flow. We apply similar techniques (Bernstein approximation, SDR) to obtain the solution of the corresponding  control problem.
	\item \textbf{Baseline 3, CSIT-Adaptive PER Beamforming (CAPB) Scheme:}  This scheme is an extension of the method presented in \cite{fixedper}, which minimizes $\sum_{k=1}^K \big(\|\mathbf{w}_k\|^2 + \beta  \rho_k   \big)$. $\beta$ measures the relative importance between the power cost of the BS $\sum_{k=1}^K \|\mathbf{w}_k\|^2 $ and the sum of the per-flow PERs $\sum_{k=1}^K \rho_k$.
\end{itemize}

\begin{figure}[h]
\centering 
{\includegraphics[width=3.5in]{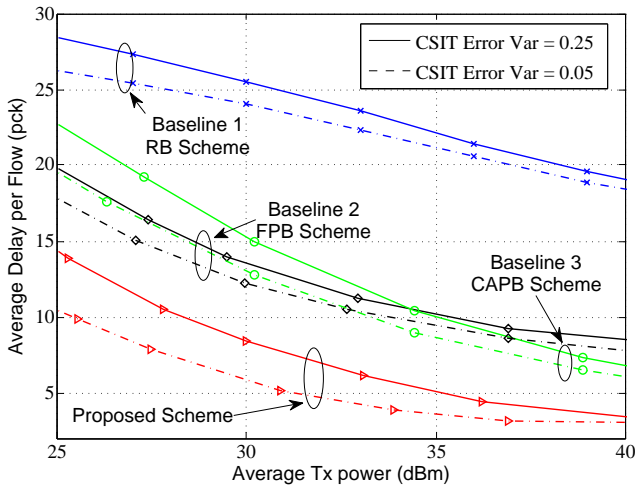}}	
\caption{Average delay per flow versus average transmit power at $\epsilon_k=0.05$ and $\epsilon_k=0.25$. The number of mobiles is $K=5$ and the average data arrival rate is $\lambda_k=0.8$pck/slot.}
\label{fig1vs}
\end{figure}

\begin{figure}[h]
\centering
{\includegraphics[width=3.5in]{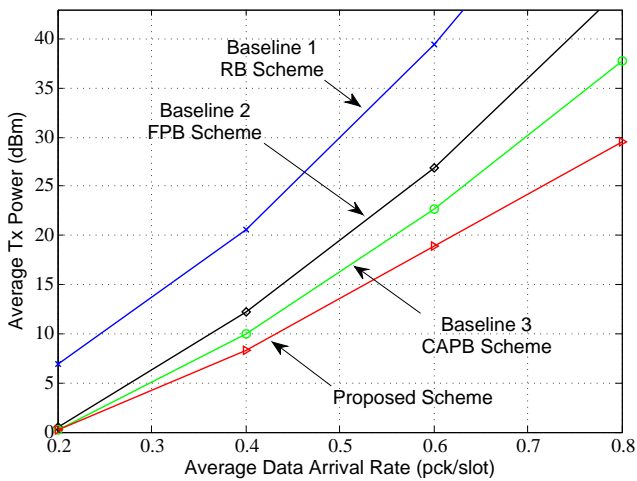}}
\caption{Average transmit power versus average data arrival rate with per flow average delay requirement  being 8pcks. The CSIT error variance is $\epsilon_k=0.1$. The number of mobiles is $K=6$.}
\label{fig2vss}
\end{figure}

\begin{figure}[h]
\centering
{\includegraphics[width=3.5in]{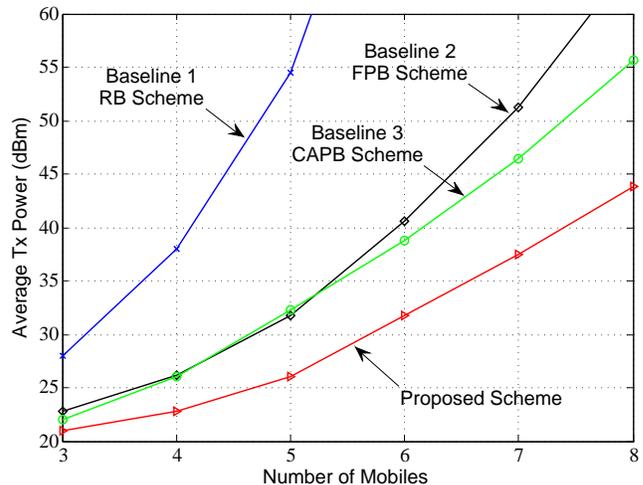}}
\caption{Average transmit power  versus number of mobiles with per flow  average delay requirement  being 12pcks. The CSIT error variance is $\epsilon_k=0.1$. The average data arrival rate is $\lambda_k=0.8$pck/slot.}
\label{fig2vssss}
\end{figure}

In the simulations, we consider a downlink MU-MIMO system where the number of transmit  antennas of the BS is equal to the number of mobiles. The complex fading coefficient and the channel noise are $\mathcal{CN}(0,1)$ distributed. We consider Poisson packet arrival with average arrival rate $\lambda_k$ (pcks/slot) and  deterministic packet size $\overline{N}_k=15$Kbits. The decision slot duration $\tau$ is $5$ms. The total bandwidth is  $BW=10$MHz. We consider the CSIT error model  in Assumption \ref{impCSITm} with CSIT error variance\footnote{\tcb{Note that to support spatial multiplexing in MU-MIMO, reasonable CSIT quality at the BS is required. As such, we consider $\epsilon \in (0,0.5]$ in the simulations.}} \tcb{$\epsilon \in (0,0.5]$ \cite{noisevar}}. The constant packet departure rate is $\frac{R_k\tau BW}{\overline{N}_k}=1$pck/slot.  Furthermore, $\gamma_k$ is the same for all $k$.

Fig.~\ref{fig1vs} illustrates the average delay per flow versus the average transmit power of the BS. The average delay of all the schemes decreases as the average transmit power increases. It can be observed that there is significant performance gain of the proposed scheme compared with all the baselines.  This gain is contributed by the CSIT and QSI aware dynamic beamforming control.

Fig.~\ref{fig2vss}  illustrates the average transmit power versus the average data arrival rate with  per flow average delay requirement  being 8pcks. The average transmit power of all the schemes increases as the average data arrival rate increases.   It can be observed that there is significant performance gain of the proposed scheme compared with all the baselines across a wide range of the average data arrival rates. 

Fig.~\ref{fig2vssss}  illustrates the average transmit power versus the number of  mobiles with per flow average delay requirement  being 12pcks. The average transmit power of all the schemes increases as the number of mobiles increases.   This is due to the increase of the total interference for each data flow. It can be observed that there is significant performance gain of the proposed scheme compared with all the baselines across a wide range of the numbers of mobiles.

\tcb{Table \ref{tabletime}  illustrates the comparison of the MATLAB computational time of the proposed solution, the  baselines and the brute-force value iteration algorithm \cite{DP_Bertsekas}. Note that the computational time of the FPB scheme is smaller than those of the CAPB scheme and our proposed scheme in all three case (different $K$ scenarios). The reason is that FPB scheme uses fixed PER and there is no PER optimization step involved. In addition,  the computational time of  our proposed scheme is very close to that of the CAPB scheme, and the  value iteration algorithm requires very long time to converge.  This is because the proposed scheme makes use of the closed-form approximate value function  and the complexity is just the complexity of solving an SDP. However, the value iteration algorithm requires both computation of  the value functions numerically  and solving the per-stage optimization problem.}

\begin{table}
	\centering
\begin{tabular}{|c|c|c|c|}
	\hline
		  & $\tcb{K=4}$  & $\tcb{K=6}$ & $\tcb{K=8}$    \\
	\hline
		\tcb{Baseline 1, RB Scheme} & \tcb{$<1$ms} & \tcb{$<1$ms} & \tcb{$<1$ms}    			\\
		\tcb{Baseline 2, FPB Scheme} &  \tcb{1.524s} & \tcb{1.811s} &   \tcb{2.951s}	 \\
		\tcb{Baseline 3, CAPB Scheme} & \tcb{2.271s} & \tcb{3.024s}  &   \tcb{4.585s}		\\
	           \tcb{Proposed Scheme} & \tcb{2.316s} & \tcb{3.093s}  &  \tcb{4.676s}	\\	
	           \tcb{Value Iteration Algorithm} &   \tcb{$>10^3$s} & \tcb{$>10^3$s}  &  \tcb{$>10^3$s} 	\\
	\hline
\end{tabular}
	\caption{\tcb{Comparison of the MATLAB computational time of the proposed scheme, the baselines and  the value iteration algorithm in one decision slot. The computational time of the  value iteration algorithm for different $K$ are  all greater than $10^3$s. The CSIT error variance is $\epsilon_k=0.1$. The average data arrival rate is $\lambda_k=0.8$pck/slot.}}	
	\label{tabletime}
\end{table}

\section{summary}
In this paper, we propose a low complexity delay-constrained beamforming control for downlink MU-MIMO systems with imperfect CSIT. We show that the delay-constrained control problem can be modeled as a POMDP. We first introduce the VCTS and derive a closed-form approximate value function using perturbation theory. We then build a tractable closed-form approximation of the conditional PER    using Bernstein approximation.  Based on the two approximations, we propose a conservative formulation of the original DT-POMDP problem and propose an alternating iterative algorithm to efficiently solve the associated SDR problem. Numerical results show that the proposed beamforming control scheme has much better  performance than the other three baselines.

\section*{Appendix A: Proof of Theorem \ref{LemBel}}
Following \emph{Proposition 4.6.1} of \cite{DP_Bertsekas},  the sufficient conditions for optimality of Problem \ref{IHAC_MDP} is that there exists a ($\eta^\ast, \{ V^\ast\left(\mathbf{Q} \right) \}$) that satisfies the following Bellman equation and $V^\ast$ satisfies the transversality condition in (\ref{transodts})  for all  admissible control policy $\Omega$ and initial  state $\mathbf{Q}\left(0 \right)$:
\begin{small}\begin{align}
	& \theta {\tau} + V^\ast\left(\boldsymbol{\chi} \right) = \min_{ \Omega (\hat{\mathbf{H}},\mathbf{Q})} \Big[ c\big(\mathbf{Q}, \Omega\big(\hat{\mathbf{H}},\mathbf{Q}\big)\big){\tau}  \\
	& +  \sum_{\boldsymbol{\chi}', \hat{\boldsymbol{\chi}}' } \Pr\big[ \boldsymbol{\chi}' \big| \boldsymbol{\chi}, \Omega\big( \hat{\mathbf{H}}, \mathbf{Q}\big)\big]  V^\ast\left(\boldsymbol{\chi}'\right)    \Big]	= \min_{ \Omega (\hat{\mathbf{H}},\mathbf{Q})}\Big[ c\big(\mathbf{Q}, \Omega\big(\hat{\mathbf{H}},\mathbf{Q}\big)\big) {\tau} \notag \\
	& + \sum_{\mathbf{Q}'} \sum_{\hat{\mathbf{H}}', \mathbf{H}' }  \Pr \big[ \mathbf{Q}'\big| \mathbf{H}, \mathbf{Q}, \Omega\big(\hat{\mathbf{H}}, \mathbf{Q}  \big) \big] \Pr \big[\hat{\mathbf{H}}', \mathbf{H}' \big]  V^\ast\left(\boldsymbol{\chi}' \right)    \Big]	\notag
\end{align}\end{small}Taking expectation w.r.t. $\hat{\mathbf{H}}', \mathbf{H}'$ on both sizes of the above equation, we have
\begin{align}
	& \theta {\tau} + V^\ast\left(\mathbf{Q} \right)  = \min_{\Omega\left(\hat{\boldsymbol{\chi}} \right)} \mathbb{E}\Big[ c\big(\mathbf{Q}, \Omega\big(\hat{\mathbf{H}},\mathbf{Q}\big)\big))  {\tau} 	\notag \\
	& + \sum_{\mathbf{Q}'}   \Pr \big[ \mathbf{Q}'\big| \mathbf{H}, \mathbf{Q}, \Omega\big(\hat{\mathbf{H}}, \mathbf{Q}  \big) \big]   V^\ast\left(\mathbf{Q}' \right)   \Big| \mathbf{Q}  \Big]		 \\
	&= \min_{\Omega\left(\mathbf{Q} \right)} \Big[ \widetilde{c}\left( \mathbf{Q}, \Omega\left(\mathbf{Q} \right)\right) {\tau} +  \sum_{\mathbf{Q}'}\Pr\left[ \mathbf{Q}'| \mathbf{Q},  \Omega\left(\mathbf{Q} \right)\right]V^\ast\left(\mathbf{Q} '\right)    \Big]\notag
\end{align}
where we denote $V^\ast\left(\mathbf{Q} \right) = \mathbb{E}\big[V^\ast\left(\boldsymbol{\chi}', \hat{\boldsymbol{\chi}}' \right) \big| \mathbf{Q}\big]$, $\widetilde{c}\left( \mathbf{Q}, \Omega\left(\mathbf{Q} \right)\right) =\mathbb{E}\big[ c\big(\mathbf{Q}, \Omega\big(\hat{\mathbf{H}},\mathbf{Q}\big)\big)\big| \mathbf{Q}  \big]$  and  $\Pr\left[ \mathbf{Q}'| \mathbf{Q},  \Omega\left(\mathbf{Q} \right)\right] \\ =  \mathbb{E}\big[\Pr \big[ \mathbf{Q}'\big| \mathbf{H}, \mathbf{Q}, \Omega\big(\hat{\mathbf{H}}, \mathbf{Q}  \big) \big]  \big| \mathbf{Q} \big]$. Therefore, we obtain the equivalent Bellman equation in (\ref{OrgBel}) in Theorem \ref{LemBel}.

\section*{{Appendix B: Proof of Corollary \ref{cor1}}}
Let $\mathbf{Q}' =(Q_1',\cdots, Q_k')= \mathbf{Q}(t+1)$ and $\mathbf{Q}=(Q_1,\cdots, Q_k)=\mathbf{Q}(t)$. For the queue dynamics in (\ref{queue_sys}) and sufficiently small $\tau$, we have
\begin{align}
	\hspace{-1cm} Q_k'  = Q_k- G_k\left(\mathbf{H},\mathbf{w}\right)  + A_k\tau,  \forall Q_k >0, k = 1,\dots, K
\end{align}
If  $V\left(\mathbf{Q}\right)$ is of class $\mathcal{C}^2(\mathbb{R}_+^K)$, we have the following Taylor expansion on $V\left( \mathbf{Q}'\right)$ in (\ref{OrgBel}):
\begin{small}\begin{align}	
	&\mathbb{E}\left[ V\left( \mathbf{Q}'\right) \big| \mathbf{Q} \right] \overset{(a)}=V\left( \mathbf{Q}\right)+\sum_{k=1}^K  \frac{\partial V\left(\mathbf{Q}\right)}{\partial Q_k} \left[  \lambda_k \right. \notag \\
	&\left.- \mathbb{E}\left[R_k\left(1-\Pr\big[R_k> C_k\left(\mathbf{H},\mathbf{w} \right) \big| \hat{\mathbf{H}},\mathbf{Q} \big]\right) \big| \mathbf{Q}\right]   \right]\tau + o(\tau)	\notag 
\end{align}\end{small}where  (a) is due to $\mathbb{E}\left[G_k\left(\mathbf{H}, \mathbf{w} \right) \big| \mathbf{Q}\right]  =\mathbb{E} \big[R_k \Pr\big[R_k\leq C_k\left(\mathbf{H},\mathbf{w} \right) \big| \hat{\mathbf{H}},\mathbf{Q} \big]\big| \mathbf{Q}  \big]$.  For notation convenience, let $F_{\mathbf{Q}}(\theta, V, \Omega(\mathbf{Q}))$ denote the \emph{Bellman operator}:
\begin{align}
	  & F_{\mathbf{Q}}(\theta, V, \Omega(\mathbf{Q})) =-\theta +  \widetilde{c}\left(\mathbf{Q}, \Omega\left(\mathbf{Q} \right)\right)  +  \sum_{k=1}^K \frac{\partial V \left(\mathbf{Q} \right) }{\partial Q_k} \left[  \lambda_k \right. \notag \\
	 &  \left. - \mathbb{E}\left[R_k\left(1-\Pr\big[R_k> C_k\left(\mathbf{H},\Omega(\hat{\mathbf{H}},\mathbf{Q} ) \right) \big| \hat{\mathbf{H}},\mathbf{Q}  \big]\right)  \big| \mathbf{Q}\right]   \right] \notag \\
	& +\nu  G_{\mathbf{Q}}(V,\Omega(\mathbf{Q})) \notag
\end{align}
for some smooth function $G_{\mathbf{Q}}$ and $\nu=o(1)$ (which asymptotically goes to zero as $\tau$ goes to zero). Denote 
\begin{align}
	F_{\mathbf{Q}}(\theta, V)=\min_{ \Omega\left( \mathbf{Q} \right)} F_{\mathbf{Q}}(\theta, V, \Omega(\mathbf{Q}))
\end{align}
Suppose $\left(\theta^\ast, V^\ast\right)$ satisfies the Bellman equation in (\ref{OrgBel}), we have
\begin{align}	\label{uniquesol}
	F_{\mathbf{Q}}\left( \theta^\ast, V^\ast\right) = \mathbf{0}, \quad \forall \mathbf{Q} \in \boldsymbol{\mathcal{Q}}
\end{align}

Similarly, if $\left(\theta, V\right)$ satisfies the approximate Bellman equation in (\ref{conperr}), we have
\begin{align}	\label{defapp}
	F^\dagger_{\mathbf{Q}}\left( \theta, V\right) = \mathbf{0}, \quad \forall \mathbf{Q} \in \boldsymbol{\mathcal{Q}}
\end{align}
where  $F^\dagger_{\mathbf{Q}}(\theta, V)=\min_{ \Omega\left( \mathbf{Q} \right)} F^\dagger_{\mathbf{Q}}(\theta, V, \Omega(\mathbf{Q}))$
and
\begin{align}
	  & F^\dagger_{\mathbf{Q}}(\theta, V, \Omega(\mathbf{Q})) \\
	=& -\theta +  \widetilde{c}\left(\mathbf{Q}, \Omega\left(\mathbf{Q} \right)\right)  +  \sum_{k=1}^K \frac{\partial V \left(\mathbf{Q} \right) }{\partial Q_k} \left[  \lambda_k - \mathbb{E}\left[R_k\left(1  \right.\right. \right. \notag \\
	&  \left.\left. \left. -\Pr\big[R_k> C_k\left(\mathbf{H},\Omega(\hat{\mathbf{H}},\mathbf{Q} ) \right) \big| \hat{\mathbf{H}},\mathbf{Q}  \big]\right)  \big| \mathbf{Q}\right]   \right]  	\notag
\end{align}

We make the following claim on the relationship between the approximate Bellman equation and the original bellman equation:
\begin{Claim}	[Relationship between (\ref{simbelman}) and (\ref{OrgBel})]	\label{claimmms}
	If $\left(\theta, V\right)$ satisfies the approximate Bellman equation in (\ref{OrgBel}), then we have $|F_\mathbf{Q}(\theta, V)| = o(1)$ for any $\mathbf{Q} \in \boldsymbol{\mathcal{Q}}$.~\hfill\IEEEQED
\end{Claim}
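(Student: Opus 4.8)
The plan is to exploit the fact that, directly from the definitions of the two operators, they differ only by the vanishing perturbation term: for every partitioned action $\Omega(\mathbf{Q})$,
\begin{align}
	F_{\mathbf{Q}}\big(\theta, V, \Omega(\mathbf{Q})\big) = F^\dagger_{\mathbf{Q}}\big(\theta, V, \Omega(\mathbf{Q})\big) + \nu\, G_{\mathbf{Q}}\big(V, \Omega(\mathbf{Q})\big),
\end{align}
with $\nu = o(1)$ as $\tau \to 0$ and $G_{\mathbf{Q}}$ smooth (this is precisely the $o(\tau)/\tau$ term left over after dividing the equivalent Bellman equation (\ref{OrgBel}) by $\tau$ and Taylor-expanding $\mathbb{E}[V(\mathbf{Q}')\,|\,\mathbf{Q}]$). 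Hence the claim reduces to comparing $F_{\mathbf{Q}}(\theta, V) = \min_{\Omega(\mathbf{Q})} F_{\mathbf{Q}}(\theta, V, \Omega(\mathbf{Q}))$ with $F^\dagger_{\mathbf{Q}}(\theta, V) = \min_{\Omega(\mathbf{Q})} F^\dagger_{\mathbf{Q}}(\theta, V, \Omega(\mathbf{Q}))$, the latter being $\mathbf{0}$ by the hypothesis that $(\theta, V)$ solves the approximate Bellman equation (\ref{conperr}) (equivalently (\ref{defapp})).

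Next I would run a two-sided (sandwich) comparison of the two minimized operators. Let $\Omega_1(\mathbf{Q})$ attain (or nearly attain, up to a $\tau$-independent tolerance that is sent to zero at the end) the minimum defining $F^\dagger_{\mathbf{Q}}(\theta, V) = \mathbf{0}$, and let $\Omega_2(\mathbf{Q})$ attain the minimum defining $F_{\mathbf{Q}}(\theta, V)$. Feeding $\Omega_1(\mathbf{Q})$ into $F_{\mathbf{Q}}$ gives the upper bound
\begin{align}
	F_{\mathbf{Q}}(\theta, V) &\le F_{\mathbf{Q}}\big(\theta, V, \Omega_1(\mathbf{Q})\big) \notag \\
	&= F^\dagger_{\mathbf{Q}}\big(\theta, V, \Omega_1(\mathbf{Q})\big) + \nu\, G_{\mathbf{Q}}\big(V, \Omega_1(\mathbf{Q})\big) = \nu\, G_{\mathbf{Q}}\big(V, \Omega_1(\mathbf{Q})\big).
\end{align}
For the lower bound, using $F^\dagger_{\mathbf{Q}}(\theta, V, \Omega_2(\mathbf{Q})) \ge \min_{\Omega(\mathbf{Q})} F^\dagger_{\mathbf{Q}}(\theta, V, \Omega(\mathbf{Q})) = \mathbf{0}$,
\begin{align}
	F_{\mathbf{Q}}(\theta, V) = F^\dagger_{\mathbf{Q}}\big(\theta, V, \Omega_2(\mathbf{Q})\big) + \nu\, G_{\mathbf{Q}}\big(V, \Omega_2(\mathbf{Q})\big) \ge \nu\, G_{\mathbf{Q}}\big(V, \Omega_2(\mathbf{Q})\big).
\end{align}
Combining the two bounds yields $|F_{\mathbf{Q}}(\theta, V)| \le |\nu| \cdot \max\big\{ |G_{\mathbf{Q}}(V, \Omega_1(\mathbf{Q}))|,\, |G_{\mathbf{Q}}(V, \Omega_2(\mathbf{Q}))| \big\}$. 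It then remains to show this right-hand side is $o(1)$: for a fixed $\mathbf{Q}$, the per-stage cost $\widetilde c(\mathbf{Q}, \Omega(\mathbf{Q}))$ contains the quadratic power term $\mathbb{E}\big[\sum_k \|\mathbf{w}_k\|^2 \,\big|\, \mathbf{Q}\big]$, so any minimizer in $\Omega_1(\mathbf{Q})$ or $\Omega_2(\mathbf{Q})$ must lie in a set bounded independently of $\tau$ (otherwise the objective diverges); together with $V \in \mathcal{C}^2(\mathbb{R}_+^K)$ and the smoothness of $G_{\mathbf{Q}}$ in its arguments and in the CSIT kernel (\ref{csiterrknl}), the two quantities $G_{\mathbf{Q}}(V, \Omega_1(\mathbf{Q}))$ and $G_{\mathbf{Q}}(V, \Omega_2(\mathbf{Q}))$ are bounded by a $\tau$-independent constant. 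Multiplying by $|\nu| = o(1)$ gives $|F_{\mathbf{Q}}(\theta, V)| = o(1)$, which is the claim.

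The step I expect to be the main obstacle is making the remainder control uniform. The minimizations defining $F_{\mathbf{Q}}$ and $F^\dagger_{\mathbf{Q}}$ run over the infinite-dimensional partitioned action set $\Omega(\mathbf{Q}) = \{\mathbf{w}_k = \Omega_k(\hat{\mathbf{H}}, \mathbf{Q}) : \forall k, \hat{\mathbf{H}}\}$, so exact minimizers may fail to exist and one must either invoke a measurable-selection / compactness argument on the effective (bounded) action set, or carry $\epsilon'$-optimal actions through the estimates and let $\epsilon' \to 0$ at the end. More substantively, the decomposition $F_{\mathbf{Q}} = F^\dagger_{\mathbf{Q}} + \nu G_{\mathbf{Q}}$ with $\nu = o(1)$ must hold with the $o(\tau)$ Taylor remainder of $\mathbb{E}[V(\mathbf{Q}')\,|\,\mathbf{Q}]$ bounded \emph{uniformly} over this bounded action set and over the arrival law; securing that uniform bound --- for which the finiteness of the moment generating function of $A_k$ in Assumption \ref{assumeA} and the boundedness of the optimal beamformers are the essential ingredients --- is the technical heart of the argument.
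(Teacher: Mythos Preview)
Your proposal is correct and follows essentially the same approach as the paper: the paper isolates the sandwich argument as a separate perturbation lemma (their Lemma~\ref{applemma}, bounding $\min_x[f+\epsilon g]$ between $\min f+\epsilon\min g$ and $\min f+\epsilon g(x^*)$) and then applies it with $f=F^\dagger_{\mathbf{Q}}(\theta,V,\cdot)$, $g=G_{\mathbf{Q}}(V,\cdot)$, $\epsilon=\nu$, whereas you carry out the two-sided comparison directly using the minimizers $\Omega_1,\Omega_2$. Your discussion of why the $G_{\mathbf{Q}}$ terms stay bounded (via the quadratic power cost confining minimizers to a $\tau$-independent set) and of the uniformity of the Taylor remainder is in fact more careful than the paper's, which simply asserts that $F_{\mathbf{Q}}$ and $G_{\mathbf{Q}}$ are ``smooth and bounded.''
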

\begin{proof}	[Proof of Claim \ref{applemma}]
	We first have the following lemma regarding the perturbation of the optimal objective  value due to perturbation of the objective function. 
\begin{Lemma}		\label{applemma}
	Consider the following two  optimization problems:
	\begin{align}
		\mathcal{P}_1(\epsilon) =\min_{\mathbf{x}} \left[f\left(\mathbf{x}\right)+ \epsilon g\left(\mathbf{x}\right)\right]	\qquad \qquad 	\mathcal{P}_2= \min_{\mathbf{x} } f\left(\mathbf{x}\right)	\label{p1p2lamma}
	\end{align}
	for a vector variable $\mathbf{x}$ and $\mathcal{P}_1$ is a perturbed problem w.r.t. $\mathcal{P}_2$. If $\mathcal{P}_2$, $\min_{\mathbf{x}} g\left(\mathbf{x}\right)$  and $g(\mathbf{x}^\ast)$ are bounded where $\mathbf{x}^\ast=\arg\min_{\mathbf{x}}f\left(\mathbf{x}\right)$, then 
	\begin{align}
		|\mathcal{P}_1(\epsilon) - \mathcal{P}_2| =\mathcal{O}(\epsilon) \label{proofxstar222}
	\end{align}
	for sufficiently small $\epsilon$.~\hfill\IEEEQED
\end{Lemma} 
\begin{proof} [Proof of Lemma \ref{applemma}]
	For $\mathcal{P}_1(\epsilon)$, we have
	\begin{align}	\label{xxx1}
		\mathcal{P}_1(\epsilon) \geq \min_{\mathbf{x}}  f\left(\mathbf{x}\right)+ \epsilon \min_{\mathbf{x}}  g\left(\mathbf{x}\right)
	\end{align}
	on the other hand, we have
	\begin{align}	\label{xxx2}
		\mathcal{P}_1(\epsilon) \leq \min_{\mathbf{x}}  f\left(\mathbf{x}\right)+ \epsilon  g\left(\mathbf{x}^\ast\right)
	\end{align}
	where $\mathbf{x}^\ast=\arg\min_{\mathbf{x}}f\left(\mathbf{x}\right)$. Hence, if $\mathcal{P}_2$, $\min_{\mathbf{x}} g\left(\mathbf{x}\right)$  and $g(\mathbf{x}^\ast)$ are bounded, based on (\ref{xxx1}) and (\ref{xxx2}), we have $|\mathcal{P}_1(\epsilon) - \mathcal{P}_2| =\mathcal{O}(\epsilon) $.	
\end{proof}

Treating $F_{\mathbf{Q}}(\theta, V)$ as $\mathcal{P}_1(\nu)$ and $F^\dagger_{\mathbf{Q}}(\theta, V)$ as $\mathcal{P}_2$,  since $F_{\mathbf{Q}}(\theta, V, \Omega(\mathbf{Q}))$ and $G_{\mathbf{Q}}(V,\Omega(\mathbf{Q})) $ are smooth and bounded functions,  using the results from Lemma \ref{applemma}, we have $|F_{\mathbf{Q}}(\theta, V)-F^\dagger_{\mathbf{Q}}(\theta, V)|=\mathcal{O}(\nu)=o(1)$, where $o(1)$ asymptotically goes to zero as $\tau$ goes to zero. Since $F^\dagger_{\mathbf{Q}}(\theta, V)=0$  by the definition in (\ref{defapp}), we have $|F_{\mathbf{Q}}(\theta, V)|=o(1)$. 
\end{proof}

Finally, we establish the following lemma to prove the final result.
\begin{Lemma}		\label{tenlemma}
	Suppose $F_{\mathbf{Q}}(\theta^\ast, V^\ast) = 0$ for all $\mathbf{Q}$ together with the transversality condition in (\ref{transodts})  has a unique solution $(\theta^*, V^\ast)$. If $(\theta, V)$ satisfies the approximate Bellman equation in (\ref{conperr}) and the transversality condition in (\ref{transodts}), then $\theta=\theta^\ast+o\left(1 \right)$, $V\left(\mathbf{Q} \right)=V^\ast\left(\mathbf{Q} \right)+o\left(1 \right)$ for all  $\mathbf{Q} $, where the error term $o(1)$ asymptotically goes to zero  as $\tau$ goes to zero.~\hfill\IEEEQED	
\end{Lemma}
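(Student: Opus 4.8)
The plan is to view $(\theta, V)$ as the \emph{exact} solution of a Bellman equation whose per-stage cost is an $o(1)$-perturbation of the one defining $(\theta^\ast, V^\ast)$, and then to invoke the stability (continuous dependence on the cost) of the \emph{unique} solution of the average-cost Bellman equation. The uniqueness hypothesis is precisely what makes this stability well-posed. The argument has three steps: (i) exhibit the perturbation; (ii) bound the effect of the perturbation on the gain and the relative value function; (iii) combine with Claim \ref{claimmms}.

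\emph{Step 1 (perturbed Bellman equation).} Set $\zeta(\mathbf{Q}) \triangleq F_{\mathbf{Q}}(\theta, V)$. By Claim \ref{claimmms}, $\zeta(\mathbf{Q}) = o(1)$ for every $\mathbf{Q} \in \boldsymbol{\mathcal{Q}}$ as $\tau \to 0$. Using the definition of $F_{\mathbf{Q}}$, reversing the Taylor expansion of $\mathbb{E}[V(\mathbf{Q}')\,|\,\mathbf{Q}]$ carried out in Appendix B, multiplying through by $\tau$, and absorbing the $\nu G_{\mathbf{Q}}$ term and the $o(\tau)$ remainder, one obtains
\begin{align}
	\theta\tau + V(\mathbf{Q}) = \min_{\Omega(\mathbf{Q})}\Big[\big(\widetilde{c}(\mathbf{Q},\Omega(\mathbf{Q})) - \zeta(\mathbf{Q})\big)\tau + \sum_{\mathbf{Q}'}\Pr[\mathbf{Q}'|\mathbf{Q},\Omega(\mathbf{Q})]\,V(\mathbf{Q}')\Big] + o(\tau),\quad \forall \mathbf{Q}\in\boldsymbol{\mathcal{Q}}. \notag
\end{align}
Thus $(\theta, V)$ is, up to the $o(\tau)$ remainder, the exact solution of the average-cost Bellman equation for the controlled queue chain (\ref{trankernel}) with the modified per-stage cost $\widetilde{c} - \zeta$, and it obeys the transversality condition (\ref{transodts}) by hypothesis; $(\theta^\ast, V^\ast)$ is the same object for the unmodified cost $\widetilde{c}$.

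\emph{Step 2 (stability of the solution map).} Let $\Omega^\ast$ be optimal for the unmodified problem and $\Omega^\sharp$ optimal for the modified one. By the performance-difference / Poisson-equation identity for unichain average-cost MDPs, the bias of $\Omega^\ast$ under the modified cost differs from $V^\ast$ only by the (undiscounted) accumulation of $\zeta - \mathbb{E}_{\pi^\ast}[\zeta]$ along the $\Omega^\ast$-chain, with a symmetric bound using $\Omega^\sharp$ controlling $V$ from the other side; subtracting the two stationary cost averages gives $|\theta - \theta^\ast| = |\mathbb{E}_{\pi^\ast}[\zeta]| + o(1)$ and $|V(\mathbf{Q}) - V^\ast(\mathbf{Q})| \le \mathbb{E}^{\Omega^\ast}\big[\textstyle\sum_{t\ge0}|\zeta(\mathbf{Q}(t)) - \mathbb{E}_{\pi^\ast}[\zeta]|\,\big|\,\mathbf{Q}(0)=\mathbf{Q}\big] + o(1)$, where $\pi^\ast$ is the stationary distribution under $\Omega^\ast$. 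Since admissibility forces the queue chain to be positive recurrent with $\mathbb{E}^{\Omega}[\sum_k Q_k^2\log Q_k] < \infty$ (Definition \ref{admissibledis}) and the transversality condition (\ref{transodts}) caps the growth of $V$, a Foster--Lyapunov drift estimate confines the chain, with a geometric tail, to a compact set on which $\zeta = o(1)$ uniformly; the accumulation is then at most a constant times $\sup_{\mathbf{Q}}|\zeta(\mathbf{Q})|$ over that set, i.e.\ $o(1)$. Hence $\theta = \theta^\ast + o(1)$ and $V(\mathbf{Q}) = V^\ast(\mathbf{Q}) + o(1)$ for every $\mathbf{Q}$, which is the claim. (Equivalently, one may argue by contradiction: if the conclusion failed along some $\tau_n \to 0$, the a priori growth bound $V = \mathcal{O}(\sum_k Q_k^2\log Q_k)$, shared with $V^\ast$, together with Arzela--Ascoli on compacts would produce a limit point that solves $F_{\mathbf{Q}}(\cdot)=0$ and inherits the transversality condition, hence equals $(\theta^\ast, V^\ast)$ by uniqueness, a contradiction.)

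\emph{Main obstacle.} The delicate part is Step 2 on the non-finite state space $\boldsymbol{\mathcal{Q}} = [0,\infty)^K$ with an unbounded relative value function $V^\ast = \mathcal{O}(\sum_k Q_k^2\log Q_k)$: the textbook MDP perturbation bounds assume finitely many states and bounded costs, and the $o(1)$ furnished by Claim \ref{claimmms} is a priori only pointwise in $\mathbf{Q}$, not uniform. Upgrading the pointwise residual into a bound that is summable along trajectories is exactly where admissibility (the unichain property plus the super-quadratic Lyapunov moment) and the transversality condition must be invoked, through a drift argument that localizes the chain to a compact region and supplies the geometric ergodicity needed to sum the perturbations; once this localization estimate is in place, the remaining performance-difference bookkeeping is routine.
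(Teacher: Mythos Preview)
Your proposal is correct, and in fact your parenthetical ``equivalently, one may argue by contradiction'' is essentially the paper's entire proof. The paper simply assumes that $V(\mathbf{Q}')=V^\ast(\mathbf{Q}')+\mathcal{O}(1)$ for some $\mathbf{Q}'$, invokes Claim~\ref{claimmms} to get $|F_{\mathbf{Q}}(\theta,V)|=o(1)$, lets $\tau\to 0$ so that $(\theta,V)$ becomes an exact solution of $F_{\mathbf{Q}}=0$ with the transversality condition, and then cites uniqueness to reach a contradiction. That is it; there is no compactness step, no drift argument, no performance-difference bookkeeping.

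Your main line (Steps~1--2) is a genuinely different and more quantitative route: you recast $(\theta,V)$ as the exact solution of a Bellman equation with perturbed stage cost $\widetilde{c}-\zeta$, and then control $|\theta-\theta^\ast|$ and $|V-V^\ast|$ by the accumulated perturbation along the optimal chain, using Foster--Lyapunov localization to upgrade the pointwise $o(1)$ of Claim~\ref{claimmms} to something summable. This buys you an explicit mechanism for the $o(1)$ bound and squarely confronts the unbounded state space and the merely pointwise residual, issues the paper's two-line limit argument glosses over (it never says why the $\tau$-indexed family $(\theta,V)$ should have a limit, nor why that limit inherits transversality). Conversely, the paper's approach is much shorter and leans entirely on the assumed uniqueness, which you also invoke in your parenthetical alternative; your Arzel\`a--Ascoli remark there is exactly the missing ingredient that would make the paper's contradiction rigorous.
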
 
\begin{proof}	[Proof of Lemma \ref{tenlemma}]
	Suppose for some $\mathbf{Q}'$, we have $V\left(\mathbf{Q}' \right)=V^\ast\left(\mathbf{Q}' \right)+\mathcal{O}\left(1 \right)$. From Claim \ref{claimmms}, we have $|F_\mathbf{Q}(\theta, V)|= o(1)$ for all $\mathbf{Q}$. Now let $\tau \rightarrow 0$, we have $(\theta, V)$ satisfies $F_\mathbf{Q}(\theta, V) = 0$ for all $\mathbf{Q}$ and the  transversality condition in (\ref{transodts}). However, $V\left(\mathbf{Q}' \right) \neq V^\ast\left(\mathbf{Q}' \right)$ because of the assumption that $V\left(\mathbf{Q}' \right)=V^\ast\left(\mathbf{Q}' \right)+\mathcal{O}\left(1 \right)$. This contradicts with the condition that $(\theta^*, V^\ast)$ is a unique solution of $F_{\mathbf{Q}}(\theta^\ast, V^\ast) = 0$ for all $\mathbf{Q}$  and the transversality condition in (\ref{transodts}). Hence, we must have $V\left(\mathbf{Q} \right)=V^\ast\left(\mathbf{Q} \right)+o\left(1 \right)$ for all  $\mathbf{Q} $, where $o(1)$ asymptotically goes to zero as $\tau$ goes to zero. Similarly, we can establish $\theta= \theta^\ast + o(1)$.
\end{proof}

\section*{{Appendix C: Proof of Theorem \ref{them111}}}	
For simplicity of notation, we write $J\left(\mathbf{Q}\right)$ in place of $J\left(\mathbf{Q};\boldsymbol{\epsilon}\right)$. We first establish the relationship between $J\left(\mathbf{Q}\right)$ and  $V \left(\mathbf{Q} \right)$. We can observe that if ($c^{\infty}, \{ J\left(\mathbf{Q} \right) \}$) satisfies  the HJB equation in (\ref{cenHJB}), it also satisfies the approximate Bellman equation in (\ref{conperr}). Furthermore, since $J\left(\mathbf{Q}\right)=\mathcal{O}(\sum_{k=1}^K Q_k^2 \log Q_k)$, we have $\lim_{t \rightarrow \infty}\mathbb{E}^{\Omega}\left[J\left(\mathbf{Q}(t)\right) \right]< \infty$ for any admissible policy $\Omega$. Hence, $J\left(\mathbf{Q} \right)=\mathcal{O}(\sum_{k=1}^K Q_k^2 \log Q_k)$ satisfies the transversality condition in (\ref{transodts}). 

Next, we show that the optimal control policy $\Omega^{v \ast}$ obtained by solving the HJB equation in (\ref{cenHJB}) is an admissible control policy in the discrete time system as defined in Definition \ref{admissibledis}. 

Define a \emph{Lyapunov function} as $L(\mathbf{Q})=J \left(\mathbf{Q}\right)$. We further define the \emph{conditional queue drift} as $\Delta (\mathbf{Q})= \mathbb{E}^{\Omega^{v \ast}}\big[\sum_{k=1}^K\left(Q_k(t+1)-Q_k(t) \right)\big|\mathbf{Q}(t)=\mathbf{Q} \big]$ and \emph{conditional Lyapunov drift} as $\Delta L(\mathbf{Q})= \mathbb{E}^{\Omega^{v \ast}}\big[L(\mathbf{Q}(t+1))-L(\mathbf{Q}(t))\big|\mathbf{Q}(t) =\mathbf{Q}\big]$.  We have the following lemma on the relationship between $\Delta (\mathbf{Q})$ and $\Delta L(\mathbf{Q})$.
\begin{Lemma} 	[Relationship between $\Delta (\mathbf{Q})$ and $\Delta L(\mathbf{Q})$]	\label{deldel2}
	$\Delta (\mathbf{Q}) \leq \Delta L(\mathbf{Q})$ if at least one of $\{Q_k: \forall k\}$ is sufficiently large.~\hfill\IEEEQED
\end{Lemma}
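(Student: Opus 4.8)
The plan is to obtain the comparison by a second-order Taylor expansion of $L=J$ along the one-step queue increment $\delta\mathbf{Q}\triangleq\mathbf{Q}(t+1)-\mathbf{Q}(t)$, which reduces the claim to a sign condition on a single sum, and then to extract that sign from the structure of $J$ in (\ref{perflowapp}) and of the optimal virtual policy $\Omega^{v\ast}$. First I would record that the increments are bounded in the relevant sense: for $\tau$ small and any coordinate $k$ with $Q_k$ large the positive-part operator in (\ref{queue_sys}) is inactive, so $\delta Q_k=\big(A_k(t)-G_k(\mathbf{H}(t),\mathbf{w}(t))\big)\tau$, and since $0\le G_k\le R_k$ while $A_k$ has a moment generating function by Assumption \ref{assumeA}, the conditional first and second moments of $\delta Q_k$ given $\mathbf{Q}$ are bounded uniformly in $t$; on the not-large coordinates $\delta Q_k\in[-\min(G_k\tau,Q_k),A_k\tau]$ is bounded as well.

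Since $J\in\mathcal{C}^2(\mathbb{R}_+^K)$, Taylor's theorem with Lagrange remainder gives
\begin{align}
L(\mathbf{Q}(t+1))-L(\mathbf{Q}(t)) = \sum_{k=1}^K \frac{\partial J(\mathbf{Q})}{\partial Q_k}\,\delta Q_k + \frac{1}{2}\sum_{i=1}^{K}\sum_{j=1}^{K}\frac{\partial^2 J(\boldsymbol{\xi})}{\partial Q_i\,\partial Q_j}\,\delta Q_i\,\delta Q_j \notag
\end{align}
for some $\boldsymbol{\xi}$ on $[\mathbf{Q}(t),\mathbf{Q}(t+1)]$; taking $\mathbb{E}^{\Omega^{v\ast}}[\,\cdot\,|\,\mathbf{Q}(t)=\mathbf{Q}]$ and subtracting $\Delta(\mathbf{Q})=\sum_k\mathbb{E}^{\Omega^{v\ast}}[\delta Q_k\,|\,\mathbf{Q}]$ yields
\begin{align}
\Delta L(\mathbf{Q})-\Delta(\mathbf{Q}) = \sum_{k=1}^K\Big(\frac{\partial J(\mathbf{Q})}{\partial Q_k}-1\Big)\,\mathbb{E}^{\Omega^{v\ast}}\big[\delta Q_k\,\big|\,\mathbf{Q}\big] + \frac{1}{2}\,\mathbb{E}^{\Omega^{v\ast}}\Big[\sum_{i,j}\frac{\partial^2 J(\boldsymbol{\xi})}{\partial Q_i\,\partial Q_j}\,\delta Q_i\,\delta Q_j\,\Big|\,\mathbf{Q}\Big] \notag
\end{align}
so it suffices to show the right-hand side is nonnegative once at least one $Q_k$ is large.

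For the gradient factors I would invoke Corollary \ref{PropJk} together with (\ref{perflowapp}): each $\partial J/\partial Q_k$ is increasing in every $Q_j$, and because of the cross-terms $\epsilon_k D_{kj}q_kq_j\log q_j$ (with $D_{kj}>0$) it grows without bound in \emph{every} coordinate $k$ as soon as a single $Q_j\to\infty$, so beyond a threshold $\partial J(\mathbf{Q})/\partial Q_k\ge 1$ for all $k$; by the same asymptotics $J$ is convex on that region, which makes the Hessian term nonnegative (failing that, its entries are $O(\log\|\mathbf{Q}\|)$ against bounded second moments of the increments, hence that term is of lower order and can be absorbed). What remains is to prove $\sum_k\big(\partial J/\partial Q_k-1\big)\,\mathbb{E}^{\Omega^{v\ast}}[\delta Q_k\,|\,\mathbf{Q}]\ge 0$; for this I would split the coordinates into the large set $S$ and its complement and bound the summands individually, using the one-step queue dynamics (\ref{queue_sys}) and the service-rate structure that the HJB characterization (\ref{cenHJB}) and the per-flow solution (\ref{pareJkEg2}) impose on $\Omega^{v\ast}$ over $S$ and $S^{c}$, so as to pair each nonnegative gradient factor with a mean drift of the required sign.

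The main obstacle is precisely this last step. The factors $\partial J/\partial Q_k-1$ are nonnegative, but $\mathbb{E}^{\Omega^{v\ast}}[\delta Q_k\,|\,\mathbf{Q}]$ can be negative on the coordinates that $\Omega^{v\ast}$ drains and nonnegative on the (near-)empty ones, so nonnegativity of the sum is not automatic and must be deduced from the joint structure of $\Omega^{v\ast}$ and $J$ — the per-flow form (\ref{pareJkEg2}) of $J_k$, its quadratic asymptotics from Corollary \ref{PropJk}, and the way the coupling terms of (\ref{perflowapp}) feed into the gradient. Carrying out this bookkeeping over the partition $S\cup S^{c}$, together with verifying convexity of $J$ on the relevant region, is the technical heart of the argument; the Taylor estimate above is routine.
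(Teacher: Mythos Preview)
Your plan is the paper's route, only more scrupulous. The paper's proof is three lines: it applies the convexity inequality
\[
L(\mathbf{Q}(t+1))-L(\mathbf{Q}(t))\ \ge\ \sum_{k=1}^K \frac{\partial L(\mathbf{Q})}{\partial Q_k}\,\big(Q_k(t+1)-Q_k(t)\big)
\]
directly (so no second--order remainder ever appears), and then asserts in one step that
\[
\sum_{k=1}^K \frac{\partial L(\mathbf{Q})}{\partial Q_k}\,\delta Q_k\ \ge\ \sum_{k=1}^K \delta Q_k,
\]
citing only the standing hypothesis of Theorem~\ref{them111} that each $\partial J/\partial Q_k$ is increasing in every coordinate (hence exceeds $1$ once some $Q_j$ is large). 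That is the entire argument; there is no partition into large and small coordinates, no analysis of the service structure of $\Omega^{v\ast}$, and no separate handling of the Hessian term.

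In particular, the ``main obstacle'' you isolate --- that $\partial J/\partial Q_k-1\ge 0$ together with increments $\delta Q_k$ of indefinite sign does \emph{not} give the comparison termwise --- is simply not addressed in the paper. The paper treats the second inequality as an immediate consequence of the monotonicity hypothesis and moves on. So relative to the paper you are not missing an idea; you have put your finger on a step that the published proof leaves unjustified. If your aim is to match the paper, drop the Taylor remainder and the $S\cup S^{c}$ bookkeeping and write the two displayed inequalities above. If your aim is a rigorous argument, be aware that the paper does not supply one for the sign issue you raise.
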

\begin{proof}	[Proof of Lemma \ref{deldel2}]	
According to the definition of $\Delta L(\mathbf{Q}) $, we have
\begin{small}	\begin{align}
		\Delta L(\mathbf{Q}) &= \mathbb{E}^{\Omega^{v \ast}}\big[L(\mathbf{Q}(t+1))-L(\mathbf{Q}(t))\big|\mathbf{Q}(t)=\mathbf{Q} \big]	\notag \\
		&\geq\mathbb{E}^{\Omega^{v \ast}}\left[  \sum_{k=1}^K \frac{\partial L(\mathbf{Q})}{\partial Q_k}\left(Q_k(t+1)-Q_k(t) \right)	\bigg|\mathbf{Q}(t)=\mathbf{Q}  \right]	\notag \\
		& \overset{(a)}\geq  \mathbb{E}^{\Omega^{v \ast}}\left[  \sum_{k=1}^K\left(Q_k(t+1)-Q_k(t) \right)\bigg|\mathbf{Q}(t)=\mathbf{Q}  \right]	= \Delta (\mathbf{Q})\quad 	 \label{res2222usew}	
	\end{align}\end{small}if at least one of $\{Q_k: \forall k\}$ is sufficiently large, where $(a)$ is due to the condition that $\left\{\frac{\partial J\left(\mathbf{Q}; \boldsymbol{\epsilon} \right)}{\partial Q_k}: \forall k \right\}$  are increasing functions of all $Q_k$.
\end{proof}

Since $(\lambda_1, \dots, \lambda_K)$ is strictly interior to the stability region $\Lambda$, we have  $(\lambda_1+\delta_1, \dots, \lambda_K+\delta_K)\in \Lambda$  for some positive $\boldsymbol{\delta}=\{\delta_k: \forall k\}$ \cite{capacityregion}. From \emph{Corollary 1} of \cite{neelyitp}, there exists a stationary randomized CSI  only policy $\widetilde{\Omega}$ (that chooses beamforming vectors independent of QSI) such that
\begin{align}
	&\sum_{k=1}^K \mathbb{E}^{\widetilde{\Omega}}\left[ \|\mathbf{w}_k\|^2 \big|\mathbf{Q}(t)=\mathbf{Q}  \right]=\overline{P}(\boldsymbol{\delta}) 	\notag \\
	&\mathbb{E}^{\widetilde{\Omega}}\left[ G_k(\mathbf{H}, \mathbf{w})\big|\mathbf{Q}(t)=\mathbf{Q}  \right] \geq \lambda_k+\delta_k, \quad \forall k		\label{49ersimpor}
\end{align}
where $\overline{P}(\boldsymbol{\delta})$ is the minimum average power required to stabilize the system when arrival rate is $(\lambda_1+\delta_1, \dots, \lambda_K+\delta_K)$. The Lyapunov drift $\Delta L(\mathbf{Q})$ is given by:
\begin{small}\begin{align}
	& \Delta L(\mathbf{Q})+ \mathbb{E}^{\Omega^{v \ast}} \left[ \sum_{k=1}^K  \|\mathbf{w}_k\|^2 \tau  \bigg|\mathbf{Q}(t)=\mathbf{Q}  \right] \notag \\
	\approx  & \sum_{k=1}^K \frac{\partial L(\mathbf{Q})}{\partial Q_k} \lambda_k \tau  + \mathbb{E}^{\Omega^{v \ast}}\left[  \sum_{k=1}^K  \left(\|\mathbf{w}_k\|^2  \tau \right.\right. \notag \\
	& \left.\left.  -\frac{\partial L(\mathbf{Q})}{\partial Q_k}G_k(\mathbf{H}, \mathbf{w})\tau\right)\bigg|\mathbf{Q}(t)=\mathbf{Q}  \right] \notag \\
	\overset{(b)}\leq  & \sum_{k=1}^K \frac{\partial L(\mathbf{Q})}{\partial Q_k} \lambda_k \tau  + \mathbb{E}^{\widetilde{\Omega}}\left[  \sum_{k=1}^K  \left(\|\mathbf{w}_k\|^2  \tau \right.\right. \notag \\
	& \left.\left.	-\frac{\partial L(\mathbf{Q})}{\partial Q_k}G_k(\mathbf{H}, \mathbf{w})\tau\right)\bigg|\mathbf{Q}(t)=\mathbf{Q}  \right]\notag \\
	\overset{(c)} \leq  & -\sum_{k=1}^K \frac{\partial L(\mathbf{Q})}{\partial Q_k} \delta_k \tau  +\overline{P}(\boldsymbol{\delta}) \tau  \notag \\
	\Rightarrow & \Delta L(\mathbf{Q}) \leq  -\sum_{k=1}^K \frac{\partial L(\mathbf{Q})}{\partial Q_k} \delta \tau  +\overline{P}(\boldsymbol{\delta}) \tau	\label{asdadaappne}
\end{align}\end{small}if at least one of $\{Q_k: \forall k\}$ is sufficiently large, where $(b)$ is due to $\Omega^{v \ast}$ achieves the minimum of the HJB equation in (\ref{cenHJB}), and $(c)$ is due to (\ref{49ersimpor}). Since $\left\{\frac{\partial J\left(\mathbf{Q}; \boldsymbol{\epsilon} \right)}{\partial Q_k}: \forall k \right\}$  are increasing functions of all $Q_k$ and combing (\ref{asdadaappne}) with (\ref{res2222usew}), we have 
\begin{align}
	\Delta (\mathbf{Q})<0	\label{driftsada}
\end{align}
if at least one of $\{Q_k: \forall k\}$ is sufficiently large.

Define the \emph{semi-invariant moment generating function} of $A_k-G_k\big(\mathbf{H},\Omega^{v \ast}(\hat{\mathbf{H}},\mathbf{Q})\big)$ as $\phi_k(r,\mathbf{Q})=\ln \big(\mathbb{E}\big[e^{\left(A_k-G_k(\mathbf{H},\Omega^{v \ast}(\hat{\mathbf{H}},\mathbf{Q})  ) \right)r}\big| \mathbf{Q} \big] \big)$. From (\ref{driftsada}), $\mathbb{E}\big[A_k-G_k(\mathbf{H},\Omega^{v \ast}(\hat{\mathbf{H}},\mathbf{Q}))   \big|\mathbf{Q}\big]< 0$ when $Q_k > \overline{Q}_k$ for some large $\overline{Q}_k$. Hence,  $\phi_k(r,\mathbf{Q})$ will have a unique positive root $r_k^\ast(\mathbf{Q})$ ($\phi_k(r_k^\ast(\mathbf{Q}),\mathbf{Q})=0$) \cite{dspgal}.  Let $r_k^\ast= r_k^\ast(\overline{\mathbf{Q}})$, where $\overline{\mathbf{Q}}=(\overline{Q}_1, \dots, \overline{Q}_K)$. We then have the following lemma on the tail distribution,  i.e.,  the complementary cumulative distribution function of $Q_k$, $\Pr\big[ Q_k\geq x\big]$.
\begin{Lemma}	[Kingman Bound \cite{dspgal}]	\label{kingmanres}
	$F_k(x) \triangleq \Pr\big[ Q_k \geq x \big] \leq e^{-r_k^\ast x} $, if $x \geq \overline{x}_k$ for sufficiently large $\overline{x}_k$.~\hfill\IEEEQED
\end{Lemma}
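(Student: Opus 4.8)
The plan is to reduce the tail of $Q_k$ to the tail of a random walk with negative drift and then apply a Chernoff / exponential-supermartingale argument with exponent $r_k^\ast$. First I would observe that because the goodput is bounded ($G_k \le R_k$ with $R_k$ fixed), the reflection at zero in (\ref{queue_sys}) is inactive once $Q_k(t) > R_k\tau$: for such states $Q_k(t+1) = Q_k(t) + U_k(t)$ with increment $U_k(t) = \big(A_k(t) - G_k(\mathbf{H}(t),\Omega^{v\ast}(\hat{\mathbf{H}}(t),\mathbf{Q}(t)))\big)\tau$. By (\ref{driftsada}) the conditional drift $\mathbb{E}[U_k(t)\,|\,\mathbf{Q}(t)]$ is strictly negative whenever $Q_k(t) \ge \overline{Q}_k$. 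Together with the existence of the moment generating function of $A_k$ (Assumption \ref{assumeA}), this makes $r \mapsto \phi_k(r,\mathbf{Q})$ convex with $\phi_k(0,\mathbf{Q})=0$, $\partial_r\phi_k(0,\mathbf{Q}) = \mathbb{E}[U_k\,|\,\mathbf{Q}] < 0$ and $\phi_k(r,\mathbf{Q}) \to +\infty$, so it has the unique positive root $r_k^\ast(\mathbf{Q})$ used in the statement; I then set $r_k^\ast = r_k^\ast(\overline{\mathbf{Q}})$ as in the lemma.

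Next I would run an exponential-Lyapunov argument. Consider $Y(t) = e^{r_k^\ast Q_k(t)}$ and the stopping time $\sigma = \inf\{t \ge 0 : Q_k(t) \le \overline{Q}_k\}$ started from a state with $Q_k \ge \overline{Q}_k$. While $Q_k(t) \ge \overline{Q}_k$, the update law above gives $\mathbb{E}[Y(t+1)\,|\,\mathcal{F}_t] = Y(t)\,e^{\phi_k(r_k^\ast,\mathbf{Q}(t))}$, so it remains to show $\phi_k(r_k^\ast,\mathbf{Q}) \le 0$ for all $\mathbf{Q}$ with $Q_k \ge \overline{Q}_k$, which makes $Y(t\wedge\sigma)$ a non-negative supermartingale. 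I would obtain this inequality from a monotonicity/dominance property of the optimal virtual policy: since $\{\partial \widetilde V(\mathbf{Q})/\partial Q_k\}$ are increasing and divergent (Corollary \ref{PropJk} and (\ref{perflowapp})), $\Omega^{v\ast}$ allocates more power and serves $G_k$ more aggressively as $Q_k$ grows, so the conditional law of $U_k$ given $\{Q_k = q \ge \overline{Q}_k\}$ is stochastically dominated by the law at $q = \overline{Q}_k$, whence $\phi_k(r_k^\ast,\mathbf{Q}) \le \phi_k(r_k^\ast,\overline{\mathbf{Q}}) = 0$. A cleaner alternative, which I would fall back on if this monotonicity is awkward to pin down, is to couple $Q_k(\cdot)$ above $\overline{Q}_k$ with an auxiliary walk $\widetilde W$ whose i.i.d.\ increments have the worst (largest-MGF) law over $\{Q_k \ge \overline{Q}_k\}$, so that $Q_k(t) \le \overline{Q}_k + \widetilde W(t)$ pathwise, and then invoke the classical Kingman bound of \cite{dspgal} for $\widetilde W$.

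Then I would convert the supermartingale into the tail bound. Unrolling the Lindley-type recursion back to the last visit of $\{Q_k \le \overline{Q}_k\}$ gives $Q_k(t) \le \overline{Q}_k + \max_{0\le m \le t-\sigma} \sum_{j=1}^{m} U_k(\sigma+j)$; passing to the stationary regime, taking a union bound over $m$, and applying the Chernoff inequality with exponent $r_k^\ast$ — using $\mathbb{E}\big[e^{r_k^\ast \sum_j U_k}\big] \le 1$, which telescopes along the excursion because $\phi_k(r_k^\ast,\cdot) \le 0$ there — yields $\Pr[Q_k - \overline{Q}_k \ge y] \le e^{-r_k^\ast y}$. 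Equivalently, Doob's maximal inequality applied to the stopped supermartingale $Y(t\wedge\sigma)$ gives the same estimate. Writing $x = \overline{Q}_k + y$ and absorbing the additive shift $\overline{Q}_k$ (together with the transient behavior below $\overline{Q}_k$) into the convention that the bound is asserted only for $x \ge \overline{x}_k$ with $\overline{x}_k$ sufficiently large produces $F_k(x) = \Pr[Q_k \ge x] \le e^{-r_k^\ast x}$, as claimed.

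The hard part will be the step that replaces the genuinely state-dependent increments $U_k(t)$ by a single worst-case exponent $r_k^\ast$ valid throughout $\{Q_k \ge \overline{Q}_k\}$ — that is, justifying $\phi_k(r_k^\ast,\mathbf{Q}) \le 0$ there, either via monotonicity of $\Omega^{v\ast}$ in the QSI or via the coupling with an i.i.d.-increment walk. Once that uniform negativity of the log-MGF along excursions is established, the remainder is the standard Chernoff/Doob bookkeeping underlying Kingman's bound \cite{dspgal}.
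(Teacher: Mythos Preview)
The paper does not prove this lemma at all: it simply states the Kingman bound with the citation \cite{dspgal} and moves on, treating it as a classical result imported from Gallager's book. There is no argument in Appendix~C beyond the preceding sentence noting that the negative drift $\mathbb{E}[A_k-G_k\mid\mathbf{Q}]<0$ for $Q_k>\overline{Q}_k$ forces $\phi_k(\cdot,\mathbf{Q})$ to have a unique positive root, followed by the declaration $r_k^\ast=r_k^\ast(\overline{\mathbf{Q}})$.

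Your proposal therefore goes well beyond what the paper actually does. In particular, you correctly isolate a genuine technical issue that the paper silently elides: the increments $U_k(t)$ are \emph{not} i.i.d.\ because $G_k$ depends on the full queue vector $\mathbf{Q}(t)$ through the policy $\Omega^{v\ast}$, so the textbook Kingman bound of \cite{dspgal} does not apply verbatim. Your supermartingale argument hinges on showing $\phi_k(r_k^\ast,\mathbf{Q})\le 0$ for \emph{all} $\mathbf{Q}$ with $Q_k\ge\overline{Q}_k$, whereas $r_k^\ast$ is pinned at the single point $\overline{\mathbf{Q}}$. Your monotonicity route is plausible in spirit but delicate, since increasing $Q_k$ may, through interference coupling, alter the service seen by flow $k$ in a way that is not obviously monotone once the other $Q_j$ vary freely; and your coupling fallback needs a uniform worst-case law over an unbounded set of states. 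These are exactly the right worries, and the paper offers no help with them --- it simply invokes the name ``Kingman bound'' and the reference. If you want to match the paper's level of rigor, a one-line citation suffices; if you want a self-contained proof, the uniformity step you flag is indeed the crux and would need a sharper structural argument than either the paper or your sketch currently supplies.
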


Finally, we check whether $\Omega^{v \ast}$ stabilizes the system according to the definition of the admissible control policy in Definition \ref{admissibledis} as follows:
\begin{small}\begin{align}
	 &\mathbb{E}^{\Omega^{v \ast}} \left[J\left(\mathbf{Q}\right) \right] < \sum_{k=1}^K \mathbb{E}^{\Omega^{v \ast}} \left[ Q_k^3  \right]= \sum_{k=1}^K \left[\int_0^{\infty} \Pr \left[Q_k^3 >s \right] \mathrm{d}s \right] 	\notag \\
	 \leq & \sum_{k=1}^K \left[ \int_{0}^{\overline{x}_k^3} F_k(s^{1/3}) \mathrm{d}s  + \int_{\overline{x}_k^3}^{\infty} F_k(s^{1/3})\mathrm{d}s \right] 	\notag \\
	 \leq & \sum_{k=1}^K \left[\overline{x}_k^3+ \int_{\overline{x}_k^3}^{\infty}  e^{-r_k^\ast s^{1/3}}  \mathrm{d}s \right]	  <  \infty
\end{align}\end{small}
Therefore, $\Omega^{v \ast}$ is an admissible control policy and we have  $V \left(\mathbf{Q} \right)=J\left(\mathbf{Q}\right)$ and $\theta=c^\infty$. Furthermore, using Corollary \ref{cor1}, we have $V^\ast\left(\mathbf{Q} \right)=J\left(\mathbf{Q}\right)+o(1)$ and $\theta^\ast=c^\infty+o(1)$ for sufficiently small $\tau$.

\section*{Appendix D: Proof of Lemma \ref{linearAp}}	
For sufficiently large delay price $\boldsymbol{\gamma}$, ZF beamforming is optimal for  Problem \ref{fluid problem1} when $\boldsymbol{\epsilon}=\mathbf{0}$, i.e., ${\mathbf{h}}_j \mathbf{w}_k^{\ast} = 0$,  $\forall j \neq k$. Therefore, the HJB equation for the base decoupled VCTS when $q_k \geq 0$  ($\forall k$) is given by
\begin{align}	
		& \min_{\{\mathbf{w}: \forall \mathbf{H} \}} \mathbb{E}\bigg[ \sum_{k=1}^K \bigg( \|\mathbf{w}_k\|^2+\gamma_k \frac{q_k}{\lambda_k} -{c_k^{\infty}}  + \frac{\partial  J \left(\mathbf{q};\mathbf{0}\right)} {\partial q_k} \notag \\
		&\Big(- R_k\mathbf{1}\big(R_k\leq \log\big(1+ \big|\mathbf{h}_k  \mathbf{w}_k\big|^2\big)\big)  + \lambda_k\Big) \bigg)\bigg| \mathbf{q} \bigg] =0	\label{orihjbapp11}
\end{align}
Suppose $ J \left(\mathbf{q};\mathbf{0}\right)=\sum_{k=1}^K J_k\left(q_k \right)$, where $J_k\left(q_k \right)$ is the per-flow fluid value function, which is the solution of the following per-flow HJB equation:
\begin{align}		\label{per_flow}
		&\min_{\{\mathbf{w}_k: \forall \mathbf{H} \}} \mathbb{E}\bigg[  \|\mathbf{w}_k\|^2+\gamma_k \frac{q_k}{\lambda_k}-{c_k^{\infty}}   + J_k'\left(q_k \right) \notag \\
		&\Big(- R_k\mathbf{1}\big(R_k\leq \log\big(1+ \big|\mathbf{h}_k  \mathbf{w}_k\big|^2\big)\big)  + \lambda_k\Big) \bigg| q_k \bigg] =0	
\end{align}
Then, the L.H.S. of (\ref{orihjbapp11}) becomes: $\text{L.H.S. of (\ref{orihjbapp11})} 	= \min_{\{\mathbf{w}: \forall \mathbf{H} \}} \mathbb{E}\big[ \sum_{k=1}^K \big( \|\mathbf{w}_k\|^2+\gamma_k \frac{q_k}{\lambda_k} -{c_k^{\infty}}  + J_k'\left(q_k \right) \big(- R_k\mathbf{1}\big(R_k\leq \log\big(1+ \big|\mathbf{h}_k  \mathbf{w}_k\big|^2\big)\big)  + \lambda_k\big) \big)\big| q_k \big]=\sum_{k=1}^K\min_{\{\mathbf{w}_k: \forall \mathbf{H} \}} \mathbb{E}\big[  \|\mathbf{w}_k\|^2+\gamma_k \frac{q_k}{\lambda_k} -{c_k^{\infty}}  + J_k'\left(q_k \right)\big(- R_k\mathbf{1}\big(R_k\leq \log\big(1+ \big|\mathbf{h}_k  \mathbf{w}_k\big|^2\big)\big)  + \lambda_k\big) \big| q_k \big] =0$. Therefore, we show that $J \left(\mathbf{q};\mathbf{0}\right)=\sum_{k=1}^K J_k\left(q_k \right)$ is the solution of (\ref{orihjbapp11}).

Next, we calculate $J_k\left(q_k \right)$ by solving the ODE in (\ref{per_flow}). We first write $\mathbf{w}_k=\sqrt{p_k} \widetilde{\mathbf{w}}_k$, where $\widetilde{\mathbf{w}}_k$ has the same direction as $\mathbf{w}_k$ and has unit norm. Then, the ODE in (\ref{per_flow}) can be written as $\min_{\{\mathbf{p}_{k}: \forall \mathbf{H}\}  } \mathbb{E}\big[ p_k+\gamma_k \frac{q_k}{\lambda_k}  -{c_k^{\infty}}  +   J_k'\left(q_k \right)\left(- R_k \mathbf{1}\left(R_k \leq  \log(1+|\mathbf{h}_k \widetilde{\mathbf{w}}_k|^2 p_k)\right) + \lambda_k \right)\big| q_k\big] =0$. The optimal control action that minimize the L.H.S. of the above equation is given by: $p_k^{\ast}=0$ if $|\mathbf{h}_k \widetilde{\mathbf{w}}_k|^2  \leq  	\frac{2^{R_k}-1}{J_k'\left(q_k \right) R_k }$ and $p_k^{\ast}=\frac{2^{R_k}-1}{ |\mathbf{h}_k \widetilde{\mathbf{w}}_k|^2}$ if $|\mathbf{h}_k \widetilde{\mathbf{w}}_k|^2  >  	\frac{2^{R_k}-1}{J_k'\left(q_k \right) R_k }$. Then the per-flow HJB equation can be written as
\begin{align}		
		 &\mathbb{E}\Big[ p_k^{\ast}\Big| q_k\Big]+\gamma_k \frac{q_k}{\lambda_k}  -{c_k^{\infty}}  +   J_k'\left(q_k \right)	\label{transformperf} \\
		 &\left(- R_k \mathbb{E}\Big[ \mathbf{1}\left(R_k \leq  \log(1+|\mathbf{h}_k \widetilde{\mathbf{w}}_k|^2 p_k^{\ast})\right) \Big| q_k\Big]+ \lambda_k \right) =0	\notag
\end{align}
To solve the ODE in (118), we need to calculate the two terms involving the expectation operator. Since $\mathbf{h}_k \sim \mathcal{CN}\left(0,\mathbf{I}\right)$, we have $\mathbf{h}_k \widetilde{\mathbf{w}}_k \sim \mathcal{CN}\left(0,1\right) $ according to the \emph{bi-unitarily invariant property} \cite{RMT}. Then we have $|\mathbf{h}_k \widetilde{\mathbf{w}}_k|^2 \sim \exp(1)$. Then, $\mathbb{E}\big[ p_k^{\ast}\big| q_k\big] = \int_{\frac{2^{R_k}-1}{J_k'\left(q_k \right) R_k }}^{\infty}\frac{2^{R_k}-1}{x} e^{-x} \mathrm{d}x=(2^{R_k}-1 )E_1\big(\frac{2^{R_k}-1}{J_k'\left(q_k \right) R_k } \big)$ and $\mathbb{E}\big[ \mathbf{1}\left(R_k \leq  \log(1+|\mathbf{h}_k \widetilde{\mathbf{w}}_k|^2 p_k^{\ast})\right) \big| q_k\big]=\int_{\frac{2^{R_k}-1}{J_k'\left(q_k \right) R_k }}^{\infty} 1 \cdot e^{-x} \mathrm{d}x = e^{-\frac{2^{R_k}-1}{J_k'\left(q_k \right) R_k }}$, where $E_1(z)  \triangleq \int_1^{\infty} \frac{e^{-tz}}{t}\mathrm{d}t = \int_z^{\infty} \frac{e^{-t}}{t}\mathrm{d}t $ is the exponential integral function. {We next calculate ${c_k^{\infty}}$. Since ${c_k^{\infty}}$ satisfies the sufficient conditions in (1), (2) in Lemma \ref{HJB1}, we have 
\begin{align}
	R_k  e^{-\frac{2^{R_k}-1}{J_k'\left(0\right) R_k }} =\lambda_k,  \quad a_kE_1\left(\frac{a_k}{J_k'\left(0 \right) R_k } \right)={c_k^{\infty}} 
\end{align}
where $a_k=2^{R_k}-1$. Therefore, ${c_k^{\infty}}=a_k E_1\big(\log \frac{R_k }{\lambda_k} \big)$. }

Substituting the two calculation results and ${c_k^{\infty}}$ into (\ref{transformperf}), we have
\begin{align}		\label{transformODEapp}
		& a_kE_1\Big(\frac{a_k}{J_k'\left(q_k \right) R_k } \Big)+\gamma_k \frac{q_k}{\lambda_k} -{c_k^{\infty}}  +   J_k'\left(q_k \right)	\notag \\
		 & \big(- R_k e^{-\frac{a_k}{J_k'\left(q_k \right) R_k }}+ \lambda_k \big) =0
\end{align}	 
According to  Section \emph{0.1.7.3} of \cite{HandODE}, the parametric solution of the ODE in (\ref{transformODEapp}) is given as shown in (\ref{pareJkEg2}) in Lemma \ref{linearAp}.

\section*{Appendix E: Proof of Corollary \ref{PropJk}}	
Firstly, we obtain the highest order term of $J_k\left(q_k \right)$. The  series expansions of the exponential integral function and exponential function are given as: $E_1(x) = -\gamma_{eu} - \log x - \sum_{n=1}^{\infty} \frac{\left(-x \right)^n}{n ! n}$, $e^x =\sum_{n=0}^{\infty} \frac{x^n}{n!}$.  Based on the parametric solution of $q_k(y)$ in (\ref{pareJkEg2}), we have the following asymptotic property of $q_k(y)$: $q_k(y) =\frac{\lambda_k (R_k  -\lambda_k)}{\gamma_k} \mathcal{O}\left( y\right)$. Similarly,  we have the following asymptotic property of $J_k(y)$: $J_k(y) =\frac{\lambda_k (R_k  -\lambda_k)}{2\gamma_k} \mathcal{O}\left( y^2\right)$. The two asymptotic equations imply that there exists constants $C_1$ and $C_1'$ such that $C_1   y \leq q_k(y) \leq C_1' y$ when $y \rightarrow \infty$. Similarly, there exist constants $C_2$ and $C_2'$ such that $C_2 y^2 \leq J_k(y) \leq C_2' y^2$ when $y \rightarrow \infty$.  Combining the above two inequalities, we have $\frac{C_2}{C_1^{' 2}} q_k^2 \leq J_k(q_k) \leq \frac{C_2'}{C_1^2}  q_k^2$. Therefore, we conclude that $J_k \left( q_k \right) = \mathcal{O} \left(q_k^2 \right),  \text{as } q_k \rightarrow \infty$.

Next, we  obtain the coefficient of the highest order term $q_k^2$. Again, using the series expansion of $E_1(x)$, $e^x$ and the asymptotic property of $J_k\left(q_k \right)$, the per-flow HJB equation in (\ref{transformODEapp})  implies
\begin{align}	\label{final111}
	 J_k'\left(q_k \right) = \frac{\gamma_k}{\lambda_k(R_k -\lambda_k)} q_k + o(q_k)
\end{align}
Furthermore, from $J_k \left( q_k \right) = \mathcal{O} \left(q_k^2 \right)$, we have $\overline{C_1} q_k^2  \leq J_k \left( q_k \right) \leq \overline{C_1'} q_k^2  \Rightarrow 2\overline{C_1} q_k  \leq J_k' \left( q_k \right) \leq 2\overline{C_1'} q_k$. Combining with (\ref{final111}) to match the coefficient of the highest order term of $J_k\left(q_k \right)$, we have $J_k\left(q_k \right) = \frac{\gamma_k}{2\lambda_k(R_k -\lambda_k)} q_k^2 + o(q_k^2)$.

\section*{Appendix F: Proof of Theorem \ref{ErrorEg2}}	
Taking the first order Taylor expansion of the L.H.S. of the HJB equation in (\ref{cenHJB}) at $\boldsymbol{\epsilon}=\mathbf{0}$, $\mathbf{w}=\mathbf{w}^{\ast}$ (where $\mathbf{w}^\ast$ is the optimal control actions given  in (\ref{transformperf}) when $\boldsymbol{\epsilon}=\mathbf{0}$) and using  parametric optimization analysis \cite{paraanay},  we have the following result regarding the approximation error:
\begin{small}
\begin{align}
{J\left(\mathbf{q};\boldsymbol{\epsilon} \right) - J\left(\mathbf{q};\mathbf{0}\right)} =\sum_{k=1}^K \sum_{j \neq k} \epsilon_k  \widetilde{J}_{kj}(\mathbf{q})+ \mathcal{O}(\epsilon^2)	\label{tayloee}
\end{align}
\end{small}where $\widetilde{J}_{kj}(\mathbf{q})$ is meant to capture the coupling terms in $J\left(\mathbf{q};\boldsymbol{\epsilon} \right) $  which satisfies the following PDE:
\begin{small}
\begin{align}
	&\sum_{i=1}^K \left(\lambda_i  - R_i   \mathbb{E} \left[ \mathbf{1}\big(R_i \leq C_i^0(p_i^{\ast}) \big) \big| q_i  \right]  \right) \frac{\partial \widetilde{J}_{kj}\left(\mathbf{q}\right) }{\partial q_i} \notag \\
	 &+\frac{J_k'\left(q_k \right)}{\ln 2}  \mathbb{E} \bigg[ p_j^{\ast} \frac{R_k  \eta e^{\eta(R_k-C_k^0(p_k^{\ast})) }}{(1+e^{\eta(R_k-C_k^0(p_k^{\ast}))})^2} \frac{|\mathbf{h}_k \widetilde{\mathbf{w}}_k^{\ast}|^2 p_k^{\ast}}{1+ |\mathbf{h}_k \widetilde{\mathbf{w}}_k^{\ast}|^2 p_k^{\ast}} \bigg| \mathbf{q} \bigg] =\widetilde{c}_{k}^\infty	\notag
\end{align}
\end{small}with boundary condition $\widetilde{J}_{kj}\left(\mathbf{q} \right)\big|_{\begin{small} q_j = 0\end{small}}=0$ or $\widetilde{J}_{kj}\left(\mathbf{q} \right)\big|_{\begin{small} q_k = 0\end{small}}=0$, where we write $\mathbf{w}_k=\sqrt{p_k} \widetilde{\mathbf{w}}_k$, $C_k^0(p_k^{\ast}) \triangleq \log(1+ |\mathbf{h}_k \widetilde{\mathbf{w}}_k^{\ast}|^2 p_k^{\ast} )$ and $p_k^{\ast}$ is given in Appendix E, and $\widetilde{c}_{k}^\infty=\frac{\partial c^\infty \left(\boldsymbol{\epsilon}\right)}{\partial \epsilon_k}$ is constant (where we treat $c^\infty$ in the coupled system as a function of $\boldsymbol{\epsilon}$). Here we use the logistic function $f^{\eta}\left(x,y \right)=\frac{1}{1+e^{\eta (x-y)}}$ as a smooth approximation for the indicator function in $G(\mathbf{H}, \mathbf{w})$ in (\ref{goodput}), where $\eta > 0$ is a parameter.  Except for the partial differential term, the above PDE  only involves $q_k$ and $q_j$. Therefore, we suppose  $J_{kj}(\mathbf{q})$ is a function of $q_k$ and $q_j$.  Note that $\frac{\eta e^{\eta(R_k-C_k^0(p_k^{\ast})) }}{(1+e^{\eta(R_k-C_k^0(p_k^{\ast}))})^2}$ can be approximated by $\frac{\eta e^{\eta(R_k-C_k^0(p_k^{\ast})) }}{(1+e^{\eta(R_k-C_k^0(p_k^{\ast}))})^2}= \frac{\eta}{5} \mathbf{1}\big(|R_k-C_k^0(p_k^{\ast})|\leq \frac{2}{\eta} \big)$. Without loss of generality, we choose $\eta=5$ and calculate the expectation in the above PDE as follows: $\mathbb{E} \Big[ \frac{ \eta e^{\eta(R_k-C_k^0(p_k^{\ast})) }}{(1+e^{\eta(R_k-C_k^0(p_k^{\ast}))})^2} \cdot\frac{|\mathbf{h}_k \widetilde{\mathbf{w}}_k^{\ast}|^2 p_k^{\ast}}{1+ |\mathbf{h}_k \widetilde{\mathbf{w}}_k^{\ast}|^2 p_k^{\ast}} \Big| \mathbf{q} \Big] =\int_{\frac{2^{R_k}-1}{J_k'\left(q_k \right) R_k }}^{\infty} \frac{2^{R_k}-1}{2^{R_k}} e^{-x}\mathrm{d}x = \frac{2^{R_k}-1}{2^{R_k}} e^{-\frac{2^{R_k}-1}{J_k'\left(q_k \right) R_k }} = \frac{2^{R_k}-1}{2^{R_k}} \mathcal{O}(1)$. Furthermore, we can calculate the other terms involving expectation in the above PDE as follows: $\mathbb{E} \left[ \mathbf{1}\big(R_i \leq C_i^0(p_i^{\ast}) \big) \big| q_i  \right] =e^{-\frac{2^{R_i}-1}{J_i'\left(q_i \right) R_i }}= \mathcal{O}(1)$ and $\mathbb{E} [p_j^{\ast}]=\left(2^{R_j}-1 \right)E_1\left(\frac{2^{R_j}-1}{J_j'\left(q_j \right) R_j } \right) =  \left(2^{R_j}-1 \right)   \mathcal{O}\big(\log(J_j'(q_j))\big)= 2\left(2^{R_j}-1 \right)  \mathcal{O}(\log q_j)$. Substituting the three calculation results  into the above PDE, we obtain $\sum_{i=1}^K \left(\lambda_i  - R_i  \mathcal{O}\left(1\right) \right)  \frac{\partial \widetilde{J}_{kj}\left(\mathbf{q}\right) }{\partial q_i}+D_{kj}'\mathcal{O}\left(q_k \log q_j\right) =\widetilde{c}_{k}^\infty$, where $D_{kj}' \triangleq \frac{\gamma_k(2^{R_j}-1)(2^{R_k}-1)}{\lambda_k (R_k  - \lambda_k) 2^{R_k-1}\ln 2}$. According to Section \emph{3.8.1.2}  of \cite{handbookPDE} and taking into account the boundary condition, we have the leading order terms that $\widetilde{J}_{kj}\left(\mathbf{q}\right)=\frac{D_{kj}'}{R_j-\lambda_j }\mathcal{O} \big(q_kq_j\log q_j \big)$. Substituting it to (\ref{tayloee}), we obtain the approximation error in Theorem \ref{ErrorEg2}.

\section*{Appendix G: Proof of Lemma \ref{equperccp}}
Using the approximate value function in (\ref{perflowapp}), the problem in  (\ref{conperr}) is equivalent to the following per-realization problem: \begin{small}$\underset{\mathbf{w} }\min    \sum_{k=1}^K \left(\|\mathbf{w}_k\|^2 -    \frac{\partial \widetilde {V}\left(\mathbf{Q}\right)}{\partial Q_k}  R_k     (1-\Pr\big[R_k  > C_k\left(\mathbf{H}, \mathbf{w} \right) \big| \hat{\mathbf{H}}, \mathbf{Q}\big]) \right)$\end{small}, where $\frac{\partial \widetilde {V}\left(\mathbf{Q}\right)}{\partial Q_k} $ can be calculated based on (\ref{perflowapp}). Introducing an auxiliary variable $\rho_k = \Pr\big[R_k  > C_k\left(\mathbf{H}, \mathbf{w} \right) \big| \hat{\mathbf{H}}, \mathbf{Q}\big] $, the above is equivalent to the chance constrained problem in Lemma \ref{equperccp}.

\section*{Appendix H: Proof of Lemma \ref{asycvx}}	
In order to verify the convexity of Problem \ref{relaxed}, we just need to verify the convexity of (\ref{linearp4}).  We  write the constraint in (\ref{linearp4}) in the following form: \begin{small}$f\left(\mathbf{W}, \delta_k, x_k, y_k \right) \triangleq  e_k\big(\mathbf{W}  \big) - \mathrm{Tr} \left( \mathbf{M}_k\big(\mathbf{W} \big) \right) +\sqrt{2 \delta_k} x_k + \delta_k y_k \leq 0 $. Since $e_k\big(\mathbf{W}  \big)$\end{small} and $\mathbf{M}_k\big(\mathbf{W}\big)$ are linear (i.e., convex) in $\mathbf{W}$, we have $f\left(\mathbf{W}, \delta_k, x_k, y_k \right)$ is also linear in $\mathbf{W}$. The Hessian matrix of $f\left(\mathbf{W}, \delta_k, x_k, y_k \right)$ is given by: $\mathbf{H}(f) = 
	\begin{small}\left( \begin{array}{cccc}
		\mathbf{H}_{\mathbf{W}} &  & \mathbf{0} \\
		& -\frac{\sqrt{2}}{4}\delta_k^{-\frac{3}{2}}x_k & \frac{\sqrt{2}}{2} \delta_k^{-\frac{1}{2}}& 1 \\
		\mathbf{0} & \frac{\sqrt{2}}{2} \delta_k^{-\frac{1}{2}} & 0 & 0	\\
		& 1 & 0 & 0	
	\end{array} \right) \end{small}$, where $\mathbf{H}_{\mathbf{W}}$ is the Hessian matrix of $f\left(\mathbf{W}, \delta_k, x_k, y_k \right)$ w.r.t. $\mathbf{W}$. Denote $\text{vec}(\mathbf{W})=\left( \text{vec}(W_1), \cdots, \text{vec}(W_K) \right)$ to be the vectorized $\mathbf{W}$, where $\text{vec}(W_k) = \left(W_{k1}^T, \cdots, W_{k N_t}^T \right)$ ($W_{ki}$ is the $i$-th column of $W_k$).  For a given vector $z \triangleq \left(\text{vec}(\mathbf{W}), \delta_k, x_k, y_k \right)$, we calculate the following equation: \begin{small}$z \mathbf{H}(f) z^T = \text{vec}(\mathbf{W}) \mathbf{H}_{\mathbf{W}}\text{vec}^T(\mathbf{W}) + \frac{3 \sqrt{2}}{4} \sqrt{\delta_k} x_k + 2 \delta_k y_k$\end{small}.  Since $f\left(\mathbf{W}, \delta_k, x_k, y_k \right) $ is convex in $\mathbf{W}$, we have $\text{vec}(\mathbf{W}) \mathbf{H}_{\mathbf{W}}\text{vec}^T(\mathbf{W}) \geq 0$. Furthermore, according to the contraints in (\ref{secconp4})--(\ref{psdp42}), we have $\delta_k \geq 0$, $x_k \geq 0$,  $\delta_k \geq 0$, and hence $\frac{3 \sqrt{2}}{4} \sqrt{\delta_k} x_k + 2 \delta_k y_k \geq 0$. Finally, we have \begin{small}$z \mathbf{H}(f) z^T= \text{vec}(\mathbf{W}) \mathbf{H}_{\mathbf{W}}\text{vec}^T(\mathbf{W}) + \frac{3 \sqrt{2}}{4} \sqrt{\delta_k} x_k + 2 \delta_k y_k \geq 0$\end{small}. Therefore, we conclude that Problem \ref{relaxed} is  convex.


\begin{thebibliography}{1}


\bibitem{MIMOBC2}
G. Caire and S. Shamai (Shitz), ``On the achievable throughput of a multiantenna Gaussian downlink system," \emph{IEEE Trans. Inf. Theory}, vol. 49, no. 7, pp. 1691--1706, Jul. 2003.

\bibitem{mmse}
G. Caire, ``MIMO downlink joint processing and scheduling: A survey of classical and recent results,'' in \emph{Proc. Workshop Inform. Theory Applicat.,} UCSD, Feb. 2006.

\bibitem{precoder1}
J. Zhang, Y. Wu, S. Zhou, and J. Wang, ``Joint linear transmitter and receiver design for the downlink of multiuser MU-MIMO systems," \emph{IEEE Commun. Lett.}, vol. 9, pp. 991--993, Nov. 2005.

\bibitem{sumgood2}
W. -C. Li, T. -H. Chang, C. Lin, and C. -Y. Chi, ``A convex approximation approach to weighted sum rate maximization of multiuser MISO interference channel under outage constraints," \emph{in Proc. IEEE ICASSP}, pp. 3368Ð-3371, Progue, Czech, May 2011.

\bibitem{mmse2}
J. Wang and D. P. Palomar, ``Robust MMSE precoding in MIMO channels with pre-fixed receivers,"   \emph{IEEE Trans. Signal Process.},  vol. 58, no. 11, Nov. 2010.

\bibitem{Robustprob1}
A. Mutapcic, S.-J. Kim, and S. Boyd, ``A tractable method for robust downlink beamforming in wireless communications," in \emph{Proc.
Asilomar 2007}, Pacific Grove, CA, Nov. 2007.

\bibitem{luo}
Z.-Q. Luo, W.-K. Ma, A. M.-C. So, Y. Ye, and S. Zhang, ``Semidefinite relaxation of quadratic optimization problems," \emph{IEEE Signal Process. Mag., Special Issue on Convex Optimization for Signal Processing}, pp. 20--34, May. 2010.

\bibitem{sdr2}
C. Ling,  X. Zhang, and L. Qi,  ``Semidefinite relaxation approximation for multivariate bi-quadratic optimization with quadratic constraints,"  \emph{Wiley Online Library: Numerical Linear Algebra with Applications}, vol. 19, no. 1,  pp. 113--131, Dec. 2012.

\bibitem{major2}
E. A. Jorsweick and H. Boche, ``Optimal transmission strategies and impact of correlation in multi-antenna systems with different types of channel state information,"  \emph{IEEE Trans. Signal Process.}, vol. 52, no.12, pp. 3440--3453, Dec. 2004.

\bibitem{surveydelay}
Y. Cui, V. K. N. Lau, R. Wang, H. Huang, and S. Zhang, ``A survey on delay-aware resource control for wireless systems - large deviation theory, stochastic Lyapunov drift and distributed stochastic Learning," \emph{IEEE Trans. Inf. Theory}, vol. 58, no. 3, pp. 1677Ð1701, Mar. 2012.

\bibitem{Cao}
X. Cao, \emph{Stochastic Learning and Optimization: A Sensitivity-Based Approach}. \ Springer, 2008.

\bibitem{DP_Bertsekas}
D. P. Bertsekas, \emph{Dynamic Programming and Optimal Control}, 3rd ed. \ Massachusetts:  Athena Scientific, 2007.

\bibitem{downmimo1}
\tcb{Y. Cui, Q. Huang, and V. K. N. Lau, ``Queue-aware dynamic clustering and power allocation for network MIMO systems via distributed stochastic learning,"  \emph{IEEE Trans. Signal Process}, vol. 59, no. 3, pp. 1229--1238, Mar. 2011.}

\bibitem{downmimo2}
\tcb{Y. Cui, V. K. N. Lau, and Y. Wu, ``Delay-aware BS discontinuous transmission control and user scheduling for energy harvesting downlink coordinated MIMO systems,"  \emph{IEEE Trans. Signal Process}, vol. 60, no. 7, pp. 3786--3795, Jul. 2012.}

\bibitem{chancesss}
A. Nemirovski and A. Shapiro, ``Convex approximations of chance constrained programs," \emph{SIAM J. Optim.}, 17 (2006), pp. 969--996.

\bibitem{Chancecon1}
W. W.-L. Li, Y. J. Zhang, A. M.-C. So, and M. Z. Win,  ``Slow adaptive OFDMA systems through chance constrained programming,"  \emph{IEEE Trans. Signal Process.}, vol. 58, no. 7, pp. 3858--3869, Jul. 2010.

\bibitem{pomdp1}
N. Meuleau, K. E. Kim, L. P. Kaelbling, and A. R. Cassandra, ``Solving pomdps by searching the space of finite policies," in \emph{Proc.
of the Fifteenth Conf. on Uncertainty in AI}, pp. 417--426. 1999.

\bibitem{bernsteininequ}
I. Bechar, ``A Bernstein-type inequality for stochastic processes of quadratic forms of Gaussian variables," available online: http://arxiv.org/abs/0909.3595.

\bibitem{fixedper}
K.-Y. Wang, T.-H. Chang, C.-Y. C. W.-K. Ma, and A. So, ``Probabilistic sinr constrained robust transmit beamforming: A Bernstein-type inequality based conservative approach," \emph{in Proc. IEEE ICASSP}, pp. 3080--3083, 2011.

\bibitem{noisevar}
\tcb{P. Kyritsi, R. Valenzuela, and D. Cox, ``Channel and capacity estimation errors,"  \emph{IEEE Comm. Letters}, vol. 6, no. 12, pp. 517--519, Dec. 2002.}

\bibitem{capacityregion}
M. J. Neely, ``Dynamic power allocation and routing for satellite and wireless networks with time varying channels," \emph{PhD thesis}, Massachusetts Institute of Technology, LIDS, 2003.

\bibitem{equiformu}
\tcb{R. A. Berry and R. G. Gallager, ``Communication over fading channels with delay constraints,"  \emph{IEEE Trans. Inf. Theory,} vol. 48, no. 5, pp. 1135--1149, May 2002.}

\bibitem{boyd}
S. Boyd, \emph{Convex Optimization}. \ Cambridge University Press, 2004.

\bibitem{RB_scheme}
M. Sharif and B.Hassibi, ÒOn the capacity of MIMO broadcast channels with partial side information," \emph{IEEE Trans. Inf. Theory}, vol. 51, no. 2, pp. 506--522, 2005.

\bibitem{neelyitp}
M. J. Neely, ``Energy optimal control for time varying wireless networks," \emph{IEEE Trans. Inf. Theory}, vol. 52, no. 7, pp. 1--18, Jul. 2006.

\bibitem{dspgal}
R. Gallager, \emph{Discrete Stochastic Processes}. \ Boston, MA: Kluwer Academic, 1996.

\bibitem{RMT}
A. M. Tulino and S. Verdœ, \emph{Random Matrix Theory and Wireless Communications}, \ Foundations and Trends in Communication and Information Theory, 2004. 

\bibitem{HandODE}
A. D. Polyanin, V. F. Zaitsev, and A. Moussiaux, \emph{Handbook of Exact Solutions for Ordinary Differential Equations}, 2nd ed.  \ Chapman  \& Hall/CRC Press, Boca Raton, 2003.

\bibitem{paraanay}
J. F. Bonnans and A. Shapiro, ``Optimization problems with perturbations: A guided tour,"  \emph{SIAM Reviews}, vol. 40, no. 2, pp. 228--264, June 1998.

\bibitem{handbookPDE}
A. D. Polyanin, V. F. Zaitsev, and A. Moussiaux, \emph{Handbook of First Order Partial Differential Equations}, 2nd ed.  \  Taylor \& Francis, 2002.


\end{thebibliography}
\end{document}